\newcommand{\alcformula}{\ensuremath{\varphi}\xspace}
\newcommand{\model}{\ensuremath{M}\xspace}
\renewcommand{\individuals}[1]{\ensuremath{{\sf ind}(#1)}\xspace}
\newcommand{\formulas}[1]{\ensuremath{{\sf f}(#1)}\xspace}
\newcommand{\concepts}[1]{\ensuremath{{\sf c}(#1)}\xspace}
\newcommand{\setqm}[1]{\ensuremath{{\sf S}(#1)}\xspace}
\renewcommand{\c}{\ensuremath{{\mathbf {c}}}\xspace}
\renewcommand{\f}{\ensuremath{{\mathbf {f}}}\xspace}
\newcommand\blfootnote[1]{%
  \begingroup
  \renewcommand\thefootnote{}\footnote{#1}%
  \addtocounter{footnote}{-1}%
  \endgroup
}
\title{Revising Ontologies via Models: \\ The \ALC-formula Case} 
\author{Jandson S.  Ribeiro\inst{1} \and
Ricardo Guimarães\inst{2} \and
Ana Ozaki\inst{2}}
\authorrunning{Ribeiro et al.}
\institute{Universität Koblenz-Landau, Germany \\
\email{jandson@uni-koblenz.de} \and
University of Bergen, Norway\\
\email{\{ricardo.guimaraes,ana.ozaki\}@uib.no}}
\begin{document}

\maketitle

\blfootnote{Copyright \textcopyright{} 2021 for this paper by its authors. Use permitted under
     Creative Commons License Attribution 4.0 International (CC BY 4.0).}

\begin{abstract}
  Most approaches for repairing description logic (DL) ontologies aim at
    changing the axioms as little as possible while solving inconsistencies,
    incoherences and other types of undesired behaviours. As in Belief
    Change, these issues are often specified using logical
    formulae. 
    Instead, in the new setting for updating DL ontologies that we propose here,
    the input for the change is given by a \emph{model} which we want to add
    or remove.
    The main goal is to minimise the loss of information, without concerning
    with the syntactic structure.
    This new setting is motivated by scenarios where an
    ontology is built automatically %or learned from data 
    and needs to be refined or updated.  In such situations, the syntactical form is often
    irrelevant and the incoming information is not necessarily given as
    a formula. We define general operations and conditions on which they are
    applicable, and instantiate our approach to the case of  \(\ALC\)-formulae.
\end{abstract}
%%%%%%%%%%%%%%%%%
% Previous abstract
%%%%%%%%%%%%%%%%
%Most approaches for repairing description logic (DL) ontologies aim at
    %changing the axioms as little as possible while solving inconsistencies,
    %incoherences and other types of undesired behaviours. In Belief
    %Change and Ontology Repair, these issues are often specified using a logical formula. Here, we present a new 
    %setting for updating DL ontologies in which the input for the change is given by a \emph{model}
    %which we want to add or remove, 
    %instead of a formula. The main goal is to minimize the loss of information,
    %but without necessarily preserving the syntactic structure.
    %This new setting is motivated by  scenarios where an
    %ontology is built automatically or learned from data and needs to be refined or updated.  In such situations, the syntactical form is often
    %irrelevant and the incoming information is not necessarily given as
%    a formula. We define operations on \nb{RG: we might need to change this part} AGM-compatible logics and instantiate our approach to the case of  \(\ALC\)-formulae.
\section{Introduction}

Formal specifications often have to be updated either due to modelling errors or
because they have become obsolete. When these specifications are
description logic (DL) ontologies, it is possible to use one of the many
approaches to fix missing or unwanted behaviours. Usually these methods involve
the removal or replacement of formulae responsible by the undesired
aspect~\cite{Suntisrivaraporn2009,Kalyanpur2006,Troquard2018,Baader2018,DBLP:conf/sum/OzakiP18}.

The problem of changing logical representations of knowledge upon the arrival of
new information is the subject matter of \emph{Belief Change}~\citep{Hansson1999}. The theory developed in
this field provides   constructions suitable for various formalisms and
applications~\cite{Hansson1999,Ribeiro2013,Santos2018,RibeiroNW19AAAI}.
In most approaches for Belief Change and for repairing ontologies, it is
assumed that a set of formulae represents the 
entailments to be added or removed. However, in some situations, it might be easier to obtain this
information as a \emph{model} instead. This idea relates  with Model
Checking~\citep{Clarke1986} whose main problem is to determine whether a model
satisfies a set of constraints; %It is also related to 
and with the paradigm of Learning
from Interpretations~\citep{DeRaedt1997}, where a formula needs to be created or 
changed so as to have certain interpretations as part of its models and 
remove others from its set of models.
\Cref{ex:modelInput} illustrates the intuition behind using
models as input.
%for the change.
%given a set of models, the agent
%must build a theory such that every ``positive'' model is one of its models, but
%none of the ``negative'' are.

\begin{example}
    \label{ex:modelInput}
    Suppose that a system, which serves a university, uses
    an internal logical representation of the domain with a open world
    behaviour and unique names. Let \(\baseb\) be its current represention:
    \begin{align*}
        \baseb =& \left\{ Professors: \{Mary\}, Courses: \{DL, AI\}, \right.\\
                & \qquad \left. \{ teaches: \{(Mary, AI), (Mary, DL)\}\right\}.
    \end{align*}

    Assume that a %we use a instead of an if u is pronounced as yu
    user finds mistakes in the course schedule and this is caused by the wrong
    information that Mary teaches the DL course. The user may lack knowledge to
    define the issue  formally. An alternative would be to provide the
    user with an interface where one can specify, for instance, that the
    following model should be accepted \(M = \{Professors = \{Mary\}, Courses
    = \{DL, AI\}, teaches = \{(Mary, AI)\}\}, \) (in this model Mary does not teach
    the DL course). Given this input, the system should repair itself
    (semi-)automatically.  
\end{example}

%In this work 
We propose a new setting for Belief Change, in particular,
% defining operations and
%properties based only on general properties of the underlying logic. 
  contraction and expansion functions which take models as input. We analyse
the case of
\(\ALC\)-formula using quasimodels as a mean to define belief change operations.
This logic satisfies properties which facilitate the design of these operations
and it is close to \(\ALC\), which is a well-studied DL.
Additionally, we identify
the postulates which determine these functions and prove that they characterise
the mathematical constructions via representation theorems.
The remaining of this work is organised as follows: in \cref{sec:beliefChange}
we introduce the concepts from Belief Change which our approach builds upon and
detail the paradigm we propose here. \Cref{sec:alc-case} presents
\(\ALC\)-formula, the belief operations that take models as input and their
respective representation theorems. In \cref{sec:relatedWorks}, we highlight
studies which share similarities with our proposal and we conclude in \cref{sec:conc}.

\section{Belief Change}
\label{sec:beliefChange}

\subsection{The Classical Setting}
%\nb{I think we could shrink this section a bit.  Simply explain BC from the high level,  mention it has rationality postulate for minimal change.  Mension kernel contraction.  Then move to representing incoming information in the form of a model.. a bit of motivation??}
Belief Change \citep{Alchourron1985,Hansson1999} studies the problem of how an
agent should modify its knowledge in %the
 light of  new information.  
In the original paradigm of Belief Change,  the AGM theory,  
%was proposed to representinformation using logic. More specifically, it
%represents 
an agent's body of knowledge is represented as a set of formulae  closed under
logical consequence,  called a \emph{belief set}, and the new information is represented as
a single \emph{formula}. 
%which satisfy the AGM assumptions: the Tarskian axioms
%together with compactness and the deduction theorem. In other words, it was
%tailored for classical propositional logic and first-order logic.
%
\textit{Belief sets}, however,  are not the only way for representing an agent's body of knowledge,  
%Posteriorly, other variants appeared changing the underlyinglogic, the representation of an agent's epistemic state, and the representation ofthe new information. One of the most relevant adaptations to this work consists
and another way of representing an agent's knowledge is via \emph{belief bases}:  arbitrary sets of formulae,  not necessarily closed under logical consequence~\citep{Hansson1994a}.  
In the AGM paradigm,  %when confronted with a new piece of information $\varphi$,  an agent might present one of the following actions: 
%an agent might present one of the following actions when  its current belief base is 
an agent may modify its current belief base $\Bmc$ in response to a new piece of information $\varphi$ through three  kinds of operations:
\begin{description}
	\item[Expansion:] $\Ex(\Bmc, \varphi)$, simply add $\varphi$ to $\baseb$;
    \item[Contraction:] $\Con(\Bmc, \varphi)$, reduce \(\baseb\) so that it
        does not imply \(\varphi\);
    \item[Revision:] $\Rev(\Bmc, \varphi)$, incorporate $\varphi$ and keep
        consistency of the resulting belief base, 
     as long as $\varphi$ %\textcolor{black}{and \Bmc are} %itself is 
     is consistent.  
\end{description}

When modifying its body of knowledge an agent should rationally modify its
beliefs conserving most of its original beliefs.  This principle of minimal
change is captured in Belief Change via sets of rationality postulates.  Each of
the three operations (expansion, contraction and revision) presents its own set of rationality postulates which characterize precisely different  classes of belief change constructions.  
The AGM paradigm was initially proposed for classical logics that satisfy
specific requirements, dubbed AGM assumptions, among them \textit{Tarskianicity,
compactness} and \emph{deduction}. See \citep{Flouris2006, Ribeiro2013} for a complete
list of the AGM assumptions and a discussion on the topic.    
Recently, efforts have been applied to extend Belief Change to logics that do
not satisfy such assumptions. For instance, logics that are not closed
under classical negation of formulae (such as is the case for most
DLs)~\citep{Ribeiro2013, Ribeirow2014}, and
temporal logics and logics without compactness~\citep{RibeiroNW18,
RibeiroNW19,RibeiroNW19AAAI}.
%number of works Efforts to extend the AGM paradigm to non-classical logics that do not satisfy such assumptions 
%different
%These rationality were extensively studied in the Belief Change field and survived the test of time.  %postulates survi

%Even though each approach lends itself to specific settings, they often present
%(at least) three key components: a normative characterisation of the operations via
%postulates, mathematical constructions to implement these operations, and
%a representation theorem which proves that the characterisation and the
%constructions actually refer to the same class of functions. The postulates
%describe the properties which the operations should satisfy, and while they
%change according to each approach their essence has survived the test of time as
%a way to determine rational change.

%One of the most studied construction in Belief Change is the \textit{kernel
%contraction} operation which was initially devised by \citet{Hansson2002}
 %for logics that satisfy monotonicity and compactness. \nb{It was proposed in
%a paper by Hansson in 1994 called ``Kernel Contraction'' for AGM-compatible
%logics}
%We give one example of such characterisation for belief base contraction inlogics which satisfy monotonicity and compactness due to\citeauthor{Hansson2002}.  
 In what follows,  we define \emph{kernel contraction}~\cite{Hansson1999}, one of the most studied
constructions in Belief Change and
which is closely related to the most common ways to repair ontologies.
Kernel operations rely on calculating the minimal implying sets (MinImps), also
known as justifications~\citep{Horridge2011} or kernels~\citep{Hansson1999}.
A MinImp is a minimal subset that does entail a formula $\varphi$.  
The set of all MinImps of a belief base $\baseb$ w.r.t.\ a formula $\varphi$ is denoted by
$\MinImps(\baseb, \varphi)$. A kernel contraction removes from each MinImp at least
one formula using an \emph{incision function}. 

\begin{definition}
    Given a set of formulae $\baseb$ of language \(\llang\), a function $f$ is an
    incision function for $\baseb$ iff for all $\varphi \in \llang$: (i)
    $f(\MinImps(\baseb, \varphi)) \subseteq \bigcup \MinImps(\baseb, \varphi)$ and (ii)
    $f(\MinImps(\baseb, \varphi)) \cap X \neq \emptyset$, for all $X \in
    \MinImps(\baseb, \varphi)$. 
\end{definition}

Kernel contraction operators are built upon incision functions.
\textcolor{black}{The application of an incision function to a set of MinImps is a hitting
set~\cite{Kalyanpur2006,Baader2018}.}
%Each incision function determines a kernel contraction,  as shown in \cref{def:kernelcon}.

\begin{definition}
\label{def:kernelcon}
    Let \(\llang\) be a language and
    \(f\) an incision function. The \emph{kernel contraction} on \(\baseb \subseteq \llang\) determined by
    \(f\) is the operation \(\Con_f : 2^{\llang} \times \llang
    \mapsto 2^\llang \) defined as: \(\Con_f(\baseb, \varphi) = \baseb \setminus
    f(\MinImps(\baseb, \varphi))\).
\end{definition}

%This construction is equivalent to contraction functions which satisfy
Kernel contraction operations are characterised precisely by a set of rationality postulates,   as shown in the following representation theorem: 
%a specific set of postulates which are listed in the following representation theorem.

\begin{theorem}[\cite{Hansson2002}]%\nb{mybe not define the postulates here but mention that the whole definitions appeart in Hansson (or compile then in the Appedix ?}
    Let \(\Cn\) be a consequence operator satisfying monotonicity and
    compactness defined for a language \(\llang\). Then \(\Con : 2^{\llang} \times \llang
    \mapsto 2^\llang\) is an operation of kernel contraction on \(\baseb \subseteq
    \llang\) iff for all sentences \(\varphi \in \llang\):
    \begin{description}
        \item[(success)] if \(\varphi \not\in \Cn(\emptyset)\), then \(\varphi \not\in
            \Cn(\Con(\baseb, \varphi))\),
        \item[(inclusion)] \(\Con(\baseb, \varphi) \subseteq \baseb\),
        \item[(core-retainment)] if \(\psi \in \baseb \setminus \Con(\baseb, \varphi)\),
            then there is some \(\baseb' \subseteq \baseb\) such that \(\varphi \not\in
            \Cn(\baseb')\) and \(\varphi \in \Cn(\baseb' \cup \psi)\),
        \item[(uniformity)] if for all subsets \(\baseb'\) of \(\baseb\), \(\varphi \in
            \Cn(\baseb')\) iff \(\psi \in \Cn(\baseb')\), then \(\Con(\baseb, \varphi)
            = \Con(\baseb, \psi)\).
    \end{description}
\end{theorem}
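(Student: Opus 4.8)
The plan is to prove the biconditional by establishing the two standard directions of a representation theorem: \emph{soundness} (every operator of the form \(\Con_f\) satisfies the four postulates) and \emph{completeness} (every operator \(\Con\) satisfying the four postulates equals \(\Con_f\) for some incision function \(f\)). Throughout I would lean on the two stated assumptions on \(\Cn\): monotonicity lets me conclude that any superset of an entailing set still entails, and compactness guarantees that every set entailing \(\varphi\) contains a \emph{finite} entailing set, hence (by shrinking) a subset-minimal one belonging to \(\MinImps(\baseb, \varphi)\). I would first isolate this as a bridge lemma: for every \(\baseb' \subseteq \baseb\), \(\varphi \in \Cn(\baseb')\) iff \(\baseb'\) contains some \(X \in \MinImps(\baseb, \varphi)\). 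The backward direction is pure monotonicity; the forward direction is where compactness is essential.

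For soundness, fix an incision function \(f\) and set \(\Con_f(\baseb, \varphi) = \baseb \setminus f(\MinImps(\baseb, \varphi))\). \textbf{Inclusion} is immediate. For \textbf{success}, assume \(\varphi \notin \Cn(\emptyset)\), so no kernel is empty; by condition (ii) the incision meets every \(X \in \MinImps(\baseb, \varphi)\), so no kernel survives inside \(\Con_f(\baseb, \varphi)\), and the bridge lemma yields \(\varphi \notin \Cn(\Con_f(\baseb, \varphi))\). For \textbf{core-retainment}, if \(\psi\) is removed then \(\psi \in \bigcup \MinImps(\baseb, \varphi)\) by condition (i), so \(\psi\) lies in some kernel \(X\); taking \(\baseb' = X \setminus \{\psi\}\), minimality of \(X\) gives \(\varphi \notin \Cn(\baseb')\) while \(\varphi \in \Cn(\baseb' \cup \{\psi\})\). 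For \textbf{uniformity}, the antecedent together with the bridge lemma forces \(\MinImps(\baseb, \varphi) = \MinImps(\baseb, \psi)\), so \(f\) receives the same argument and the two contractions coincide.

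For completeness, given \(\Con\) satisfying the postulates, I would define \(f(\MinImps(\baseb, \varphi)) := \baseb \setminus \Con(\baseb, \varphi)\) and then verify that this is a genuine incision function reproducing \(\Con\). The representation equation holds by \textbf{inclusion}, since then \(\baseb \setminus (\baseb \setminus \Con(\baseb, \varphi)) = \Con(\baseb, \varphi)\). Condition (i) follows from \textbf{core-retainment}: any removed \(\psi\) comes with a witness \(\baseb'\) such that \(\baseb' \cup \{\psi\}\) entails \(\varphi\) but \(\baseb'\) does not, so every kernel inside \(\baseb' \cup \{\psi\}\) must contain \(\psi\), placing \(\psi\) in \(\bigcup \MinImps(\baseb, \varphi)\). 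Condition (ii) follows from \textbf{success}: if some kernel \(X \subseteq \Con(\baseb, \varphi)\) survived, monotonicity would give \(\varphi \in \Cn(\Con(\baseb, \varphi))\), contradicting success in the non-tautology case.

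The delicate point — and the real content of the theorem — is \textbf{well-definedness} of \(f\). Since \(f\) is a function of the \emph{set of kernels} rather than of \(\varphi\) itself, I must ensure that whenever \(\MinImps(\baseb, \varphi) = \MinImps(\baseb, \psi)\) we also have \(\baseb \setminus \Con(\baseb, \varphi) = \baseb \setminus \Con(\baseb, \psi)\); this is exactly what \textbf{uniformity} delivers, once the bridge lemma certifies that equality of kernel sets is equivalent to the uniformity antecedent. The one boundary case to treat separately is \(\varphi \in \Cn(\emptyset)\): then \(\emptyset\) is the unique kernel, condition (ii) is understood vacuously, and \textbf{core-retainment} independently forces \(\Con(\baseb, \varphi) = \baseb\) (there is no \(\baseb'\) with \(\varphi \notin \Cn(\baseb')\)), matching the only admissible incision \(f(\{\emptyset\}) = \emptyset\). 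I expect this well-definedness argument, together with the careful invocation of compactness in the bridge lemma, to be the main obstacle; the remaining postulate checks are routine set manipulations.
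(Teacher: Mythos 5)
Your proof is correct and is essentially the canonical argument for this theorem: the paper does not prove the statement itself but imports it from Hansson (2002), and the proof there proceeds exactly along the lines you outline --- a compactness-based bridge lemma (every \(\baseb' \subseteq \baseb\) entails \(\varphi\) iff it contains a kernel), routine verification of the four postulates for \(\Con_f\), and, for the converse, the canonical incision function \(f(\MinImps(\baseb, \varphi)) \coloneqq \baseb \setminus \Con(\baseb, \varphi)\), whose well-definedness is precisely what \emph{uniformity} secures. Your separate treatment of the boundary case \(\varphi \in \Cn(\emptyset)\) is also the right reading: the paper's Definition~1 omits Hansson's qualifier that condition~(ii) applies only to \emph{nonempty} kernels (as stated, it would be unsatisfiable when \(\emptyset \in \MinImps(\baseb, \varphi)\)), so interpreting (ii) vacuously for \(X = \emptyset\) and noting that core-retainment then forces \(\Con(\baseb, \varphi) = \baseb\), as you do, is exactly how the cited proof handles it.
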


\subsection{Changing Finite Bases by Models}\label{sec:modelChange}
%\nb{title to: Representing Incoming Information as a Model??}
The Belief Change setting discussed in this section represents an epistemic
state by means of a finite base. While this essentially differ from the traditional
approach \citep{Alchourron1985,Hansson1994a}, it aligns with the KM paradigm
established by \citet{Katsuno1991}. In \cref{sec:relatedWorks} we discuss other
studies in Belief Change which also take finite representability into account.

In this work,  unlike the standard representation methods in Belief Change,   we consider that an incoming piece of information is represented as a finite model.  Belief Change operations defined in this format will be called model change operations.  
%More importantly, we use models instead of formulae as input. 
Recall that a model \(M\) is simply a structure used to give semantics to an underlying logic language.  
%n object belonging to an arbitrary but fixed 
The set of all possible models is given by \(\mathfrak{M}\).   
Moreover, we assume a semantic system that, for each set of formulae
\(\baseb\) of the language \(\llang\) gives a set of models \(\modelsof{\baseb}
\coloneqq \{M \in \mathfrak{M} \mid \forall \varphi \in \baseb: M \models
\varphi\}\). 
\textcolor{black}{Let $\finitepwset(\llang)$ denote the set of all finite bases in $\llang$. }
 We also say that a set of models \(\mSet\) is \emph{finitely
representable in \(\llang\)} if there is a finite base \(\baseb \in
\finitepwset(\llang)\) such that \(\modelsof{\baseb} = \mSet\). 
Additionally, if for all  \(\varphi \in \llang\) it holds that \(M \models
\varphi\) iff \(M' \models \varphi\) then we write \(M \equiv^\llang M'\). We
also define \(\eqclass{M}{\llang} \coloneqq \{M' \in \mathfrak{M} \mid M' \equiv^\llang
M \}\). 

When compared to traditional methods in Belief Change and Ontology
Repair~\citep{Hansson1999,Kalyanpur2006,Baader2018}, where the incoming
information comes as a single formula, our approach receives instead a single
model as input. Although, the initial body of knowledge is represented as
a finite base, the operations we define do not aim to preserve its syntactic structure.

%This deviation from the traditional usage of formulae (or set of formulae) as 
%input~\citep{Hansson1999,Kalyanpur2006,Baader2018} is to
%make the translation of ``input information'' into logical formulae 
%explicit, as different translations might affect the result. Furthermore, it
%might also provide new ways to look at belief change operations from
%a model-theoretical standpoint.

%\subsection{Contraction by Model}

The first model change operation we introduce is model contraction,  which eliminates one of the models of the current base (which in Section~\ref{sec:alc-case} is instantiated as an ontology). % in order to eliminate one of its models. 
Model contraction is akin to a belief expansion,  where a formula is added to the
belief set or base, reducing the set of accepted models.  The counterpart
operation, model expansion, changes the base to include a new model. This
relates to belief contraction, in which a formula is removed,
and thus more models are seen as plausible. 
 
%However, while expanding bases by formulae is very straightforward (we just need
%add the input formulae to the starting base), models can only be added if 
%they are first converted into formulae.
%in our case we need to convert
%models into formulae if we aim to do that.

%Next, 
We rewrite the rationality postulates that characterize kernel contraction~\citep{Hansson2002}, 
considering an incoming piece of information  represented as a model instead of a single formula.
%We write $\finitepwset(X)$ for the set of all finite subsets of
%
%. Also, for a 
%a set \(X\). 
%of belief change to our setting translated % Belief Base Change, as formulated by
%\citet{Hansson2002}, to our setting. \Cref{prepostcon} %translates 
% does not aim at characterising the operation yet. It just
%illustrates \textcolor{black}{has an adaptation of the requirements of the classical Belief Contraction paradigm (with  formulae as input) %
%can be adapted  to the setting with models as input.}

%\nb{RG: Consider comparing with KM (FR belief sets) instead since things get
%a bit weird if we talk about subsets of the ontology in this setting}

\begin{definition}[Model Contraction]
\label{prepostcon}
    Let \(\llang\) be a language.
    A function \(\Con : \finitepwset(\llang) \times \mUni
    \mapsto \finitepwset(\llang)\) is a \emph{finitely representable model contraction function} iff for every \(\Bmc \in \finitepwset(\llang)\) and
    \(M \in \mUni\) it satisfies the following postulates:

    \begin{description} 
        \item [(success)] \(M \not \in \modelsof{\Con(\Bmc, M)}
        = \emptyset\), 
        \item [(inclusion)] \(\modelsof{\Con(\Bmc, M)} \subseteq
        \modelsof{\Bmc}\),
        %\item [uniformity] if \(M, M' \in \mclass\) are such that \(M \in
        %\modelsof{\Bmc'}\) iff  \(M' \in \modelsof{\Bmc'}\) for every \(\Bmc'
        %\subseteq \Bmc\), then \(\modelsof{\Con(\Bmc, M)}
        %= \modelsof{\Con(\Bmc, M')}\),
        \item [(retainment)] if \(M' \in \modelsof{\Bmc} \setminus
            \modelsof{\Con(\Bmc, M)}\) then \(M' \equiv^\llang M\),
        \item [(extensionality)] \(\Con(\baseb, M)
            = \Con(\baseb, M')\),   if \(M \equiv^\llang M'\).  
%        \item [minimality:] if \(M \in \modelsof{\Bmc} \setminus
            %\modelsof{\Con(\Bmc, M')}\), then there no subset \(\Bmc' \subseteq
            %\Bmc\) such that  \(\modelsof{\Con(\Bmc, M')} \subseteq
            %\modelsof{\Bmc'}\) with \(M' \in \modelsof{\Bmc'}\) and \(M \not\in
            %\modelsof{\Bmc'}\).
    \end{description}
\end{definition}

%\subsection{Expansion by Model}

We might also need to add a model to the set of models of the
current \textcolor{black}{base}. %ontology
This addition relates to classical contractions in Belief
Change, \textcolor{black}{which  \emph{reduces} the belief base}.

\begin{definition}[Model Expansion]
\label{prepostexp}
    Let \(\llang\) be a language. 
    A function \(\Ex : \finitepwset(\llang) \times \mUni
    \mapsto \finitepwset(\llang)\) is a \emph{finitely representable model
        expansion} iff for every \(\baseb \in \finitepwset(\llang)\) and
    \(M \in \mUni\) it satisfies the  postulates:
    \begin{description}
        \item[(success)] \(M \in \modelsof{\Ex(\baseb, M)},\)
        \item[(persistence)] \(\modelsof{\baseb} \subseteq \modelsof{\Ex(\baseb,
            M)},\)
        %\item[inclusion:] \(\modelsof{\Ex(\baseb, \posmodels)} \subseteq
            %\modelsof{\baseb} \cup \posmodels \) 
            %\item[vacuity:] If \(\posmodels \subseteq \modelsof{\baseb}\), then
            %\(\modelsof{\baseb} = \modelsof{\Ex(\baseb, \posmodels)}\)
        \item [(vacuity)] \(\modelsof{\Ex(\baseb,
            M)} = \modelsof{\baseb},\) if \(M \in \modelsof{\baseb}\),  
        \item [(extensionality)] \(\Ex(\baseb, M)
            = \Ex(\baseb, M')\),  if \(M \equiv^\llang M'\).  
            %\item [minimality:] \(\modelsof{\Ex(\baseb, \posmodels)} \subseteq \modelsof{O}\)
        %\item [pre-expansion:] 
        %\item [non-contravention:]
    \end{description}
\end{definition}

%Now we consider a logic, determined by a language \(\llang\) and consequence
%operator \(\Cn\), which satisfy all the AGM-assumptions~\cite{Flouris2005} and a semantic system
%which has the following properties:

%\begin{itemize}
    %\item If \(\Cn(\baseb) = \Cn(\baseb')\) then
        %\(\modelsof{\baseb} = \modelsof{\baseb'}\).
    %\item If \(\Cn(\baseb') \subseteq
        %\Cn(\baseb)\) then \(\modelsof{\baseb} \subseteq \modelsof{\baseb'}\).
    %\item For every set \(\mclass \subseteq
        %\mathfrak{M}\) there is a finite set of formulae \(\baseb\) such that
        %\(\modelsof{\baseb} = \mclass\).
%\end{itemize}

\begin{definition}
\label{def:ideal}
    Let \(\llang\) be a language and \(\Cn\) a Tarskian consequence operator
    defined over \(\llang\). Also let \(\mUni\) be a fixed set of models. 
    We say that a triple \(\logsys = (\llang, \Cn, \mUni)\) is an \emph{\txtideal{} logical system} if
    the following holds.
    \begin{itemize}
        \item For every \(\baseb \subseteq \llang\) and \(\varphi \in \llang\), \(\baseb
            \models \varphi\) (i.e. \(\varphi \in \Cn(\baseb)\)) iff \(\modelsof{\baseb} \subseteq
            \modelsof{\varphi}\).
        %\item If \(\Cn(\baseb) = \Cn(\baseb')\) then
            %\(\modelsof{\baseb} = \modelsof{\baseb'}\).
        %\item If \(\Cn(\baseb') \subseteq
            %\Cn(\baseb)\) then \(\modelsof{\baseb} \subseteq \modelsof{\baseb'}\).
        \item For each   \(\mSet \subseteq
            \mUni\) there is a finite set of formulae \(\baseb\) such that
            \(\modelsof{\baseb} = \mSet\).
    \end{itemize}
\end{definition}

\textcolor{black}{%
If \(\logsys = (\llang, \Cn, \mUni)\) is an ideal logical system, we can define
a function \(\FR_\Lambda
: 2^\mUni \mapsto \finitepwset(\llang)\) and such
that \(\modelsof{\FR(\mSet)} = \mSet\). Then, we can define model
contraction as \(\Con(\baseb, M) = \FR(\modelsof{\baseb} \setminus
\eqclass{M}{\llang})\) and expansion as \(\Ex(\baseb, M)
= \FR(\modelsof{\baseb} \cup \eqclass{M}{\llang})\). 
The first condition in \Cref{def:ideal} implies that there
is a connection between the models satisfied and the logical consequences of the
base obtained and the second ensures that the result always exists.}
An example that fits
these requirements is to consider classical propositional logic with a finite
signature \(\Sigma\), together with its usual consequence operator and models.
\textcolor{black}{%
In this situation, we can define \(\FR_{prop}\) as follows:
\[
    \FR_{prop}(\mSet) = \bigvee_{M \in \mSet} \left( \bigwedge_{a \in \Sigma \mid
    M \models a} a \land \bigwedge_{a \in \Sigma \mid M \models \neg{a}} \neg{a} \right).
\]
}

Next, we show that the construction proposed with \(\FR\) has the properties
stated in \cref{prepostcon,prepostexp}.

\begin{theorem}
    Let \((\llang, \Cn, \mathfrak{M})\) be an \txtideal{} logical system as in \cref{def:ideal}. Then \(\iCon(\baseb, M) \coloneqq \FR\left(\modelsof{\baseb} \setminus
    \eqclass{M}{\llang}\right)\) satisfies the postulates in \cref{prepostcon}.
\end{theorem}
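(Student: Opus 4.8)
The plan is to reduce all four postulates to the single defining property of the finite-representation map, namely that \(\modelsof{\FR(\mSet)} = \mSet\) for every set of models \(\mSet\). This identity is available precisely because the second condition in \cref{def:ideal} guarantees that every set of models is finitely representable, so that \(\FR\) is total on \(2^\mUni\) and the construction \(\iCon\) is well defined. Applying the identity to \(\mSet = \modelsof{\baseb} \setminus \eqclass{M}{\llang}\) yields the working equation
\[
    \modelsof{\iCon(\baseb, M)} = \modelsof{\baseb} \setminus \eqclass{M}{\llang},
\]
from which everything else follows by elementary set reasoning.

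From this equation, \textbf{inclusion} is immediate, since a set difference is always contained in its minuend: \(\modelsof{\baseb} \setminus \eqclass{M}{\llang} \subseteq \modelsof{\baseb}\). For \textbf{success}, I would note that \(M \equiv^\llang M\) holds trivially, so \(M \in \eqclass{M}{\llang}\); hence \(M\) is excluded from \(\modelsof{\baseb} \setminus \eqclass{M}{\llang} = \modelsof{\iCon(\baseb, M)}\), which is exactly the claim \(M \notin \modelsof{\iCon(\baseb, M)}\).

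For \textbf{retainment}, I would compute the set of models that the contraction discards:
\[
    \modelsof{\baseb} \setminus \modelsof{\iCon(\baseb, M)}
    = \modelsof{\baseb} \setminus \bigl(\modelsof{\baseb} \setminus \eqclass{M}{\llang}\bigr)
    = \modelsof{\baseb} \cap \eqclass{M}{\llang}.
\]
Any \(M'\) in this intersection lies in \(\eqclass{M}{\llang}\), hence \(M' \equiv^\llang M\), as required. Finally, \textbf{extensionality} follows from the functionality of \(\FR\): if \(M \equiv^\llang M'\) then the two equivalence classes coincide, \(\eqclass{M}{\llang} = \eqclass{M'}{\llang}\), so the arguments \(\modelsof{\baseb} \setminus \eqclass{M}{\llang}\) and \(\modelsof{\baseb} \setminus \eqclass{M'}{\llang}\) fed to \(\FR\) are literally identical, whence \(\iCon(\baseb, M) = \iCon(\baseb, M')\).

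I do not expect a substantial obstacle here: the argument is essentially a translation of each postulate into a statement about the single set \(\modelsof{\baseb} \setminus \eqclass{M}{\llang}\). The only points that require care are verifying that \(\FR\) is genuinely total, so that \(\iCon\) is well defined, and observing that extensionality asserts a \emph{syntactic} equality of outputs; this is why I rely on \(\FR\) being a function applied to equal inputs, rather than merely on the two results having the same set of models.
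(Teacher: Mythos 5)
Your proposal is correct and takes essentially the same route as the paper's own proof: both reduce all four postulates to the identity \(\modelsof{\FR(\mSet)} = \mSet\) supplied by \cref{def:ideal} and then argue by elementary set reasoning on \(\modelsof{\baseb} \setminus \eqclass{M}{\llang}\), with extensionality obtained from \(\eqclass{M}{\llang} = \eqclass{M'}{\llang}\) and retainment from membership of any discarded model in \(\eqclass{M}{\llang}\). If anything, your write-up is tighter than the paper's terse version (whose success clause appears to be copied from the expansion proof), and your remark that extensionality requires \(\FR\) to be applied to literally identical arguments is precisely the right justification for the syntactic equality the postulate demands.
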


\begin{proof}
    \textcolor{black}{%
    \Cref{def:ideal} ensures that the result exists and that \(M \models
    \varphi\), for all \(\varphi \in \Lambda\), giving us success.
    By construction we do gain models, thus we have inclusion.}
    %Success and inclusion are trivially satisfied. 
    If \(M \equiv^\llang M'\),
    then \(\eqclass{M}{\llang}  = \eqclass{M'}{\llang}\), thus extensionality is
    satisfied. Also, if \(M' \in \modelsof{\alcformula}
    \setminus \modelsof{\iCon(\alcformula, M)}\) then \(M' \in
    \eqclass{M}{\llang}\), hence the operation satisfies retainment.
\end{proof}

\begin{theorem}
    Let \((\llang, \Cn, \mathfrak{M})\) be an \txtideal{} logical system as in \cref{def:ideal}. Then \(\iExp(\baseb, M) \coloneqq \FR\left(\modelsof{\baseb} \cup
        \eqclass{M}{\llang}\right)\) satisfies the
    postulates in \cref{prepostexp}.
\end{theorem}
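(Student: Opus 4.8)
The plan is to mirror the preceding contraction theorem and exploit the single defining property of \(\FR\) supplied by \cref{def:ideal}, namely that \(\modelsof{\FR(\mSet)} = \mSet\) for every set of models \(\mSet\). The first step I would take is to unfold the definition and compute the model set of the result \emph{once}: applying this property with \(\mSet = \modelsof{\baseb} \cup \eqclass{M}{\llang}\) yields \(\modelsof{\iExp(\baseb, M)} = \modelsof{\baseb} \cup \eqclass{M}{\llang}\). With this identity in hand, all four postulates of \cref{prepostexp} reduce to elementary set-theoretic observations about this union.

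For (success), I would note that \(M \equiv^\llang M\) holds trivially, so \(M \in \eqclass{M}{\llang}\) and hence \(M \in \modelsof{\iExp(\baseb, M)}\). For (persistence), \(\modelsof{\baseb}\) is a subset of the union by construction, giving the required inclusion immediately. For (extensionality), I would observe that \(M \equiv^\llang M'\) forces \(\eqclass{M}{\llang} = \eqclass{M'}{\llang}\), so the two applications of \(\FR\) receive identical arguments; since \(\FR\) is a function, the resulting bases coincide.

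The only postulate requiring a short argument is (vacuity). Here I would assume \(M \in \modelsof{\baseb}\) and show that \(\eqclass{M}{\llang} \subseteq \modelsof{\baseb}\), whence the union collapses to \(\modelsof{\baseb}\). Concretely, for any \(M' \equiv^\llang M\), the definition of \(\equiv^\llang\) gives \(M' \models \varphi\) iff \(M \models \varphi\) for every \(\varphi \in \llang\); applying this to each \(\varphi \in \baseb\) together with \(M \in \modelsof{\baseb}\) shows \(M' \in \modelsof{\baseb}\).

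I do not anticipate a genuine obstacle, since every step becomes routine once the identity \(\modelsof{\iExp(\baseb, M)} = \modelsof{\baseb} \cup \eqclass{M}{\llang}\) is established. The one point deserving care is recognising that (extensionality) asserts equality of \emph{bases}, not merely of their model sets, so it must be derived from \(\FR\) being a well-defined function of its argument rather than from the weaker fact that \(\FR\) preserves model sets.
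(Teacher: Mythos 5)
Your proposal is correct and follows essentially the same route as the paper: both unfold the defining identity \(\modelsof{\FR(\mSet)} = \mSet\) to get \(\modelsof{\iExp(\baseb, M)} = \modelsof{\baseb} \cup \eqclass{M}{\llang}\) and then verify each postulate by elementary set reasoning. If anything, your vacuity step (deriving \(\eqclass{M}{\llang} \subseteq \modelsof{\baseb}\) directly from the definition of \(\equiv^\llang\), since \(\baseb \subseteq \llang\)) and your remark that extensionality concerns equality of \emph{bases} and hence rests on \(\FR\) being a function are more explicit than the paper's terse justifications.
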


\begin{proof}
    \textcolor{black}{%
    \Cref{def:ideal} ensures that the result exists and that \(M \models
    \varphi\), for all \(\varphi \in \Lambda\), giving us success.
    Due to the first condition in \Cref{def:ideal} we gain vacuity: if \(M \in
    \modelsof{\baseb}\), then there will be no changes in the accepted models.
    By construction we do not lose models, thus we have persistance.}
    %Success,  vacuity and persistence are trivially satisfied. 
    Extensionality also holds because whenever \(M \equiv^\llang M'\) we have
    then \(\eqclass{M}{\llang}  = \eqclass{M'}{\llang}\).
\end{proof}

A revision operation incorporates new formulae,  and removes potential conflicts in behalf of consistency.  %Incorporating a model,  however, cannot generate conflicts and, therefore,   model revision would collapse to model expansion.  On the other hand,  
In our setting,  incorporating information coincides with model contraction which could lead to an inconsistent belief state.  
In this case,  model revision could be interpreted as a conditional model
contraction: in some cases the removal might be rejected to
preserve consistency.
%We do not define a revision operation because each model behaves
%independently: the addition of a model does not require the removal of another.
%Moreover, if we want to maintain consistency most semantics would prevent our
%operations from resulting in the empty set of models, which could actually be
%the desired outcome. A possible solution consists in giving up success 
%in model contraction: rejecting the removal if the result would be the empty
%set. We leave the study of this and other possibilities as future works.
We leave the study on revision as a future work.  

%Since some logic do not allow every set of models to be finitely representable,
%we have to choose some strategy to obtain a finite base by doing some sort of
%approximation. We discuss some alternatives below

%\begin{enumerate}
    %\item Add more models (in the worst case we end up with an empty base)
    %\item Remove models from \(\modelsof{\ontoo}\) (it might not solve the problem
        %if \(\posmodels\) is not finitely representable)
    %\item Remove models from \(\posmodels\) (in the worst case we stay where we
        %began)
%\end{enumerate}

\section{The case of \(\mathcal{ALC}\)-formula}\label{sec:alc-case}

The logic \(\mathcal{ALC}\)-formula corresponds to the DL \(\mathcal{ALC}\)
enriched with boolean operators over \ALC axioms. 
As discussed in \cref{sec:modelChange}, in finite representable logics,  such as the classical propositional logics, 
we can easily add and remove models while keeping the representation finite.  
   For $\ALC$-formula,  however,  it is not possible %always the case that one can 
   to uniquely add or remove a new model $M$ since, for instance, the language does not distinguish
   quantities (e.g., a model $M$ and another model that has two duplicates of $M$). 

   Even if quantities are disregarded and our input is 
   a class of models indistinguishable by $\ALC$-formulae, 
   there are sets of formulae in this language that are not finitely
   representable. As for instance in the following infinite set: \(\{C \sqsubseteq \exists r^n.\top \mid
   n \in \mathbb{N}^{>0}\}\), where \(\exists r^{n + 1}.\top\) is a shorthand for
   \(\exists r.(\exists r^n.\top)\) and \(\exists r^1.\top \coloneqq \exists r.C\).
%We first present a translation of \(\mathcal{ALC}\)-formulae
%into a format in disjunctive normal form and then
%use this translation to define revisions of the original formula 
%by models. 
  As a workaround for the $\ALC$-formula case,  
   we propose a new strategy based on the translation of \(\mathcal{ALC}\)-formulae
into DNF.

\subsection{\ALC-formulae and Quasimodels}
Let \NC, \NR and \NI be countably infinite and pairwise disjoint sets of concept
names, role names, and individual names, respectively. \ALC 
\emph{concepts} are built according to the rule:
%
%\[
$C ::= A \mid \neg C \mid (C \sqcap C) \mid \exists r.C$, 
%\]
%
where $A \in \NC$ and \(r \in \NR\).
\ALC-\emph{formulae} are defined as expressions $\phi$ of the form 
%\begin{align*}
\[\phi  ::= \alpha \mid \neg(\phi) \mid (\phi \wedge \phi) \quad\quad
\alpha ::= C(a) \mid r(a,b) \mid (C =\top),\]
%
%\end{align*}
where $C$ is an \ALC concept,  $a, b \in \NI$, and $r \in \NR$\footnote{
We may omit parentheses if there is no risk of confusion.
The usual  concept inclusions $C \sqsubseteq D$ can be expressed 
with  $\top \sqsubseteq \neg C\sqcup D$ and $\neg C\sqcup D \sqsubseteq \top$,
 which is $(\neg C\sqcup D = \top)$.}.  
%
%\[\phi ::= \alpha \mid \neg(\phi) \mid (\phi \wedge \phi),\]
%where $\alpha$ is either a \emph{concept inclusion} $C\sqsubseteq D$, 
Denote by $\individuals{\alcformula}$
the set of all individual names occurring in an \ALC-formula \alcformula.

The semantics of \ALC-formulae  and the definitions related to quasimodels are
standard~\cite[page 70]{Gabbay2003a}. \textcolor{black}{In what follows, we reproduce the
essential definitions and results for this work}.
%We assume w.l.o.g. that   \Tmc is of the form $\{\top\sqsubseteq C_\Tmc\}$. 
Let $\alcformula$ be an \ALC-formula. 
Let  $\formulas{\alcformula}$ and $\concepts{\alcformula}$ be the set of all subformulae
and subconcepts of $\alcformula$ closed under single negation, respectively. 
 
\begin{definition}%[Concept Types] 
A \emph{concept type} for $\alcformula$ is a subset 
$\c\subseteq \concepts{\alcformula}$ such that:
%such that:
\begin{enumerate}
\item 
%(1)
    $D \in \c$   iff $\neg D \not\in \c$, for all $D \in \concepts{\alcformula}$;
\item 
%(2) 
$D\sqcap E \in \c$   iff $\{D,E\}\subseteq \c$, for all $D\sqcap E \in \concepts{\alcformula}$. 
%and 
%\item 
%(3)
% $C_\Tmc \in t$.
\end{enumerate}
\end{definition}
%Let $\mn{tp_\c}(\alcformula)$ denote the set of all concept types for \alcformula. 
\begin{definition}%[Formula Types] 
A \emph{formula type} for $\alcformula$ is a subset 
$\f\subseteq \formulas{\alcformula}$ such that:
%such that:
\begin{enumerate}
\item 
%(1)
    $\phi \in \f$ iff $\neg \phi \not\in \f$, for all $\phi \in\formulas{\alcformula}$;
\item 
%(2) 
$\phi\wedge \psi \in \f$ iff $\{\phi,\psi\}\subseteq \f$, for all $\phi\wedge \psi \in \formulas{\alcformula}$. 
%and 
%\item 
%(3)
% $C_\Tmc \in t$.
\end{enumerate}
\end{definition}

We may omit `for $\alcformula$' if this is clear from the context.
%\begin{definition}[Quasistate]\label{def:quasistate} 
A \emph{model candidate} for $\alcformula$ is 
 a triple $(T,o,\f)$ such that $T$ is a set of concept types, % for \alcformula,
$o$ is a function from \individuals{\alcformula} to $T$,
\f a formula type, and %for \alcformula, 
$(T,o,\f)$
satisfies the conditions:  
%  $s \subseteq \mn{tp_\f}(\alcformula)$ of types such that:
%\begin{itemize} 
%\item 
$\alcformula\in\f$;
%\item 
$C(a)\in\f$ implies $C\in o(a)$;
%\item 
$r(a,b)\in\f$ implies $\{\neg C\mid \neg \exists r.C\in  o(a)\}\subseteq o(b)$.
%\end{itemize}

\begin{definition}[Quasimodel]
A model candidate $(T,o,\f)$ for $\alcformula$ is a \emph{quasimodel} for $\alcformula$
if the following holds
\begin{itemize}
\item for every concept type   $\c \in T$ and every $\exists r.D \in \c$,   there is   $\c' \in T$ 
such that $\{D\}\cup\{\neg E\mid \neg \exists r.E\in \c\}\subseteq \c'$;
%\item $C \in t$ for some $t \in s$; and 
\item for every concept type   $\c \in T$ and every concept $C$, if $\neg C\in \c$
then this implies $(C=\top)\not\in\f$;
\item for every concept $C$, if $\neg(C =\top)\in \f$ then there is $\c \in T$
such that $C\not\in\c$;
\item $T$ is not empty.
\end{itemize}
\end{definition}

\Cref{qmSats} motivates the decision of using quasimodels to implement our
operations for finite bases described in \(\ALC\)-formulae.

\begin{theorem}[Theorem 2.27~\cite{Gabbay2003a}]
\label{qmSats}
An \(\mathcal{ALC}\)-formula \alcformula is satisfiable iff there is a quasimodel for \alcformula.
\end{theorem}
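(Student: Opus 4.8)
The statement is a biconditional, so I would treat the two directions separately, putting almost all the work into the ``if'' direction.

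For the easy direction, assume $M=(\Delta^M,\cdot^M)$ is a model of $\varphi$ and read off a quasimodel by collecting the types that $M$ realises. For each $d\in\Delta^M$ set $t(d)\coloneqq\{C\in\concepts{\varphi}\mid d\in C^M\}$, put $T\coloneqq\{t(d)\mid d\in\Delta^M\}$, let $o(a)\coloneqq t(a^M)$ for every $a\in\individuals{\varphi}$, and let $\f\coloneqq\{\phi\in\formulas{\varphi}\mid M\models\phi\}$. Each defining clause of \emph{concept type}, \emph{formula type}, \emph{model candidate} and \emph{quasimodel} then translates directly into a semantic fact about $M$: the closure-under-negation and conjunction clauses are immediate from the semantics of $\neg$ and $\sqcap$; the model-candidate clauses hold because $M\models C(a)$ means $a^M\in C^M$ and $M\models r(a,b)$ forces $b^M$ to satisfy every universal restriction of $a^M$; the existential-successor clause holds because each $\exists r.D\in t(d)$ is witnessed by some $r$-successor of $d$ in $M$; and the two $(C=\top)$ clauses mirror the semantics of $C=\top$. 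This direction is routine bookkeeping.

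For the hard direction I would build a model from a quasimodel $(T,o,\f)$ by unravelling the ``type graph'' into a forest. The roots would be (a) one element $x_a$ of type $o(a)$ for each named individual $a$, and (b) one fresh element of each type $\c\in T$, so that \emph{every} type of $T$ is actually realised; I then add the edge $x_a\xrightarrow{r}x_b$ whenever $r(a,b)\in\f$. From every element of type $\c$ and every $\exists r.D\in\c$ I attach a fresh $r$-successor whose type is a witness $\c'\in T$ supplied by the first quasimodel clause, so that $D\in\c'$ and $\neg E\in\c'$ for every $\neg\exists r.E\in\c$; iterating gives the (generally infinite) domain $\Delta^M$, and concept names are interpreted by $A^M\coloneqq\{d\mid A\in\mathrm{type}(d)\}$. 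The core is then a truth lemma proved by induction on concept structure: $d\in C^M$ iff $C\in\mathrm{type}(d)$. The Boolean cases use the two concept-type clauses; in the $\exists r.C$ case the ``if'' direction is handed to us by the witness we created, while the ``only if'' direction uses that every $r$-edge out of $d$ — whether a tree witness or a named-individual edge — lands on a type forced to contain $\neg C$ whenever $\neg\exists r.C\in\mathrm{type}(d)$ (by the witness clause and by the model-candidate clause for $r(a,b)$, respectively).

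With the concept truth lemma in hand I would prove the formula truth lemma $\phi\in\f$ iff $M\models\phi$ by \emph{simultaneous} induction on $\phi$ — simultaneous because the negation clause of a formula type forces both directions at once. The atomic cases $C(a)$ and $r(a,b)$ reduce to the concept truth lemma together with the coherence clauses relating $o$ and $\f$, and the $(C=\top)$ case is exactly where the remaining quasimodel clauses pay off: $(C=\top)\in\f$ forbids any realised type from containing $\neg C$, whence $C^M=\Delta^M$; and $(C=\top)\notin\f$ yields a type of $T$ omitting $C$ which, because we realised every type, produces a domain element outside $C^M$. Since $\varphi\in\f$, the formula truth lemma gives $M\models\varphi$. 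I expect the main obstacle to be the construction itself: the unravelling must simultaneously realise every type in $T$ (indispensable for the $(C=\top)$ direction) and witness every existential without ever violating a universal restriction, and one must check that the named-individual edges introduced for $r(a,b)\in\f$ do not spoil the $\exists r.C$ case of the truth lemma — which is precisely what the coherence clauses of a model candidate are there to guarantee.
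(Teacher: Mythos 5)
This statement is never proved in the paper itself: it is imported as Theorem~2.27 of \citet{Gabbay2003a}, and the appendix only reproduces the ingredients of the cited proof. Measured against those ingredients, your proof is essentially correct, but your hard direction takes a genuinely different route. Your easy direction coincides exactly with the paper's construction \(qm(\alcformula,\I)\) (\cref{def:qmOf}, \cref{IphiQMphi}). For the converse, however, the appendix defines the interpretation \(\I_{(T,o,\f)}\) directly on the domain \(T\cup\individuals{\alcformula}\): each concept type is itself a single domain element, and \((\c,\c')\in r^\Imc\) holds whenever \(\c'\) contains every \(\neg C\) with \(\neg\exists r.C\in\c\), i.e., role edges are taken \emph{maximally}, with the existential clause of the quasimodel definition guaranteeing that some such successor realises each witness. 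You instead unravel the type graph into a forest of fresh witnesses. Both constructions support the same pair of truth lemmas, and your insistence on realising every type of \(T\) (indispensable for the \(\neg(C=\top)\) clause, as you say) is precisely mirrored by the paper's choice of all of \(T\) as part of the domain. What the type-based model buys is finiteness, since \(|T|\le 2^{|\concepts{\alcformula}|}\), so satisfiable formulae get finite models for free; what your unravelling buys is a more transparent \(\exists r.C\) case of the concept truth lemma, since every \(r\)-edge is either an explicitly created witness or a named edge whose coherence is given by the model-candidate clause for \(r(a,b)\).

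One step is a genuine gap relative to the definitions \emph{as this paper states them}. In your formula truth lemma, the case \(\neg(C(a))\in\f\) needs \(C\notin o(a)\), i.e., the converse of the model-candidate clause, which the paper gives only as ``\(C(a)\in\f\) implies \(C\in o(a)\)''. Under that literal one-directional reading your biconditional truth lemma is unprovable, and the theorem itself would be false: \(\alcformula=((\neg(B\sqcap\neg A))=\top)\wedge(B(a)\wedge\neg(A(a)))\) is unsatisfiable, yet \(T=\{\c\}\) with \(\c=\{A,B,\neg(B\sqcap\neg A)\}\), \(o(a)=\c\), and the formula type \(\f\) containing \(\alcformula\) (hence \(B(a)\) and \(\neg(A(a))\)) passes every condition as stated, because \(\neg(A(a))\in\f\) places no constraint on \(o(a)\) while the \((C=\top)\) clause forces \(A\in o(a)\). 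So the \(o\)--\(\f\) coherence clause must be read as a biconditional (as in the cited source, and as \(qm(\alcformula,\I)\) of \cref{def:qmOf} in fact satisfies); your appeal to ``the coherence clauses relating \(o\) and \(\f\)'' silently uses that stronger form, and you should make this explicit, since it is exactly where the correctness of the quasimodel-to-model direction is decided.
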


\subsection{\ALC-formulae in Disjunctive Normal Form}\label{subsec:alc-dnf}
%\textcolor{red}{define $\alcformula^\dagger$}
%Any \ALC-formula can be translated into an equivalent (although potentially exponentially larger)
%\ALC-formula made of a disjunction of conjunctions of (possibly negated) atomic formulae. 
\textcolor{black}{%
Next, we propose a translation method which converts an \ALC-formula into a disjunction of conjunctions of (possibly negated) atomic formulae.
}
Let $\setqm{\alcformula}$ be the set of all quasimodels for \alcformula. We
We define $\alcformula^\dagger$ as 

\[\bigvee_{(T,o, \f)\in \setqm{\alcformula}} (\bigwedge_{\alpha\in\f} \alpha\wedge \bigwedge_{\neg\alpha\in\f} \neg\alpha).\]
where $\alpha$ is of the form $(C=\top), C(a),r(a,b)$.

%\begin{proposition}
    %For any \(\ALC\)-formula \(\alcformula\): \(\ctypes{\alcformula} = \ctypes
    %{\alcformula^\dagger}\).
%\end{proposition}

%\begin{proof}
    %TODO
%\end{proof}

%\nb{move def ftypes section 4.1}

\textcolor{black}{\Cref{th:dagge_equals_phi} confirms the equivalence between
a formula and its translation into DNF. As downside, the translation can be
potentially exponentially larger than the original formula.}

\begin{restatable}{theorem}{alctranslation}\label{th:dagge_equals_phi}
For every \(\ALC\)-formula \alcformula, we have that $\alcformula\equiv\alcformula^\dagger$.
\end{restatable}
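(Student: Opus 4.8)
The plan is to prove the semantic equivalence directly, showing for an arbitrary model $M$ that $M \models \alcformula$ iff $M \models \alcformula^\dagger$. Since $\alcformula^\dagger$ is the disjunction over $\setqm{\alcformula}$ of the conjunctions $\bigwedge_{\alpha \in \f}\alpha \wedge \bigwedge_{\neg\alpha \in \f}\neg\alpha$, the two directions call on different ingredients: the forward direction builds a quasimodel out of $M$, while the backward direction propagates atomic truth values through the formula-type closure conditions.

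For the forward direction, I would read a quasimodel off $M$. I set the type of a domain element $d$ to $\{D \in \concepts{\alcformula} \mid d \in D^M\}$, let $T$ collect all types realised in $M$, put $o(a)$ equal to the type of $a^M$ for each $a \in \individuals{\alcformula}$, and take $\f = \{\phi \in \formulas{\alcformula} \mid M \models \phi\}$. The core of the work is checking that $(T, o, \f)$ is a quasimodel. The concept-type and formula-type closure conditions are immediate from the boolean semantics of $\neg$, $\sqcap$ and $\wedge$; the model-candidate clauses for $C(a)$ and $r(a,b)$ unfold directly into facts about $a^M$ and $r^M$; and the existential clause is witnessed by choosing, whenever $\exists r.D$ belongs to the type of some $d$, an $r$-successor $y$ of $d$ with $y \in D^M$, whose type lies in $T$ and contains $D$ together with every required $\neg E$. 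The two clauses mentioning $(C=\top)$ reduce to the remark that $C^M = \Delta^M$ fails exactly when some element omits $C$. As $M \models \alcformula$ gives $\alcformula \in \f$, the triple belongs to $\setqm{\alcformula}$, its conjunction is a disjunct of $\alcformula^\dagger$, and by the definition of $\f$ the model $M$ satisfies every conjunct, so $M \models \alcformula^\dagger$.

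For the backward direction, suppose $M \models \alcformula^\dagger$, so $M$ satisfies the disjunct of some $(T, o, \f) \in \setqm{\alcformula}$. Because $\formulas{\alcformula}$ is closed under single negation, for every atomic $\alpha \in \formulas{\alcformula}$ exactly one of $\alpha, \neg\alpha$ lies in $\f$, so the disjunct pins down $M \models \alpha$ iff $\alpha \in \f$. I would then show by induction on subformulae that $M \models \psi$ iff $\psi \in \f$ for every $\psi \in \formulas{\alcformula}$: the atomic case is this observation, and the cases $\psi = \neg\chi$ and $\psi = \chi_1 \wedge \chi_2$ are precisely the two formula-type closure conditions combined with the induction hypothesis. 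Since every model candidate satisfies $\alcformula \in \f$, taking $\psi = \alcformula$ yields $M \models \alcformula$.

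I expect the forward direction to be the main obstacle, as it is the only place where all four quasimodel conditions must be verified, and among them the existential-successor clause is the sole one that is not a one-line unfolding of the semantics. The backward direction is lighter: it uses only the formula-type closure conditions and the fact that $\alcformula \in \f$, and in particular never inspects $T$ or $o$. As a sanity check, when $\alcformula$ is unsatisfiable \Cref{qmSats} yields $\setqm{\alcformula} = \emptyset$, whence $\alcformula^\dagger$ is the empty disjunction and both formulae are unsatisfiable, so the equivalence holds vacuously in that boundary case.
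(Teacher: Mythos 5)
Your proof is correct, and its overall skeleton matches the paper's: the forward direction constructs \(\qmOf{\alcformula}{M}\) exactly as in \cref{def:qmOf} (the paper simply cites \cref{IphiQMphi} from the literature rather than verifying the quasimodel conditions inline, as you do), and the backward direction argues that the satisfied disjunct forces \(M \models \alcformula\). Where you genuinely diverge is in how the backward direction is discharged. The paper proves \cref{litToForm} --- that \(\bigwedge lit(\f) \models \varphi\) whenever \(\f \in \tau(\varphi)\) --- by induction on the \emph{degree} of \(\varphi\), and this induction leans on a structural decomposition of formula types (\cref{lem:subtypes}), which in turn needs a chain of auxiliary results (\cref{obs:phi_sub_alpha,lem:interftype,lem:neg,lem:observation}); the delicate case is the negated conjunction, where one must extract a subtype in \(\tau(\neg\psi)\) or \(\tau(\neg\psi')\). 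You instead prove the stronger biconditional \emph{truth lemma}: \(M \models \psi\) iff \(\psi \in \f\) for all \(\psi \in \formulas{\alcformula}\), by induction on subformulae, using only the two formula-type closure conditions plus the fact that the literals of \(\f\) pin down all atoms of \(\formulas{\alcformula}\). Strengthening to a biconditional is what makes your negation case a one-liner and lets you bypass the paper's entire subtype machinery --- a more elementary and self-contained argument for this theorem, at the cost of not producing the intermediate structural lemmas about \(\tau(\varphi)\) that the paper develops along the way. One cosmetic point: in the forward direction you should note explicitly that \(T \neq \emptyset\) because \(\Delta^M\) is nonempty; your appeal to \cref{qmSats} for the unsatisfiable boundary case is harmless but unnecessary, since the empty disjunction is \(\bot\) and the forward direction is then vacuous.
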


In the next subsections, we present finite base model change operations for
\ALC-formulae, i.e., functions from \(\llang \times \mUni \mapsto \llang\).
We can represent the body of knowledge as a single formula because every finite
belief base of \(\ALC\)-formulae can be represented by the conjunction of its
elements.
%The classical approach to add models is to remove the formulae from the base using the classical repair
%approaches. 
We use our translation to add models in a ``minimal'' way by \emph{adding disjuncts},
while removing a model amounts to \emph{removing disjuncts}. \textcolor{black}{We also
need to obtain a model candidate relative to our translated formula, as show in
\cref{def:qmOf}.}
%, but doing inference between the base (and its subsets) and the model
%to be added. 
%In the case of removing models, the classical approach is to add formulae
%to the base such that the result will have a consequence that is not entailed by
%the models to be removed. 
%Using our translation, instead of adding formulae, we \emph{remove disjuncts}.

\begin{definition}[\cite{Gabbay2003a}]
\label{def:qmOf}
    Let \(\I\) be an %\nb{RG: finite?} 
    interpretation and \(\alcformula\) an
    \(\ALC\)-formula formula. The quasimodel of \(\I\) w.r.t.\ \(\alcformula\),
symbols \(\qmOf{\alcformula}{\I}=(T,o,\f)\), is %the model candidate: % defined as:
    \begin{itemize}
        \item \(T \coloneqq \{c(x) \mid x \in \Delta^\Imc\}\), where \(c(x) = \{C \in
            \concepts{\varphi} \mid x \in C^\I\}\), 
        \item \(o(a) \coloneqq c(a^\I)\), for all \(a \in \individuals{\varphi}\),
        \item \(\f \coloneqq \{\psi \in \formulas{\varphi} \mid \I \models \psi\}.\)
    \end{itemize}
\end{definition}

\subsection{Model Contraction for \(\mathcal{ALC}\)-formulae}

We define model contraction for \(\mathcal{ALC}\)-formulae using
the notion of quasimodels discussed previously and a correspondence between models and quasimodels.

We use the following operator, denoted $\qfilter$, to define model contraction in Definition~\ref{def:modelcontraction}. %remove 
%\begin{definition}
    Let \(\alcformula\) be an \(\ALC\)-formula and let $M$ be a model. 
    Then, 
\[ \qfilter(\varphi, M)  = \ftypes{\varphi} \setminus \{ \f\}, \mbox{ where } qm(\varphi,M) = (T,o,\f)\]
    and \(\ftypes{\alcformula}\) is
    the set of all formula types in all quasimodels for \(\alcformula\), that
    is:
    \[\ftypes{\alcformula} = \{\f \mid (T,o, \f)\in \setqm{\alcformula}\}.\]
%\end{definition}
%\subsubsection*{Operator with Single Input}
Let $lit(\f) \coloneqq \{ \ell \in \f \mid \ell \mbox{ is a literal } \}$
be the set of all literals in a formula type $\f$.

\begin{definition}\label{def:modelcontraction}
A \emph{finite base model contraction function} is a function
 $\Con : \llang \times \mUni
    \mapsto \llang$ such that %(see)
    \begin{align*}
\Con(\varphi, M) &{=} \left\{ 
\begin{array}{cl}
    \bigvee\limits_{\f \in \qfilter(\varphi,M)} \bigwedge lit(\f), & \mbox{if } M \models \varphi \mbox{ and }  \qfilter(\varphi, M) \neq \emptyset \\
    \bot & \mbox{if } M \models \varphi \mbox{ and } \qfilter(\varphi, M)  = \emptyset \\ 
     \varphi & \mbox{otherwise}. 
\end{array}
\right.
\end{align*}
\end{definition}

%~ \begin{lemma}\label{lem:dagger_mu}
%~ Let $M \models \varphi$ and $qm(T,o, \f)$.  If 
%~ $\qfilter(\varphi, M) \neq \emptyset$ then 
%~ $$ \varphi \equiv \Bigg(	\bigvee_{\f' \in \qfilter(\varphi,M)} \bigwedge lit(\f') \Bigg) \lor \bigwedge lit(\f)  $$

%~ \begin{proof}
%~ from Theorem~\ref{th:dagge_equals_phi}, we have $\varphi \equiv \varphi^{\dagger}$. Therefore, 
%~ $$ \varphi \equiv \bigvee_{\f' \in \ftypes{\varphi}} \bigwedge lit(\f') $$
%~ From definition... $\qfilter(\varphi, M) = $.... 
%~ Therefore, 

%~ \begin{align*}
%~ \varphi \equiv & \bigvee_{\f' \in \ftypes{\varphi}} \bigwedge lit(\f') \\
%~ \equiv &  \Bigg(	\bigvee_{\f' \in (\ftypes{\varphi}\setminus \{ \f\})} \bigwedge lit(\f') \Bigg) \lor \bigwedge lit(\f) \\
%~ \equiv & \Bigg(	\bigvee_{\f' \in \qfilter(\varphi,M)} \bigwedge lit(\f') \Bigg) \lor \bigwedge lit(\f) 
%~ \end{align*}

%~ \end{proof}
%~ \end{lemma}

%\end{definition}
As we see later in this \lcnamecref{sec:alc-case}, there are models \(M,
M'\) such that \(M {\not\equiv}^\llang M'\) but our operations based on
quasimodels cannot distinguish them. 
Given \(\ALC\)-formulae \(\alcformula,\psi\),   
    we say that \(\psi\) is \emph{in the language of the literals of
    \(\alcformula\)},  written \(\psi \in \langlit{\alcformula}\), 
    if \(\psi\)
    is a boolean combination of the atoms in \(\alcformula\).
    Our operations partition the
models according to this restricted language.
    We  write \(M \equivlit M'\) instead \textcolor{black}{of \(M
    \equiv^\langlit{\alcformula} M'\)},  and \(\liteqc{M}{\alcformula}\) instead of
    \([M]^\langlit{\alcformula}\) for conciseness. 
    %\nb{RG: new paragraph to
    %introduce \(\langlit{\alcformula}\) and abbreviations}
 %   
%In the following, 
%We write \(\gCon(\alcformula, M)\) to refer to an
%arbitrary finite base contraction function %of the form 
%$\gCon : \llang \times
%\mUni
%    \mapsto \llang$.

\begin{restatable}{theorem}{contractioncharacterization}
\label{conReprALCf}
Let \(M\) be a model and \(\alcformula\) an \(\ALC\)-formula. 
    A finite base model function \(\gCon(\alcformula, M)\) is equivalent to \(\oCon(\alcformula,
    M)\) iff \textcolor{black}{$\gCon$} satisfies:
    \begin{description}
        \item[(success)] $M \not \models \gCon(\alcformula, M)$,
        \item[(inclusion)] $\modelsof{\gCon(\varphi,M)} \subseteq \modelsof{\varphi}$,
        \item[(atomic retainment):] For all \(\mSet' \subseteq \mUni\), if \(\modelsof{\gCon(\baseb, \modelm)}
            \subset \mSet' \subseteq \modelsof{\baseb} \setminus \liteqc{\modelm}{\alcformula}\) then
            \(\mSet'\) is not finitely representable in \(\ALC\)-formula.
        \item[(atomic extensionality)] {if \(M' \equivlit M\) then
            $$\modelsof{\gCon(\alcformula, M)} = \modelsof{\gCon(\alcformula,
        M')}.$$}
    \end{description}
\end{restatable}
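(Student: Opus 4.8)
The plan is to reduce both directions of the representation theorem to a single structural lemma: for every $\ALC$-formula $\alcformula$ and model $M$, the construction of \Cref{def:modelcontraction} satisfies $\modelsof{\oCon(\alcformula, M)} = \modelsof{\alcformula} \setminus \liteqc{M}{\alcformula}$, and this set is finitely representable. To prove the lemma I would first invoke \Cref{th:dagge_equals_phi} to rewrite $\alcformula$ as $\alcformula^\dagger = \bigvee_{\f \in \ftypes{\alcformula}} \bigwedge lit(\f)$. The crucial observation is that a formula type is completely determined by the literals it contains, since the type conditions propagate the boolean structure from the atoms $C = \top$, $C(a)$, $r(a,b)$; hence distinct formula types induce mutually exclusive conjunctions $\bigwedge lit(\f)$, and every model of $\alcformula$ satisfies exactly one disjunct, namely the one matching its atomic valuation. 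By \Cref{def:qmOf} the formula type $\f$ of $qm(\alcformula, M)$ is exactly $\{\psi \in \formulas{\alcformula} \mid M \models \psi\}$, and when $M \models \alcformula$ this $\f$ genuinely occurs in $\ftypes{\alcformula}$; moreover the models of the disjunct $\bigwedge lit(\f)$ are precisely $\liteqc{M}{\alcformula}$. Removing that single disjunct, which is what $\qfilter$ does, therefore deletes exactly the class $\liteqc{M}{\alcformula}$ and keeps everything else, giving the lemma. I would then check the degenerate branches: when $M \not\models \alcformula$ the class $\liteqc{M}{\alcformula}$ is disjoint from $\modelsof{\alcformula}$ and $\oCon$ returns $\alcformula$, and when $\qfilter(\alcformula, M) = \emptyset$ we have $\modelsof{\alcformula} = \liteqc{M}{\alcformula}$ and $\oCon$ returns $\bot$; in both cases the stated equality still holds.

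With the lemma in hand, the forward direction is routine. Since all four postulates are stated purely in terms of $\modelsof{\cdot}$ and $M \models \cdot$, and $\gCon \equiv \oCon$ means they have identical models, it suffices to verify the postulates for $\oCon$. Success and inclusion are immediate from the lemma, as $\liteqc{M}{\alcformula}$ is removed and the result is a subset of $\modelsof{\alcformula}$. Atomic extensionality follows because $M \equivlit M'$ forces agreement on every subformula in $\formulas{\alcformula} \subseteq \langlit{\alcformula}$, hence on $\alcformula$ itself and on the formula type computed by $qm$, so the two constructions coincide. Atomic retainment holds vacuously: since $\modelsof{\oCon(\alcformula, M)}$ already equals the entire region $\modelsof{\alcformula} \setminus \liteqc{M}{\alcformula}$, there is no set $\mSet'$ strictly between the two, so the implication is never triggered.

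For the backward direction I would first locate $\modelsof{\gCon(\alcformula, M)}$ using the postulates: success gives $M \not\models \gCon(\alcformula, M)$, and applying it to every $M' \equivlit M$ through atomic extensionality shows that no model of $\liteqc{M}{\alcformula}$ is retained, so together with inclusion we get $\modelsof{\gCon(\alcformula, M)} \subseteq \modelsof{\alcformula} \setminus \liteqc{M}{\alcformula}$. Because $\gCon(\alcformula, M)$ is itself an $\ALC$-formula, $\modelsof{\gCon(\alcformula, M)}$ is finitely representable, and by the lemma so is the upper bound $\modelsof{\alcformula} \setminus \liteqc{M}{\alcformula} = \modelsof{\oCon(\alcformula, M)}$. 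If the inclusion were strict, then taking $\mSet' = \modelsof{\alcformula} \setminus \liteqc{M}{\alcformula}$ would produce a finitely representable set strictly between $\modelsof{\gCon(\alcformula, M)}$ and the bound, contradicting atomic retainment; hence equality holds and $\gCon(\alcformula, M) \equiv \oCon(\alcformula, M)$. The main obstacle is concentrated entirely in the lemma, specifically in showing that the complement $\modelsof{\alcformula} \setminus \liteqc{M}{\alcformula}$ is finitely representable even though a single model cannot be removed in this language. This is exactly what the disjunct-removal construction buys, and it hinges on formula types being fixed by their literals together with \Cref{th:dagge_equals_phi}.
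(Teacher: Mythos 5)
Your proposal is correct and follows essentially the same route as the paper: your structural lemma is precisely the paper's Lemma~\ref{conModelsSetDiff} (\(\modelsof{\oCon(\alcformula, M)} = \modelsof{\alcformula} \setminus \liteqc{M}{\alcformula}\), finitely representable since \(\oCon\) outputs a formula), proved from the same ingredients — the DNF translation, formula types being determined by their literals (the paper's \cref{eqLitEqf}), and the correspondence between formula types and \(\equivlit\)-classes (the paper's \cref{eqCeqF,lem:lit_equals_neg}). Both directions then mirror the paper's argument, with atomic retainment vacuous in the forward direction and \(\mSet' = \modelsof{\alcformula} \setminus \liteqc{M}{\alcformula}\) serving as the finitely representable witness against atomic retainment in the backward direction; your ordering there (first deriving \(\modelsof{\gCon(\alcformula, M)} \subseteq \modelsof{\alcformula} \setminus \liteqc{M}{\alcformula}\) from success, inclusion and atomic extensionality, and only then invoking retainment to exclude strictness) is, if anything, a slightly tidier arrangement of the same steps.
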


%~ \begin{theorem}
%~ Model contraction in \ALC-formulae is in \TwoExpTime.
%~ \end{theorem}
 
%~ \begin{proof}
%~ todo: reformulate theorem statement and check if correct, also check lower bound which should hold...
%~ \end{proof}

The postulate of \textit{success} guarantees that $M$ will be indeed relinquished,  
while \textit{inclusion} imposes that no model will be gained during a contraction operation.  
Recall that in order to guarantee finite representability,  it might be necessary to remove $M$ jointly with other models.  
The postulate \textit{atomic retainment}% and \textit{atomic extensionality} 
captures a notion of minimal change,  dictating which models are allowed to be removed together with $M$.  %These postulaare related to the fact that due to finite representability,  %\textit{Atomic-minimality} %is better understood jointly with \textit{inclusion}. Together they impose that the only formulae gained when removing $M$ are those in conflict with $M$ or some of the Recall that removing only $M$ though always possible does not guarantee that in order to guarantee we achieve a knowledge base that can be finitely represented,  it is not 

%\textit{Atomic-minimality} imposes that only those models that are equivalent to $M$ \textcolor{black}{w.r.t. $\langlit{\alcformula}$} are allowed to be removed.  
On the other hand,  \textit{atomic extensionality} imposes that if two models $M$ and $M'$ satisfy 
the same formulae within the literals of the current knowledge base $\varphi$,  then they should present the same result.  

A simpler way of implementing model contraction, also using the notion of
a quasimodel, %by models is given by \Cref{fastcon}. 
%%using quasimodels.

\begin{definition}
    \label{fastcon}
    Let \(\alcformula\) be an \(\ALC\)-formula and \(M\) a model. Also, let \((T, o, \f)
    = \qmOf{\varphi}{M}\). The function  \(\Con_{s}(\alcformula, M)\) is defined follows:
    \begin{align*}
    \Con_{s}(\alcformula, M) = \left\{
        \begin{array}{cl}
            \alcformula \land \neg(\bigwedge lit(\f)) & \text{if } M \models \alcformula\\
        \alcformula & \text{otherwise.}
    \end{array}
    \right.
%    \begin{cases}
%        \alcformula \land \neg(\bigwedge lit(\f)) & \text{if } M \models \alcformula\\
%        \alcformula & \text{otherwise.}
%    \end{cases}
    \end{align*}
\end{definition}

%~ \begin{lemma}
%~ \label{qmClass}
    %~ Let \(\alcformula\) be an \(\ALC\)-formula formula and \(M, M' \in
    %~ \models{\alcformula}\) be models. Also, let \((T, o, \f)    = \qmOf{\alcformula}{M}\) and \((T', o', \f') = \qmOf{\alcformula}{M}\).
    %~ Then, \(lit(\f) = lit(\f')\) iff \(M\) and \(M'\) are indistinguishable in
    %~ \(\ALC\)-formula.
%~ \end{lemma}

%\begin{proof}
%\end{proof}

\Cref{ex:contract} illustrates how $\Con_{s}$ works.  

{\color{black}
\begin{example}\label{ex:contract}
Consider the following $\ALC$-formula and interpretation $M$:
\begin{align*}
    \varphi \coloneqq & P(Mary) \land C(DL) \land C(AI) \land \left((teaches(Mary, DL)
        \right.
        \land \\
             &\left. \neg{teaches(Mary, AI)}) \lor (\neg{teaches(Mary, DL)} \land teaches(Mary, AI)) \right)
\end{align*}
and $M  = (\Delta^\Imc,\cdot^\Imc)$, where 
$\Delta^\Imc = \{  m,d,a \}$, 
$C^\Imc = \{ d,a\}$,
$P^\Imc = \{ m\}$,
$teaches^\Imc = \{(m,d)\}$,
$Mary^\Imc = m$, 
$AI^\Imc = a$, and
$DL^\Imc = d$.  
%
%\begin{minipage}[t]{.5\textwidth}
%\begin{align*}
% \Delta^\Imc &= \{  m,d,a \} \\ 
% C^\Imc &= \{ d,a\} \\
% P^\Imc &= \{ m\}  \\
% teaches^\Imc &= \{(m,d)\} 
%\end{align*}
%\end{minipage}%1
%\begin{minipage}[t]{.5\textwidth}
%\begin{align*}
% Mary^\Imc &= m \\
%  AI^\Imc &= a \\
%   DL^\Imc &= d 
%\end{align*}
%\end{minipage}
Assume we want to remove $M$ from $\modelsof{\varphi}$. 
%Note that $M  \models \varphi$,  as $M \models \neg teaches(Mary,  DL)$.  
Let $\qmOf{\varphi}{M} = (T,o, \f)$.  Thus,  
 \begin{align*}
  lit(\f) &= \{  \neg teaches(m,a),  teaches(m,  d), C(d), C(a), P(m) \}\\
\Con_{s}(\varphi,M) & = \varphi \land \neg  \bigwedge lit(\f) \\ 
	& = \varphi \land \neg \left( \neg teaches(m,a) \land  teaches(m,  d) \land C(d) \land C(a) \land  P(m)\right). 
 \end{align*}

\end{example}

}

%~ \begin{theorem}
    %~ Let \(\alcformula\) be an \(\ALC\)-formula and \(M\) a model. Then,
    %~ \(\Con(\alcformula, M) \equiv \Con_{s}(\alcformula, M)\).
%~ \end{theorem}

%~ \begin{proof}
    %~ If \(M \not\models \alcformula\), then \(\Con(\alcformula, M)
    %~ = \Con_{s}(\alcformula, M) = \alcformula\). Now, suppose that \(M \models
    %~ \alcformula\). In this case, we know from \cref{IphiQMphi} that \((T', o',
    %~ \f') = \qmOf{\alcformula}{M}\) is a quasimodel for \(\alcformula\).

    %~ From \cref{th:dagge_equals_phi}, we know that:

    %~ \begin{align*}
        %~ \Con_{s}(\alcformula, M) &\equiv \alcformula^\dagger \land \neg(\bigwedge lit(\f))\\
        %~ {} &= \left(\bigvee_{(T,o, \f)\in \setqm{\alcformula}} (\bigwedge lit(\f))\right) \land \neg(\bigwedge lit(\f))
    %~ \end{align*}

    %~ For each \((T, o, \f)\) in \setqm{\alcformula} either \(lit(\f)
    %~ = lit(\f')\) or \(lit(\f) \neq lit(\f')\). If \(lif(\f) = lit(\f')\), then:

    %~ \[\bigwedge lit(\f) \land \neg(\bigwedge lit(\f')) \equiv \bot.\]
%~ \end{proof}
%~ \begin{proof}
    %~ Follows directly from \cref{def:qmOf}. If the same literals of
    %~ \(\alcformula\) are included, the same general subformulae will be
    %~ satisfied. The other direction is trivial as \(lit(\f) \subseteq \f\).
%~ \end{proof}

Both model contraction operations $\Con$ and $\Con_{s}$ are equivalent.  

\begin{restatable}{theorem}{secondcon}
    For every  \(\ALC\)-formula \(\alcformula\) and  model $M$,
    \(\Con(\alcformula, M) {\equiv} \Con_{s}(\alcformula, M)\).
\end{restatable}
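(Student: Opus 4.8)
The plan is to split on whether $M \models \alcformula$. If $M \not\models \alcformula$, then both $\Con$ (\cref{def:modelcontraction}) and $\Con_{s}$ (\cref{fastcon}) return $\alcformula$ by definition, so the equivalence is immediate. All the work is therefore in the case $M \models \alcformula$, where I would write $\qmOf{\alcformula}{M} = (T,o,\f)$ and establish
\[
    \alcformula \land \neg\!\left(\bigwedge lit(\f)\right) \;\equiv\; \bigvee_{\f' \in \qfilter(\alcformula, M)} \bigwedge lit(\f').
\]
The first ingredient is \cref{th:dagge_equals_phi}: since the disjunct associated with a quasimodel $(T,o,\f')$ in $\alcformula^\dagger$ is exactly $\bigwedge lit(\f')$ and depends only on the formula type $\f'$, identical disjuncts collapse and we obtain $\alcformula \equiv \bigvee_{\f' \in \ftypes{\alcformula}} \bigwedge lit(\f')$. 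I also need that $M \models \alcformula$ implies $\qmOf{\alcformula}{M}$ is a genuine quasimodel for $\alcformula$ (the model-to-quasimodel direction underlying \cref{qmSats}), so that $\f \in \ftypes{\alcformula}$ and $\qfilter(\alcformula, M) = \ftypes{\alcformula} \setminus \{\f\}$ is meaningful.

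The key lemma I would then prove is that a formula type is completely determined by its literals: for every subformula $\psi \in \formulas{\alcformula}$, membership $\psi \in \f'$ is fixed by $lit(\f')$, via a straightforward structural induction using the two defining clauses of a formula type (negation is handled by clause (1), conjunction by clause (2)). Consequently, two distinct formula types $\f, \f'$ must disagree on the polarity of at least one literal, so $\bigwedge lit(\f) \land \bigwedge lit(\f')$ is unsatisfiable; equivalently $\bigwedge lit(\f') \models \neg\!\left(\bigwedge lit(\f)\right)$ whenever $\f' \neq \f$.

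With these in hand the computation is routine: distributing $\neg\!\left(\bigwedge lit(\f)\right)$ over $\bigvee_{\f' \in \ftypes{\alcformula}} \bigwedge lit(\f')$ kills the disjunct $\f' = \f$ (a conjunction meeting its own negation) and, by the lemma, leaves every other disjunct $\bigwedge lit(\f')$ unchanged, yielding $\bigvee_{\f' \in \ftypes{\alcformula} \setminus \{\f\}} \bigwedge lit(\f') = \bigvee_{\f' \in \qfilter(\alcformula, M)} \bigwedge lit(\f')$, which is exactly $\Con(\alcformula, M)$ in the first case of \cref{def:modelcontraction}. The degenerate subcase $\qfilter(\alcformula, M) = \emptyset$ falls out of the same identity: then $\ftypes{\alcformula} = \{\f\}$, so $\alcformula \equiv \bigwedge lit(\f)$ and both operations reduce to $\bigwedge lit(\f) \land \neg\!\left(\bigwedge lit(\f)\right) \equiv \bot$, matching the middle line of $\Con$ read as the empty disjunction. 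I expect the main obstacle to be the key lemma, and specifically the care needed to ensure that ``literal'' covers all three atom shapes $C(a)$, $r(a,b)$, $(C=\top)$ and that clause (1) of a formula type forces exactly one polarity of each atom to be present, so that distinct types genuinely \emph{conflict} rather than merely differ.
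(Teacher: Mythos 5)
Your proposal is correct and follows essentially the paper's own proof: the same case split on \(M \models \alcformula\), the same use of \cref{th:dagge_equals_phi} and \cref{IphiQMphi}, and your key lemma (a formula type is determined by its literals) is precisely the paper's \cref{eqLitEqf}, likewise obtained from the two defining clauses of a formula type. If anything, you are slightly more careful than the paper, which only notes that disjuncts with \(lit(\f) \neq lit(\f')\) are ``not inconsistent'': your observation that distinct types genuinely \emph{conflict} on a literal, so that each surviving disjunct \(\bigwedge lit(\f)\) is left unchanged under conjunction with \(\neg\bigl(\bigwedge lit(\f')\bigr)\), is what the final equivalence actually requires, and you also make explicit the \(\qfilter(\alcformula, M) = \emptyset\) subcase that the paper handles only implicitly.
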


%%TODO add example

%~ \begin{example}
    %~ Let \(\alcformula = C \sqsubseteq D \land D(a)\). Now consider a model \(M\)
    %~ whose restriction to \(\sig{\varphi}\) is equivalent to \(\alcformula \land
    %~ C(a)\) and a model \(M'\) whose restriction to \(\sig{\varphi}\) is
    %~ equivalent to \(\alcformula \land \neg C(a)\). Since both are models for
    %~ \(\varphi\), we have that \((T, o, \f) = \qmOf{\varphi}{M}\) and 
    %~ \((T', o', \f') = \qmOf{\varphi}{M'}\) are quasimodels for \(\varphi\).
    %~ Furthermore, as they satisfy the same literals of \(\formulas{\varphi}\), 
    %~ \(\f = \f'\). Hence, \(\bigwedge lit(\f) = \bigwedge lit(\f')\). But \(M\)
    %~ and \(M'\) can be distinguished by the formula \(C(a)\)
%~ \end{example}
 
\subsection{Model Expansion in \(\mathcal{ALC}\)-formulae}

In this section, we investigate model expansion for $\ALC$-formulae. 
 Recall that we assume that a knowledge base is represented as a single $\ALC$-formula $\varphi$. 
  Expansion consists in adding an input model $M$ to the current knowledge base $\varphi$ with the 
  requirement that the new epistemic state can be represented also as a finite formula.  

\begin{definition}\label{def:rev:ALC}
Given a quasimodel $(T,o,\f)$, we write $\bigwedge (T,o,\f)$ as a short-cut  for $\bigwedge lit(\f)$.   
A \emph{finite base model expansion} is a function $\Ex: \llang \times \mUni \to \mathcal{L}$ s.t.: %uch that 
\begin{align*}
\Ex(\varphi, M) &= \left\{ 
\begin{array}{cl}
    \varphi & \mbox{if } M \models \varphi \\
    \varphi \lor \bigwedge qm(\neg \varphi, M) & \mbox{otherwise}. 
\end{array}
\right.
\end{align*}
\end{definition}

\Cref{ex:expand} illustrates how $\Ex$ works. 

\begin{example}\label{ex:expand}
Consider the interpretation $M$ from Example~\ref{ex:contract} and 
%the 
%following $\ALC$-formula
\[\varphi \coloneqq P(Mary) \land C(DL) \land C(AI) \land teaches(Mary,AI) \land
\neg{teaches(Mary, DL)}.\]
%and $M  = (\Delta^\Imc,\cdot^\Imc)$, where 
%\begin{minipage}[t]{.5\textwidth}
%\begin{align*}
% \Delta^\Imc &= \{  m,d,a \} \\ 
% C^\Imc &= \{ d,a\} \\
% P^\Imc &= \{ m\}  \\
% teaches^\Imc &= \{ (m,a)\} 
%\end{align*}
%\end{minipage}%1
%\begin{minipage}[t]{.5\textwidth}
%\begin{align*}
% Mary^\Imc &= m \\
%  AI^\Imc &= a \\
%   DL^\Imc &= d 
%\end{align*}
%\end{minipage}
Assume we want to add $M$ to $\modelsof{\varphi}$ and  $qm(\neg \varphi,
M) = (T,o, \f)$. Thus,
%Note that $M \not \models \varphi$,  as $M \models \neg teaches(Mary,  AI)$.  .   
%Note that 
% $$ $$
 \begin{align*}
 lit(\f) &= \{  \neg teaches(m,a),  teaches(m,  d), C(d), C(a), P(m) \} \\
\Ex(\varphi,M) & = \varphi \lor \bigwedge lit(\f) \\ 
	& = \varphi \lor \left(\neg teaches(m,a) \land  teaches(m,  d) \land C(d)
        \land C(a) \land  P(m)\right). 
 \end{align*}

\end{example}

The operation `$\Ex$' maps a current knowledge base represented as a single formula $\varphi$ and maps it to a new knowledge base that is satisfied by the input model $M$.  
The intuition is that `$\Ex$' modifies the current knowledge base only if $M$ does not satisfy $\varphi$. 
 This modification is carried out by making a disjunct of $\varphi$ with a formula $\psi$ that is satisfied by $M$. 
 This  guarantees  that $M$ is present in the new epistemic state and that models of $\varphi$ 
 are not discarded.  
 The trick is to find such an appropriate formula $\psi$ which is obtained by taking the conjunction of 
 all the literals within the quasimodel  $qm(\neg \varphi, M)$. Here,  the quasimodel needs to be centred on $\neg \varphi$ because $M \not \models \varphi$,  and therefore it is not possible to construct a quasimodel based on $M$ centred on  $\varphi$.  
%The $\Ex$ looks at the quasi-
%a disjunct $\alpha$ that consists of the conjunction of all literals of the quasi-model of ..... This guarantees that  $M$ is then added to the new epistemic state.  
As discussed in the prelude of this section,  this strategy 
%This strategy of constructing the conjunction of all litetals in the uase model of $\neg \varphi$ and adding it ... 
not only adds $M$ to the new knowledge base but also the whole equivalence class modulo the literals of  $\varphi$. 

\begin{restatable}{lemma}{expMmodelUnion}
\label{exp_model_union}
%\begin{theorem}
For every \ALC-formula $\varphi$ and model $M$: 
 \begin{align*} 
\modelsof{\Ex(\varphi, M)} = \modelsof{\varphi} \cup \liteqc{M}{\alcformula}.
 \end{align*}
%\end{theorem}
\end{restatable}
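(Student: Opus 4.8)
The plan is to reduce the claim to reasoning about the disjunction appearing in \Cref{def:rev:ALC} and to split on the case distinction used there. I will rely throughout on two facts: that $\modelsof{\phi_1 \lor \phi_2} = \modelsof{\phi_1} \cup \modelsof{\phi_2}$ for any \ALC-formulae $\phi_1,\phi_2$, and that $\varphi$ is itself a boolean combination of its atoms, so $\varphi \in \langlit{\alcformula}$. The latter gives the key consequence that $M' \equivlit M$ implies $M' \models \varphi$ iff $M \models \varphi$.

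In the case $M \models \varphi$ we have $\Ex(\varphi, M) = \varphi$, so it suffices to show $\liteqc{M}{\alcformula} \subseteq \modelsof{\varphi}$; this yields $\modelsof{\varphi} \cup \liteqc{M}{\alcformula} = \modelsof{\varphi} = \modelsof{\Ex(\varphi,M)}$. The inclusion is immediate from the observation above: any $M' \in \liteqc{M}{\alcformula}$ satisfies $M' \equivlit M$, and since $\varphi \in \langlit{\alcformula}$ and $M \models \varphi$, we get $M' \models \varphi$.

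In the case $M \not\models \varphi$ we have $\Ex(\varphi, M) = \varphi \lor \bigwedge \qmOf{\neg\varphi}{M}$, hence $\modelsof{\Ex(\varphi,M)} = \modelsof{\varphi} \cup \modelsof{\bigwedge lit(\f)}$, where $(T,o,\f) = \qmOf{\neg\varphi}{M}$. The heart of the argument is to establish $\modelsof{\bigwedge lit(\f)} = \liteqc{M}{\alcformula}$. For this I would unfold \Cref{def:qmOf}: the formula type is $\f = \{\psi \in \formulas{\neg\varphi} \mid M \models \psi\}$, and since $\formulas{\neg\varphi}$ is closed under single negation and contains every atom of $\varphi$ together with its negation, the set $lit(\f)$ contains, for each atom $\alpha$ of $\varphi$, exactly the one of $\alpha,\neg\alpha$ that $M$ satisfies. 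Thus $\bigwedge lit(\f)$ is the complete literal description of $M$ relative to the atoms of $\varphi$, so a model $M'$ satisfies it iff $M'$ agrees with $M$ on every atom of $\varphi$. Agreement on all atoms is in turn equivalent to agreement on all boolean combinations of atoms, i.e.\ to $M' \equivlit M$; hence $\modelsof{\bigwedge lit(\f)} = \liteqc{M}{\alcformula}$ and the desired equality follows.

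The main obstacle I expect is the bookkeeping in the second case: making precise that $lit(\f)$ contains exactly one literal per atom of $\varphi$ matching $M$ (using the type condition $\phi \in \f$ iff $\neg\phi \notin \f$ together with the definition of $\f$ via $M$), and that the atoms occurring in $\formulas{\neg\varphi}$ are precisely those of $\varphi$. Once these two facts are in place, the equivalence between agreeing on the atoms and agreeing on arbitrary boolean combinations is a routine structural induction, and combining the two cases completes the proof.
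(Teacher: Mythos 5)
Your proposal is correct, and it follows the paper's proof in its outer structure: the same case split on whether \(M \models \varphi\), the same trivial handling of the first case (the paper likewise reduces it to \(\liteqc{M}{\alcformula} \subseteq \modelsof{\varphi}\), which your observation that \(\varphi \in \langlit{\alcformula}\) justifies more explicitly than the paper does), and the same reduction of the second case to the equality \(\modelsof{\bigwedge lit(\f)} = \liteqc{M}{\alcformula}\) for \((T,o,\f) = \qmOf{\neg\varphi}{M}\). Where you genuinely diverge is in how that key equality is established. The paper invokes its Lemma~\ref{lem:lit_equals_neg}, which is stated for an \emph{arbitrary} formula type \(\f \in \ftypes{\varphi}\) with \(M \models \bigwedge lit(\f)\), and whose proof rests on Proposition~\ref{prop:land_lit_f}, an entailment result (\(\bigwedge lit(\f) \models \psi\) for any \(\psi \in \langlit{\f}\) satisfied by \(M\)) proved by induction on formula degree. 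You instead unfold \cref{def:qmOf} directly: since \(\f = \{\psi \in \formulas{\neg\varphi} \mid M \models \psi\}\) and \(\formulas{\neg\varphi} = \formulas{\varphi}\) is closed under single negation and contains all atoms of \(\varphi\) (cf.\ Lemma~\ref{lem:neg}), the set \(lit(\f)\) is precisely the complete atomic diagram of \(M\) over the atoms of \(\varphi\), after which the equivalence between agreement on atoms and agreement on all boolean combinations is a routine structural induction. Your route is more elementary and self-contained, exploiting that \(\f\) is the \emph{canonical} type induced by \(M\) rather than an arbitrary one; the paper's more general lemma costs a slightly subtler induction but pays for itself by reuse, as Lemma~\ref{lem:lit_equals_neg} is invoked again in the proof of \cref{expModelConverse}. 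The bookkeeping obligations you flag (one literal per atom in \(lit(\f)\), atoms of \(\neg\varphi\) coinciding with those of \(\varphi\)) are exactly the right ones and all hold by the paper's definitions, so your sketch fills in to a complete proof.
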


Actually,  any operation that adds precisely the equivalence class of $M$ modulo the literals  is equivalent to `$\Ex$'.
\textcolor{black}{In the following, we write $\Ex^*(\varphi, M)$ to refer to an
arbitrary finite base expansion function of the form $\Ex^* : \llang \times
\mUni
    \mapsto \llang$.}

\begin{restatable}{theorem}{expModelConverse}
\label{expModelConverse}
For every   %all model change operation
 $\Ex^*$,  if $\modelsof{\Ex^*(\varphi, M)}
= \modelsof{\varphi} \cup \liteqc{M}{\alcformula}$ then  
\begin{itemize}
    \item[] (i) $\Ex^*(\varphi,M) \equiv \varphi$, if $M \models \varphi$; and
    \item[] (ii) $\Ex^*(\varphi,M) \equiv \varphi \lor \bigwedge qm(\neg \varphi, M)$, if $M \not \models \varphi$.
\end{itemize}
\end{restatable}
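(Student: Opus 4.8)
The plan is to reduce both claims to equalities between sets of models, using the standard fact that two \ALC-formulae are logically equivalent exactly when they have the same models, i.e.\ $\psi_1 \equiv \psi_2$ iff $\modelsof{\psi_1} = \modelsof{\psi_2}$. Throughout I fix $\varphi$, $M$ and the arbitrary expansion $\Ex^*$, and I assume the hypothesis $\modelsof{\Ex^*(\varphi, M)} = \modelsof{\varphi} \cup \liteqc{M}{\alcformula}$. The two conclusions correspond to the two cases of the definition of $\Ex$, so I would simply carry out the case split on whether $M \models \varphi$.

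Part (ii) is almost immediate and I would dispatch it first. When $M \not\models \varphi$, \cref{def:rev:ALC} gives $\Ex(\varphi, M) = \varphi \lor \bigwedge qm(\neg\varphi, M)$, and \cref{exp_model_union} gives $\modelsof{\Ex(\varphi,M)} = \modelsof{\varphi} \cup \liteqc{M}{\alcformula}$. Combining these with the hypothesis yields $\modelsof{\Ex^*(\varphi,M)} = \modelsof{\varphi \lor \bigwedge qm(\neg\varphi, M)}$, whence $\Ex^*(\varphi,M) \equiv \varphi \lor \bigwedge qm(\neg\varphi, M)$, as required.

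For part (i) I need the containment $\liteqc{M}{\alcformula} \subseteq \modelsof{\varphi}$ under the case assumption $M \models \varphi$. The key observation is that $\varphi$ itself lies in $\langlit{\varphi}$: by its grammar every \ALC-formula is a boolean combination (via $\neg$ and $\wedge$) of its atoms $\alpha$ of the form $C(a)$, $r(a,b)$, $(C=\top)$, and $\langlit{\varphi}$ is by definition the set of boolean combinations of exactly those atoms. Hence, for any $M'$ with $M' \equivlit M$, the fact that $M$ and $M'$ agree on every member of $\langlit{\varphi}$ forces $M' \models \varphi$ iff $M \models \varphi$. Since $M \models \varphi$, this gives $\liteqc{M}{\alcformula} \subseteq \modelsof{\varphi}$, so $\modelsof{\varphi} \cup \liteqc{M}{\alcformula} = \modelsof{\varphi}$. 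The hypothesis then reads $\modelsof{\Ex^*(\varphi,M)} = \modelsof{\varphi}$, giving $\Ex^*(\varphi,M) \equiv \varphi$.

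The only genuinely load-bearing step is the observation that $\varphi \in \langlit{\varphi}$, which is what collapses the union in the case $M \models \varphi$; everything else is bookkeeping with the semantic characterisation of equivalence and the already-established \cref{exp_model_union}. I would take care to state explicitly that $\equivlit$ preserves satisfaction of \emph{every} boolean combination of atoms of $\varphi$, not merely of the atoms in isolation, since that is precisely what lets $\varphi$ be treated as a member of the literal language. This is the point where an incautious argument could stumble by conflating agreement on the atoms with agreement on $\varphi$ itself; once that subtlety is made explicit, the remainder is routine.
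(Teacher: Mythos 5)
Your proposal is correct and takes essentially the same route as the paper: the same case split on whether \(M \models \varphi\), with case (i) resting on \(\liteqc{M}{\alcformula} \subseteq \modelsof{\varphi}\) (which the paper asserts outright and you justify, correctly, by observing that \(\varphi \in \langlit{\alcformula}\) and that \(\equivlit\) preserves satisfaction of all boolean combinations of the atoms) and case (ii) citing \cref{exp_model_union}, which the paper instead unfolds on the spot via \cref{lem:lit_equals_neg} --- the same computation, since that lemma is exactly what establishes \cref{exp_model_union} in the case \(M \not\models \varphi\), so there is no circularity. No gaps; the subtlety you flag about agreement on \(\varphi\) itself rather than merely on its atoms is precisely the step the paper leaves implicit.
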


Our next step is to investigate the rationality of `$\Ex^*$'.  As expected
adding the whole equivalence class of 
\(M\) with respect to \(\langlit{\alcformula}\)
does not come freely,  and some rationality postulates are captured,  while others are lost:   
%We investigate the rationality of $\Ex$:

\begin{restatable}{theorem}{expansioncharacterization}
\label{expReprALCf}
Let \(M\) be a model and \(\alcformula\) an \(\ALC\)-formula. 
    A finite base model function \(\gExp(\alcformula, M)\) is equivalent to \(\oExp(\alcformula,
    M)\) iff \textcolor{black}{$\gExp$} satisfies:
    \begin{description}
        \item[(success)] \(\modelm \in \modelsof{\gExp(\alcformula, \modelm)}\).
        \item[(persistence):] \(\modelsof{\alcformula} \subseteq
            \modelsof{\gExp(\alcformula, \modelm)} \).
        \item[(atomic temperance):] For all \(\mSet' \subseteq \mUni\), if
            \(\modelsof{\alcformula} \cup \liteqc{\modelm}{\alcformula} \subseteq \mSet' \subset
            \modelsof{\gExp(\alcformula, \modelm)} \cup \{\modelm\}\) then
            \(\mSet'\) is not finitely representable in \(\ALC\)-formula.
        \item[(atomic extensionality)] {if \(M' \equivlit M\) then
            $$\modelsof{\gExp(\alcformula, M)} = \modelsof{\gExp(\alcformula,
        M')}.$$}
    \end{description}
\end{restatable}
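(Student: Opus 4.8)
The plan is to collapse the biconditional into a single semantic identity. I claim that $\gExp \equiv \oExp$ holds exactly when $\modelsof{\gExp(\varphi, M)} = \modelsof{\varphi} \cup \liteqc{M}{\alcformula}$ for every $\varphi$ and $M$: equivalence of the two functions means $\modelsof{\gExp(\varphi, M)} = \modelsof{\oExp(\varphi, M)}$ for all inputs, and \Cref{exp_model_union} already computes $\modelsof{\oExp(\varphi, M)} = \modelsof{\varphi} \cup \liteqc{M}{\alcformula}$. So the whole theorem reduces to showing that $\gExp$ satisfies the four postulates if and only if its model set is precisely $\modelsof{\varphi} \cup \liteqc{M}{\alcformula}$, and the remaining work is to prove each direction of that equivalence.

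For the direction from equivalence to postulates I would assume the identity $\modelsof{\gExp(\varphi, M)} = \modelsof{\varphi} \cup \liteqc{M}{\alcformula}$ and verify each postulate by a short set-theoretic check. Success holds because $M \in \liteqc{M}{\alcformula}$; persistence is immediate from $\modelsof{\varphi} \subseteq \modelsof{\varphi} \cup \liteqc{M}{\alcformula}$; atomic extensionality follows since $M' \equivlit M$ forces $\liteqc{M'}{\alcformula} = \liteqc{M}{\alcformula}$; and atomic temperance holds vacuously, because $M \in \liteqc{M}{\alcformula}$ makes $\modelsof{\gExp(\varphi, M)} \cup \{M\}$ collapse to $\modelsof{\varphi} \cup \liteqc{M}{\alcformula}$, so no $\mSet'$ can lie strictly between that set and itself.

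The converse direction, from postulates to the identity, is the substantive part and splits into two inclusions. The inclusion $\modelsof{\varphi} \cup \liteqc{M}{\alcformula} \subseteq \modelsof{\gExp(\varphi, M)}$ follows by combining persistence (which yields $\modelsof{\varphi}$) with success and atomic extensionality: for any $M' \equivlit M$, atomic extensionality gives $\modelsof{\gExp(\varphi, M')} = \modelsof{\gExp(\varphi, M)}$ and success gives $M' \in \modelsof{\gExp(\varphi, M')}$, so the whole class $\liteqc{M}{\alcformula}$ lands inside $\modelsof{\gExp(\varphi, M)}$. For the reverse inclusion I would argue by contradiction: set $\mSet' = \modelsof{\varphi} \cup \liteqc{M}{\alcformula}$ and suppose some $N$ lies in $\modelsof{\gExp(\varphi, M)}$ but not in $\mSet'$. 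Then, using the inclusion just proved and $M \in \modelsof{\gExp(\varphi, M)}$, we obtain $\modelsof{\varphi} \cup \liteqc{M}{\alcformula} \subseteq \mSet' \subset \modelsof{\gExp(\varphi, M)} \cup \{M\}$, so atomic temperance forces $\mSet'$ to be not finitely representable. This contradicts \Cref{exp_model_union}, by which $\mSet' = \modelsof{\oExp(\varphi, M)}$ is the model set of an $\ALC$-formula and hence finitely representable. Thus no such $N$ exists, the two inclusions yield the identity, and the identity together with \Cref{exp_model_union} gives $\modelsof{\gExp(\varphi, M)} = \modelsof{\oExp(\varphi, M)}$, i.e.\ $\gExp \equiv \oExp$.

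The main obstacle, and the only step that uses the structure of the logic rather than pure set theory, is this reverse inclusion: it hinges on choosing the witness $\mSet'$ to be exactly the intended model set and on knowing that this set is finitely representable. Recognising that finite representability is precisely the resource atomic temperance trades against, and that \Cref{exp_model_union} supplies it, is the crux of the argument; the remaining postulate verifications are routine once the semantic identity is in place.
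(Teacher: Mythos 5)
Your proposal is correct and follows essentially the same route as the paper's proof: reduce equivalence to the identity $\modelsof{\gExp(\alcformula, M)} = \modelsof{\alcformula} \cup \liteqc{M}{\alcformula}$ via \cref{exp_model_union}, verify the postulates set-theoretically in one direction, and in the converse combine persistence, success and atomic extensionality for one inclusion and invoke atomic temperance with the witness $\mSet' = \modelsof{\alcformula} \cup \liteqc{M}{\alcformula}$ (finitely representable by \cref{exp_model_union}) for the other. Your observation that atomic temperance holds \emph{vacuously} in the forward direction, because $M \in \liteqc{M}{\alcformula}$ collapses $\modelsof{\gExp(\alcformula, M)} \cup \{M\}$ to $\modelsof{\alcformula} \cup \liteqc{M}{\alcformula}$, is in fact a cleaner rendering of the paper's somewhat more convoluted contradiction argument at that step.
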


The postulates \textit{success} and \textit{persistence} come from requiring that
 $M$ will be absorbed, and that models will not be lost during an expansion. 
  %The other two postulates, \textit{atomic-extensionality} and \textit{fullness},  have the following intuition. 
  The \textit{atomic extensionality} postulate states that if two models satisfy exactly the same literals within $\varphi$, 
  then they should present the same results.  \textit{Atomic temperance} captures a 
  principle of minimality and guarantees that when adding $M$,  the loss 
  of information should be minimised. Precisely, the only formulae allowed to 
  be given up are those that are incompatible with $M$ modulo the literals of $\varphi$. 
  \Cref{exp_model_union} and \cref{expReprALCf} prove that the `$\Ex$' operation is characterized by the postulates:
  \textit{success,  persistence,  atomic temperance} and \textit{atomic
  extensionality}.

\section{Related Work}\label{sec:relatedWorks}

%Previously we mentioned that 
%Belief Change is a research discipline within AI that has significantly developed in the last decades\citep{fermeBook, ferme30}. 
%an area which has alreadydeveloped considerably.  
%
%It investigates how a rational agent should autonomously modify its body of knowledge when it faces new pieces of information.  An agent's knowledge is usually represented as a set of formulae specified in an underlying logic,  and there are two main ways of representing such sets: 
%An agent's knowledge is usually represented as a set of formulae specified in an underlying logic.  
%There are two main approaches to represent an agent's knowledge: 
In the foundational paradigm of Belief Change, the AGM theory, 
%a theory or belief set
%is a logically closed set of formulae % (known as a theory or belief set),
%initially proposed within the AGM paradigm; 
%and a base is a set of formulae where the logical
%closure is not required.   
%Theories are used to represent an idealisation
%of how the knowledge of an agent could be understood, while 
bases have been used in the
literature with two main purposes: as a finite representation of the knowledge of an agent
\citep{Nebel1991,Dixon1993},  and as a way of distinguishing agents knowledge explicitly~\citep{Hansson1994a}.
%and implicit knowledge \citep{Hansson1994a}.  
Even though the AGM theory cannot be directly applied to   DLs
 because most of these logics do not satisfy the prerequisites known
as the AGM-assumptions~\citep{Flouris2005}, it %.  Nevertheless, the theory 
has been  studied and adapted
to  DLs~\citep{Flouris2006,Ribeiro2008}. %, both for belief sets and bases
%
%In either way,  

The syntactic connectivity %between the formulae 
in a knowledge 
base has a strong consequence of how an agent should modify its knowledge% which yields rationality postulates
 ~\citep{Hansson1999}. % for a discussion in the topic.   
%Several approaches of how an agent's knowledge should be represented were already proposed,  but two 
  % In the AGM paradigm,  the main cornerstone of Belief Change,  the knowledge of an agent is represented as a logically closed set of formulae specified in a underlying logic.  Such a logic is required to satisfy some classical,  and yet limiting properties (see Section~\ref{sec:beliefChange} for a discussion of this topic).   
%due to the initial of the authors who coined 
%The foremost theory of Belief Change is the the \citep{AGM} paradigm,  originally proposed at is accepted as the 
%From the traditional approaches we highlight the initialparadigm, the AGM theory~\citep{Alchourron1985} which employed belief sets as a way to represent a body of knowledge and the Belief Base theory proposed byauthors such as \citet{Hansson1994a}.  
This sensitivity to syntax is also present in Ontology Repair and Evolution.  
%when one needs to modify an ontology  evolution, repair and ontology mergin
%Concerning Description Logic Ontologies, the 
Classical approaches preserve the syntactic form of the ontology as much as
possible~\citep{Kalyanpur2006,Suntisrivaraporn2009}. However, these approaches
may lead to drastic loss of information, as noticed by~\citet{Hansson1993}.
This problem has been studied in Belief Change for pseudo-contraction~\citep{Santos2018}.
%\nb{I think there is no pseud-revision yet}  
%
%There are already methods for performing contraction on bases which may addformulae, in what is called 
%
%Pseudo-contraction consists in expanding a knowledge base with implicit knowledge that the agent already believes in.  
%The contraction is then performed in this superset that explicitly contains more formulae but preserves the same collection of beliefs. 
% As long as this expansion  satisfies some  properties \citep{Santos2018}, the resulting pseudo-contraction preserves more information than it would in the classical way. %can achieve a 
%However, these
%It is important to stress that syntax sensitivity is still present in pseudo-contraction. % already said % does not abandon the syntactic form completely. Instead,  The idea is to restore information that would have been lost in a classical repair. 
In the same direction, \citet{Troquard2018} proposed the repair of DL ontologies by
weakening axioms using refinement operators.  
Building on this study, \citet{Baader2018} devised the theory of \emph{gentle repairs}, which also aims at
keeping most of the information within the ontology upon repair.  In fact,  
gentle repairs are closely related to pseudo-contractions~\citep{Matos2019}.

%\Citet{Katsuno1991} formalise traditional belief revision operations using a single
%formula to represent the whole belief set. This is possible because they only
%consider finitary propositional languages. The revision operation constructed
%satisfies all the AGM postulates for revision. Furthermore, they investigate how
%different proposals for revision functions relate to preorders between models.
%These proposals include Dalal's revision \cite{Dalal1989}, who also established
%the principle of ``Irrelevance of Syntax'' which states that the syntactical
%form of the body of knowledge should not affect the result of a belief change
%operation.

Other remarkable works in Belief Change in which the body of knowledge is
represented in a finite way include the formalisation of revision due to
\citet{Katsuno1991} and the base-generated operations by \citet{Hansson1996}. In
the former, \Citet{Katsuno1991} formalise traditional belief revision operations using a single
formula to represent the whole belief set. This is possible because they only
consider finitary propositional languages. \citeauthor{Hansson1996} provides
a characterisation of belief change operations over finite bases but restricted
for logics which satisfy all the AGM-assumptions (such as propositional classical logic).
%
%Although \citet{Katsuno1991} employs a finite representation, the computability
%of Belief Change is not one of the main motivations. \Citet{Nebel1991}
%mentions the particular case of finite bases which can be seen as propositional
%databases. \Citet{Hansson1996} defines belief change operations on belief sets
%by looking at bases which represent them. \citeauthor{Hansson1996} provides
%a characterisation of base-generated operations for classical propositional
%logic but just mentions that obtaining finite bases for belief sets is possible.
%Another approach to Belief Change focusing on finite bases is due to
%\citeauthor{Dixon1993}, which provide an implementation for FOL (giving up
%completeness of reasoners). They address other computational issues and provide
%a characterisation of the resulting functions using epistemic
%entrenchments (which can also be used to characterise AGM-style operations).
%
%It is also possible to have models as the representation of the agent's knowledge. 
\Citet{Guerra2019} develop operations for rational change where
an agent's knowledge or behaviour is given by a Kripke model. They also provide
two characterisations with AGM-style postulates.

\section{Conclusion and Future Work}
\label{sec:conc}

In this work,  we have introduced a new kind of belief change operation: belief change via models.  
 In our approach,  an agent is confronted with a new piece of information in the format of a finite model, 
  and it is compelled to modify its current epistemic state, 
   represented as a single finite formula,  either incorporating the new model,  called model expansion; or 
   removing it,  called model contraction.    %We show that in the classical propositional case,  model expansion and contraction is a trivial task and is equivalent to simply performing an expansion or contraction by a finite formula.  We have shown that, in the general non-classical case,  these operations are more complicated,  when one requires the new epistemic state to be finitely represented.   
The price for such finite representation is that the single input model cannot be removed or added alone, 
 and some other models must be added or removed as well. 
As future work, we will investigate model change operations in other DLs,
still taking into account finite representability. We will also explore
%while \textcolor{black}{keeping the representation finite} and the %	we also want  to investigate the
the effects of relaxing some constraints on Belief Base operations, allowing us to
rewrite axioms with different levels of preservation in the spirit of %same way as proposed
%by Ribeiro and Wassermann with 
Pseudo-Contractions,  %,  Baader et al.\ with 
 Gentle Repairs,  and %Trocquard et al.\ with Repair via 
 Axiom Weakening.

\bibliographystyle{plainnat}

\section*{Acknowledgements}

Part of this work has been done in the context of CEDAS (Center for Data Science, University of Bergen, Norway). 
The first author is supported by the German Research Association (DFG),  project number 424710479.
Guimarães is supported by the ERC project LOPRE (819416), led by Prof. Saket Saurabh.
Ozaki is supported by the Norwegian Research Council, grant number 316022.

\bibstyle{splncs04}
\bibliography{references.bib}

\appendix
\section{Proofs for Section~\ref{sec:alc-case}}

We have already given the syntax of \ALC-formulae in the main text and
we provide the semantics here for the convenience of the reader. 
The semantics is given by interpretations. As usual, an interpretation \Imc is a pair 
$(\Delta^\Imc,\cdot^\Imc)$ where $\Delta^\Imc$ is a countable non-empty set, called the \emph{domain of invidivuals}, and $\cdot^\Imc$ is a
%
 %\Imf, n  \models \neg \phi & \text{ iff not } \Imf, n  \models \phi, \\
	%~ \Imf, n  \models A(a) & \text{ iff } a^{\Imc} \in A^{\Imc_{n}},
		%~ & \Imf, n  \models \phi \land \psi & \text{ iff } \Imf, n  \models \phi 					\text{ and }\Imf, n  \models \psi, \\
%~ %	\Imf, n  \models C(a) & \text{ iff } a^{\Imc_{n}} \in C^{\Imc_{n}}
%~ %		& \Imf, n \models \Next \phi & \text{ iff } i+1\in I							\text{ and } \Imf, i+1 \models \p, \\ 
	%~ \Imf, n \models R(a,b) & \text{ iff } (a^{\Imc},b^{\Imc}) \in R^{\Imc_{n}},
		%~ & \Imf, n  \models \phi \U \psi & \text{ iff } \exists m \in I, m > n \colon \ \Imf, m 				\models \psi, \\
	%~ & & & \ \ \text{ and } \forall i \in (n,m) \colon \Imf, i \models \phi
 %~ \end{array}
%~ \]
%~ We say that $\varphi$ is \emph{satisfied in $\Imf$}, writing
%~ $\Imf \models \varphi$, if $\Imf,{0} \models \varphi$.
%~ %
%~ A formula $\p$ \emph{logically implies} a formula $\psi$ if every $\Imf$ that 
%~ satisfies $\p$ satisfies also $\psi$, and we write $\p \mdl \psi$.
%\nb{M: added this part to match with def. of $C$ satisfiable over finite traces w.r.t. $\p$ below}
%
%%Moreover, $\varphi$ is said to be \emph{$\TALCf$ satisfiable} (or \emph{satisfiable over finite traces}) 
%if it is satisfied in some finite trace. 
%
%Given an interpretation \(\I = (\Delta^\I, \cdot^\I)\), \(\Sigma_\I\) denotes
%the elements of its signature whose extension is non-empty, that is,
%\(\Sigma_\I = \{ X \text{ in the signature of } \I \mid X^\I \neq \emptyset
%\}\).
 function mapping each concept name $A\in\NC$ to a subset $A^\Imc$ of $\Delta^\Imc$
 and each role name $r\in\NR$ to a subset $r^\Imc$ of $\Delta^\Imc\times\Delta^\Imc$.
The interpretation of concepts  in \Imc is %: % defined as follows:
\begin{gather*}
\top^{\Imc} = \Delta^\Imc \quad (\neg C)^{\Imc} = \Delta^\Imc \setminus C^{\Imc} \quad 
(C \sqcap D)^{\Imc} = C^{\Imc} \cap D^{\Imc} \\
(\exists R.C)^{\Imc} = \{d \in \Delta^\Imc \mid \exists  d' \in C^{\Imc}: (d,d') \in R^{\Imc}\}.
\end{gather*}
The interpretation of formulae is as expected
\begin{gather*}
\Imc\models \neg (\varphi)\text{ iff not } \Imc\models \neg (\varphi) \quad\quad 
\Imc\models (\varphi \wedge \psi) \text{ iff } \Imc\models \varphi   \text{ and }\Imc\models  \psi \\
\Imc\models (C=\top)\text{ iff } C^\Imc=\top^\Imc 
\quad \Imc\models C(a) \text{ iff }a^\Imc\in C^\Imc
\quad \Imc\models r(a,b) \text{ iff } (a^\Imc,b^\Imc)\in r^\Imc. 
%(\exists R.C)^{\Imc} = \{d \in \Delta^\Imc \mid \exists  d' \in C^{\Imc}: (d,d') \in R^{\Imc}\}.
\end{gather*}

Formally, we inductively define the sets
%Defining 
$\formulas{\varphi}$ and $\concepts{\varphi}$   as follows.
\begin{definition}[Subformulae]
Given  an \ALC-formula $\varphi$, we have that % $\formulas{\varphi}$ is defined inductively as %the minimal set satisfying the following conditions:
%\begin{multicols}{2}
%\begin{itemize}
%	\item $\varphi \in \formulas{\varphi}$
%	\item If $\psi \land \psi' \in \formulas{\varphi}$ then $\psi, \psi' \in \formulas{\varphi}$
%	\item $\neg \psi \in \formulas{\varphi}$ iff $\psi \in \formulas{\varphi}$
%\end{itemize}
\begin{itemize}
	\item if $\varphi$ is atomic then $\formulas{\varphi} \coloneqq \{ \varphi, \neg \varphi\}$;
	\item $\formulas{\varphi \land \psi} \coloneqq \{ \varphi \land \psi,  \neg(\varphi \land \psi)\} \cup \formulas{\varphi} \cup \formulas{\psi}$;
%	If $\psi \land \psi' \in \formulas{\varphi}$ then $\psi, \psi' \in \formulas{\varphi}$
	\item $\formulas{\neg(\varphi \land \psi)} \coloneqq \formulas{\varphi \land \psi}$.
%	$\neg \psi \in \formulas{\varphi}$ iff $\psi \in \formulas{\varphi}$
\end{itemize}
\end{definition}
%\end{multicols}

%\textcolor{red}{does the closure on simple negation apply only to subconcepts or also subformulae?}

\begin{definition}[Subconcepts]
Given a an \ALC-formula $\varphi$,  $\concepts{\varphi}$ is the minimal set satisfying the following conditions:
\begin{itemize}
	\item if $\textcolor{black}{(C =\top)} \in \formulas{\varphi}$ then $ C\in \concepts{\varphi}$;
	\item if $C(a) \in \formulas{\varphi}$ then $C \in \concepts{\varphi}$;
	\item if $C \sqcap D \in \concepts{\varphi}$ then $C, D \in \concepts{\varphi}$;
	\item if $\exists r. C \in \concepts{\varphi}$ then $C \in \concepts{\varphi}$;
	\item $\neg C \in \concepts{\varphi}$ iff $C \in \concepts{\varphi}$.
	%\item \color{red} If $C \in \concepts{\varphi}$ and $C \neq \neg D$ for all \ALC -concept $D$	then  	$\neg C \in \concepts{\varphi}$ ???
\end{itemize}
\end{definition}

\begin{definition}
    Let \((T, o, \f)\) be a model candidate for \(\alcformula\). Then, the
    interpretation \(\I_{(T, o, \f)}\) is defined as:
    \begin{itemize}
\item $\Delta^\Imc \coloneqq T\cup\individuals{\varphi}$;
\item $a^\Imc \coloneqq a$, for all $a\in\individuals{\varphi}$;
\item $A^\Imc \coloneqq \{\c \in T\mid A\in\c\}\cup \{a\in\individuals{\varphi}\mid A\in o(a) \}$;
\item $(\c,\c')\in r^\Imc$ iff $\{\neg C\mid \neg\exists r.C \in\c\}\subseteq \c'$, for $\c,\c'\in T$; 
\item $(a,b)\in r^\Imc$ iff $r(a,b)\in \f$, for $a,b\in \individuals{\varphi}$; 
\item $(a,\c)\in r^\Imc$ iff $\{\neg C\mid \neg\exists r.C \in o(a)\}\subseteq \c$, for $a\in\individuals{\varphi}$ and $\c\in T$.
\end{itemize}
\end{definition}

\begin{example}
    Let \(\varphi = (C=\top) \land \neg (C(a) \land \neg \textcolor{black}{(r(a, b))})\).
    Then, we have: 
    \begin{align*}
        \formulas{\varphi} =&\{\varphi, \neg\varphi, C = \top, C \neq \top, \neg (C(a) \land (\neg r(a, b))),\\
                    {}    & \qquad C(a) \land (\neg r(a, b)),  C(a), \textcolor{black}{\neg (C(a))}, \neg \textcolor{black}{(r(a, b))}, 
                    r(a, b)\}
    \end{align*}

    In any quasimodel \((T, o, \f)\) for \(\varphi\), we have that \(\varphi \in
    \f\). However this also implies that \((C = \top), \neg (C(a) \land (\neg r(a,
    b))) \in \f\). Consequently \((C(a) \land (\neg r(a,
    b))) \not\in \f\) and thus, \(C(a) \not\in \f\) or \(\neg r(a, b) \not\in \f\).
    Hence, there are only three possible formula types for \(\f\):
    
    \begin{math}
        \f =
        \begin{cases}
            \{\varphi, C = \top, \neg (C(a) \land (\neg r(a, b))), C(a), r(a,
            b)\} & \text{ or }\\
            \{\varphi, C = \top, \neg (C(a) \land (\neg r(a, b))), \neg \textcolor{black}{(C(a))}, r(a, b)\} & \text{ or }\\
            \{\varphi, C = \top, \neg (C(a) \land (\neg r(a, b))), \neg \textcolor{black}{(C(a))}, \neg \textcolor{black}{(r(a, b))}\}
        \end{cases}
    \end{math}

    Assuming that for each of these possible formula types there is at least one
    quasimodel of \(\f\), we get that:
    \begin{align*}
        \varphi^\dagger \equiv & \left((C = \top) \land C(a) \land r(a,b) \right) \textcolor{black}{\lor} \\
        {} & \left((C = \top) \land \neg \textcolor{black}{(C(a))} \land r(a,b) \right) \textcolor{black}{\lor} \\
        {} & \left((C = \top) \land \neg \textcolor{black}{(C(a))} \land \neg \textcolor{black}{(r(a, b))} \right)
    \end{align*}

    It is easy to check that the formula above is equivalent to \(\varphi\).

\end{example}

%~ %\end{proof}
\begin{lemma}\label{obs:phi_sub_alpha}
\textcolor{black}{Let $\varphi,\phi$ be \ALC-formulae.}
If $\varphi \in \formulas{\phi}$ then $\formulas{\varphi} \subseteq \formulas{\phi}$. 
\end{lemma}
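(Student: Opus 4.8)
The plan is to prove this by induction on the structure of $\phi$, following the three clauses of the inductive definition of $\formulas{\cdot}$ (atomic, conjunction, negated conjunction). To make the induction well-founded I would measure $\phi$ by its total number of symbols and use strong induction, since in the clause $\formulas{\neg(\psi_1 \wedge \psi_2)} = \formulas{\psi_1 \wedge \psi_2}$ the right-hand argument is strictly smaller than $\neg(\psi_1 \wedge \psi_2)$ even though no top-level connective of $\phi$ is consumed. The driving observation is that in every clause $\formulas{\phi}$ is the union of $\{\phi,\neg\phi\}$ (the formula together with its single negation, collapsing double negations) with the sets $\formulas{\psi}$ of its strictly smaller immediate subformulae. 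Hence any $\varphi \in \formulas{\phi}$ is either $\phi$ itself, the single negation of $\phi$, or a member of $\formulas{\psi}$ for some smaller $\psi$.

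For the base case $\phi$ atomic, $\formulas{\phi} = \{\phi, \neg\phi\}$, so $\varphi$ is $\phi$ or $\neg\phi$. If $\varphi = \phi$ the inclusion is trivial; if $\varphi = \neg\phi$ then the single-negation convention gives $\formulas{\neg\phi} = \{\phi,\neg\phi\} = \formulas{\phi}$, so again $\formulas{\varphi} \subseteq \formulas{\phi}$. The same convention covers negated atoms, which the three explicit clauses do not treat separately: $\formulas{\neg\alpha} = \formulas{\alpha}$ for atomic $\alpha$.

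For the conjunction case $\phi = \psi_1 \wedge \psi_2$ we have $\formulas{\phi} = \{\phi,\neg\phi\} \cup \formulas{\psi_1} \cup \formulas{\psi_2}$. If $\varphi = \phi$ we are done immediately, and if $\varphi = \neg\phi = \neg(\psi_1 \wedge \psi_2)$ then the third clause yields $\formulas{\varphi} = \formulas{\psi_1 \wedge \psi_2} = \formulas{\phi}$. Otherwise $\varphi \in \formulas{\psi_i}$ for some $i$, and since $\psi_i$ is strictly smaller the induction hypothesis gives $\formulas{\varphi} \subseteq \formulas{\psi_i} \subseteq \formulas{\phi}$. The negated-conjunction case $\phi = \neg(\psi_1 \wedge \psi_2)$ reduces to the conjunction case: by the third clause $\formulas{\phi} = \formulas{\psi_1 \wedge \psi_2}$, the argument $\psi_1 \wedge \psi_2$ is strictly smaller, and applying the induction hypothesis to it directly gives $\formulas{\varphi} \subseteq \formulas{\psi_1 \wedge \psi_2} = \formulas{\phi}$ for any $\varphi \in \formulas{\phi}$.

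I expect the only real obstacle to be bookkeeping rather than mathematical depth: choosing a complexity measure under which the clause for $\neg(\psi_1 \wedge \psi_2)$ counts as a decrease, and pinning down how the single-negation closure acts on negated atoms and double negations so that $\formulas{\neg\psi} = \formulas{\psi}$ holds uniformly. Once that convention is fixed, every case collapses to either a trivial identity or a single application of the induction hypothesis.
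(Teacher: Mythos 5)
Your proof is correct and takes essentially the same route as the paper's: structural induction on $\phi$ with the identical three-way case split (atomic, $\psi \wedge \psi'$, $\neg(\psi \wedge \psi')$), the same use of the clause $\formulas{\neg(\psi \wedge \psi')} = \formulas{\psi \wedge \psi'}$, and the same reliance on the single-negation convention for the $\varphi = \neg\phi$ subcases. The only difference is bookkeeping: you induct on symbol count so the negated-conjunction case can invoke the hypothesis on $\psi_1 \wedge \psi_2$ directly, whereas the paper keeps purely structural induction (hypothesis on $\psi,\psi'$ only) and simply repeats the conjunction-case argument there.
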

\begin{proof}
%Let $\psi \in \formulas{\varphi}$, we will show that $\psi \in \formulas{\phi}$. 
The proof follows by induction in the structure of $\phi$.  
%\begin{itemize}
%	\item[] 
	\noindent\textbf{Base:} $\phi$ is atomic.  
	Then,  by construction $\formulas{\phi} = \{ \phi, \neg \phi\} $. Thus, if $\varphi \in \formulas{\phi}$ then 
	$\varphi = \phi$ or $\varphi = \neg \phi$.  In either case,  $\formulas{\varphi} = \{ \varphi,  \neg \varphi\}$ 
	which implies that $\formulas{\varphi} = \{ \phi, \neg \phi\}$.  Thus, $\formulas{\varphi} \subseteq \formulas{\phi}$.  
	
	\textcolor{black}{In the following, assume  that $\phi$ is not atomic.  }
	
	%Thus,  by construction $\formulas{\varphi} = \{ \varphi, \neg \varphi\}$.  By hypothesis, $\varphi \in \formulas{\phi}$ which implies from construction that  $\neg \varphi \in \formulas{\phi}$. Therefore,  $\formulas{\varphi} \subseteq \formulas{\phi}$. 
%\item[] 

	\noindent\textbf{Induction Hypothesis:} 
	by construction $\phi$ is defined as the conjunction of two formulae $\psi$ and $\psi'$ or the negation of such conjunction,  that is, $\phi = \psi \land \psi'$ or $\varphi = \neg(\psi \land \psi')$.  
	Let us assume that for all $\beta \in \{ \psi,  \psi'\}$,  if $\varphi \in \formulas{\beta}$ then $\formulas{\varphi} \subseteq \formulas{\beta}$. 
	%if   $\varphi \in \formulas{\phi}$ then $\formulas{\beta} \subseteq \formulas{\phi}$. 
%	\item[] 

		\noindent\textbf{Induction step:}  %by construction either
	consider the cases (i) $\phi = \psi \land \psi'$ and (ii)  $\phi = \neg(\psi \land \psi')$.   
	\begin{itemize}
		\item[] (i)  $\phi = \psi \land \psi'$. % From 
		\textcolor{black}{By} construction 
		\begin{equation}
	\formulas{\phi} = \formulas{\psi \land \psi'} = \{ \psi \land \psi', \neg(\psi \land \psi')\} \cup \formulas{\psi} \cup \formulas{\psi'}. \label{eq: alpha_sub}	
		\end{equation}

	Thus,  (a) $\varphi \in \{ \psi \land \psi', \neg(\psi \land \psi')\}  $	 or (b) $\varphi \in \formulas{\psi}$ or (c) $\varphi \in \formulas{\psi'}$. 
	\begin{itemize}
		\item[] (a) $\varphi \in \{ \psi \land \psi', \neg(\psi \land \psi')\}$.  Thus, 
		 either $\varphi =  \psi \land \psi'$ or $ \varphi = \neg (\psi \land \psi')$.  For 
		 $\varphi =  \psi \land \psi'$, we get that $\formulas{\varphi} = \formulas{\psi \land \psi'} = \formulas{\phi}$ 
		 which means that $\formulas{\varphi} \subseteq \formulas{\phi}$. For $\varphi = \neg(\psi \land \psi')$, 
		 we get that $\formulas{\varphi} = \formulas{\neg(\psi \land \psi')}$. %From 
		 \textcolor{black}{By} construction,  $\formulas{\neg(\psi \land \psi')} = \formulas{\psi \land \psi'}$. 
		 Therefore,  $\formulas{\varphi} = \formulas{\psi \land \psi'} = \formulas{\phi}$ and so $\formulas{\varphi} \subseteq \formulas{\phi}$. 
		\item[] (b) $\varphi \in \formulas{\psi}$.   %From induction hypothesis, 
		\textcolor{black}{By the inductive hypothesis,} 
%		$\formulas(\varphi) \subseteq \formulas{\psi}$. 
		\textcolor{black}{$\formulas{\varphi} \subseteq \formulas{\psi}$}. 
		 From \eqref{eq: alpha_sub},   $\formulas{\psi} \subseteq \formulas{\phi}$. So, $\formulas{\varphi} \subseteq \formulas{\phi}$.
		\item[] (c) $\varphi \in \formulas{\psi'}$.  Analogous to item (b). 
	\end{itemize}		
		
%		From hypothesis, $\varphi \in \formulas{\phi}$ which means that $\varphi = \psi \land \psi' \in \formulas{\phi}$.  This implies from construction that: (a) 
%			
%		$\psi, \psi' \in \formulas{\phi}$ 
		\item[] (ii) $\phi = \neg(\psi \land \psi')$.  %From 
		\textcolor{black}{By} construction, $\formulas{\neg(\psi \land \psi')} = \formulas{\psi \land \psi'}$.
		So, $\formulas{\phi}  = %\formulas{\neg(\psi \land \psi')} \nonumber \\ 					& = 
	\formulas{\psi \land \psi'} \nonumber 
					 =  \{ \psi \land \psi', \neg(\psi \land \psi')\} \cup
                     \formulas{\psi} \cup \formulas{\psi'}$. 
		Proof proceeds as in item (i).  
	\end{itemize}
%\end{itemize}
\end{proof}

%\end{observation}

\begin{lemma}\label{lem:interftype}
For every  $\ALC$-formula $\phi$  and formula type $\f$ \textcolor{black}{for} $\phi$,  if $\phi, \varphi \in \f$ 
then $\f \cap \formulas{\varphi}$ is a formula type \textcolor{black}{for} $\varphi$. 
\end{lemma}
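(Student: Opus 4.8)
The plan is to set $\f' \coloneqq \f \cap \formulas{\varphi}$ and verify the two defining conditions of a formula type for $\varphi$ directly, transferring them from the assumption that $\f$ is a formula type for $\phi$. The single external tool I need is \Cref{obs:phi_sub_alpha}: since $\varphi \in \f \subseteq \formulas{\phi}$, it yields $\formulas{\varphi} \subseteq \formulas{\phi}$. This has two immediate consequences. First, $\f' = \f \cap \formulas{\varphi} \subseteq \formulas{\varphi}$, so $\f'$ is a subset of $\formulas{\varphi}$ as required. Second, every formula occurring in $\formulas{\varphi}$ also occurs in $\formulas{\phi}$, so I may freely apply to any such formula the two formula-type conditions that $\f$ enjoys as a type for $\phi$.

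Before checking the conditions I would record two structural facts about $\formulas{\varphi}$, both read off the inductive definition of $\formulas{\cdot}$ and, for the second, from \Cref{obs:phi_sub_alpha} once more. (a) $\formulas{\varphi}$ is closed under single negation: for every $\chi \in \formulas{\varphi}$ its single-negation complement is again in $\formulas{\varphi}$; this follows by a straightforward induction on the structure of $\varphi$. (b) $\formulas{\varphi}$ is closed under immediate subformulae: if $\chi \wedge \psi \in \formulas{\varphi}$ then $\formulas{\chi \wedge \psi} \subseteq \formulas{\varphi}$ by \Cref{obs:phi_sub_alpha}, whence $\chi, \psi \in \formulas{\varphi}$.

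For the negation condition, I take any $\chi \in \formulas{\varphi}$ and show $\chi \in \f'$ iff $\neg \chi \notin \f'$. By (a) both $\chi$ and $\neg \chi$ lie in $\formulas{\varphi}$, so membership of either in $\f'$ is equivalent to membership in $\f$. Since $\chi \in \formulas{\varphi} \subseteq \formulas{\phi}$ and $\f$ is a formula type for $\phi$, I have $\chi \in \f$ iff $\neg \chi \notin \f$; chaining these equivalences gives $\chi \in \f'$ iff $\neg \chi \notin \f'$. The conjunction condition is handled identically: for $\chi \wedge \psi \in \formulas{\varphi}$, fact (b) guarantees $\chi, \psi \in \formulas{\varphi}$, so $\{\chi,\psi\} \subseteq \f'$ iff $\{\chi,\psi\} \subseteq \f$, while $\chi \wedge \psi \in \f'$ iff $\chi \wedge \psi \in \f$; the conjunction condition of $\f$ for $\phi$ (applicable as $\chi \wedge \psi \in \formulas{\phi}$) then closes the equivalence.

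I do not expect a genuine obstacle, as the argument is essentially bookkeeping. The one point deserving care is that intersecting with $\formulas{\varphi}$ must not discard any witness required by the formula-type conditions: this is precisely what facts (a) and (b) secure, and both ultimately rest on \Cref{obs:phi_sub_alpha}. I would also note in passing that the hypothesis $\phi \in \f$ is not needed for the argument; only $\varphi \in \f$, used to invoke \Cref{obs:phi_sub_alpha}, plays a role.
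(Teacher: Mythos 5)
Your proposal is correct and takes essentially the same route as the paper's proof: both hinge on \Cref{obs:phi_sub_alpha} to obtain $\formulas{\varphi} \subseteq \formulas{\phi}$ and then transfer the two formula-type conditions from $\f$ to $\f \cap \formulas{\varphi}$, the only difference being that the paper phrases the verification as a proof by contradiction while you check the conditions directly via equivalence chains. Your closing observation is also accurate for the paper's argument: its proof, too, uses only $\varphi \in \f$ (to place $\varphi$ in $\formulas{\phi}$), never the hypothesis $\phi \in \f$.
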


\begin{proof}
\textcolor{black}{Let $\f_{\phi}$ be a fixed but arbitrary formula type for $\phi$ with  $\phi\in \f_{\phi}$.}
%Let $\f = \f_{\phi} \cap \formulas{\varphi}$.  
 We will show that $\f \coloneqq \f_{\phi} \cap \formulas{\varphi}$ is a formula type \textcolor{black}{for $\varphi$}.  
 Suppose for contradiction that $\f$ is not a formula type for $\varphi$.  Thus,  as $\f \subseteq \formulas{\varphi}$,  
 %we get that 
 either condition (1) or (2) of the formula type definition is violated:
\begin{enumerate}
	\item There are formulae $\psi, \neg \psi \in \formulas{\varphi}$ such that either (a) $\psi \not \in \f$ and $\neg \psi \not \in \f$, or (b) $\psi, \neg \psi \in \f$. 
	\begin{enumerate}
		\item[] (a) $\psi \not \in \f$ and $\neg \psi \not \in \f$.
            \textcolor{black}{By} hypothesis,  $\varphi \in \f_{\phi}$. Thus, as
            $\f = \f_{\phi} \cap \formulas{\varphi}$, and by construction
            $\varphi \in \formulas{\varphi}$, we get that $\varphi \in \f$.
            Since $\f_{\phi}$ is a formula type,  we have that 
            for all \(\psi' \in \formulas{\phi}\),  \(\psi' \in \f_{\phi}\) iff
            \(\neg \psi' \not \in \f_{\phi}\). 
	As $\varphi \in \f_{\phi} \subseteq \formulas{\phi}$, it follows from Lemma~\ref{obs:phi_sub_alpha} that 
	$\formulas{\varphi} \subseteq \formulas{\phi}$. Therefore, for all \(\psi' \in \formulas{\varphi}\), \(\psi' \in \f_{\phi}\)
    iff \(\neg \psi' \not \in \f_{\phi}\). 
By hypothesis,  $\neg \psi, \psi \in \formulas{\varphi}$ which % jointly with $ \formulas{\varphi}\subseteq \formulas{\phi}$ 
implies from above that either:
\begin{align}
\psi \in \f_{\phi} \mbox{ and } \neg \psi \not \in \f_{\phi},   \mbox{ or } \psi \not \in \f_{\phi} \mbox{ and } \neg \psi \in \f_{\phi}. \label{eq: falpha_con}
\end{align}

	By hypothesis,   $\neg \psi, \psi \in \formulas{\varphi}$ but $\neg \psi,
    \psi  \not \in \f$. Thus, as $\f = \f_{\phi} \cap \formulas{\varphi}$, we
    get $\neg \psi, \psi \not \in \f_{\phi}$, contradicting \eqref{eq: falpha_con}. 
		
		\item[] \textbf{(b) $\psi, \neg \psi \in \f$.  }By hypothesis, $\f_{\phi}$ is a formula type which implies that for all $\psi' \in \f_{\phi}$, $\psi', \neg \psi' \not \in \f_{\phi}$. Therefore, as $\f \subseteq \f_{\phi}$, we get that $\psi, \neg \psi \not \in \f$, a contradiction. 
		
\end{enumerate}	 
	\item Let $\psi \land \psi' \in \formulas{\varphi}$.  We will show that $\psi \land \psi' \in \f$ 
	iff $\{ \psi, \psi' \} \subseteq \f$ which contradicts the hypothesis that
    condition (2) from the formula type definition is violated.  We split the
    proof in two cases: either (a) $ \psi \land \psi' \in \f$ or (b) $\psi \land
    \psi' \not \in \f$.
	%\begin{itemize}
%		\item[] 
		If $ \psi \land \psi' \in \f$,  as $\f = \f_{\phi} \cap
            \formulas{\varphi}$, we get that $\psi \land \psi' \in \f_{\phi}$.
            Since $\f_{\phi}$ is a formula type,  we have that $\{ \psi,
            \psi'\} \subseteq \f_{\phi}$.  By definition of
            $\formulas{\varphi}$, if $\psi \land \psi' \in \formulas{\varphi}$
            then $\{\psi, \psi'\} \subseteq \formulas{\varphi}$. Hence,  $\{ \psi, \psi'\} \in \f = \f_{\phi} \cap \formulas{\varphi}$.  
%		\item[] 

		Otherwise, $\psi \land \psi' \not \in \f$. As $\f = \f_{\phi} \cap \formulas{\varphi}$ and $\psi \land \psi' \in \formulas{\varphi}$, we get that $\psi \land \psi' \not \in \f_{\phi}$.  Thus, as $\f_{\phi}$ is a formula type,  we get that $\{ \psi, \psi'\} \not \subseteq \f_{\phi}$.  Therefore,  as $\f \subseteq \f_{\phi}$, we get that $\{ \psi, \psi'\} \not \subseteq \f$. 
	From (a) and (b) we conclude that $\psi \land \psi' \in \f$ iff $\{ \psi, \psi' \} \subseteq \f$. But 
	this contradicts the hypothesis that condition (2) from the formula type definition is violated. 
	
	%there is some formula $\varphi \land \psi \in \formulas{\varphi}$, such that either (a) $\varphi \land \psi \in \f$ but $\{\varphi, \psi\} \not \subseteq \f$ or (b)
\end{enumerate}
Therefore, we conclude that $\f$ is a formula type. 

\end{proof}

\begin{lemma}\label{lem:neg}
For every \textcolor{black}{$\ALC$-formula $\varphi$, 
%$\alpha$, $\formulas{\varphi} = \formulas{\neg \alpha}$
$\formulas{\varphi} = \formulas{\neg \varphi}$}
\footnote{
\textcolor{black}{
We silently remove double negation and treat $\neg\neg\phi$ as equal to $\phi$.}}.
\end{lemma}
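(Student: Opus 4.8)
The plan is to prove the identity by a direct case analysis on the outermost syntactic shape of $\varphi$, rather than by a full structural induction: in each case the definition of $\formulas{\cdot}$ already expresses $\formulas{\neg\varphi}$ in terms of $\formulas{\varphi}$ (or of a common subformula), so no genuine inductive hypothesis is needed. Invoking the double-negation convention from the footnote, every $\ALC$-formula $\varphi$ falls into exactly one of four shapes: an atom $\alpha$, a negated atom $\neg\alpha$, a conjunction $\psi \land \psi'$, or a negated conjunction $\neg(\psi \land \psi')$.

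For the conjunction case $\varphi = \psi \land \psi'$ I would simply read off the third clause of the subformula definition, $\formulas{\neg(\psi \land \psi')} = \formulas{\psi \land \psi'}$, which is precisely $\formulas{\neg\varphi} = \formulas{\varphi}$. The negated-conjunction case $\varphi = \neg(\psi \land \psi')$ is symmetric: here $\formulas{\varphi} = \formulas{\psi \land \psi'}$ by the same clause, while $\neg\varphi = \psi \land \psi'$ after cancelling the double negation, so both sides again equal $\formulas{\psi \land \psi'}$.

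For the atomic cases I would use the base clause $\formulas{\alpha} = \{\alpha, \neg\alpha\}$. When $\varphi = \alpha$, its negation $\neg\alpha$ has subformula set $\{\neg\alpha, \neg\neg\alpha\} = \{\neg\alpha, \alpha\}$, which coincides with $\formulas{\alpha}$; the case $\varphi = \neg\alpha$ is the mirror image, using $\neg\neg\alpha = \alpha$.

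The main obstacle — really the only point needing care — is the treatment of literals: the subformula definition as stated lists only the atomic, conjunction, and negated-conjunction cases, so the negated-atom shape $\neg\alpha$ is given no explicit clause. I would resolve this by appealing to the double-negation convention to read $\formulas{\neg\alpha} := \formulas{\alpha} = \{\alpha, \neg\alpha\}$, which makes the base clause manifestly closed under single negation. With that reading fixed, the four cases above close the argument immediately, and \Cref{lem:neg} follows.
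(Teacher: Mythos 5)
Your proof is correct, but it is packaged differently from the paper's. The paper dispatches \cref{lem:neg} in two lines: it observes that $\formulas{\neg\varphi} = \formulas{\varphi} \cup \{\neg\varphi\}$ and then, since $\formulas{\varphi}$ is closed under single negation and contains $\varphi$ by construction, concludes that $\neg\varphi \in \formulas{\varphi}$, whence equality. You instead unfold the definition of $\formulas{\cdot}$ in an exhaustive four-way case split on the outermost shape of $\varphi$ (atom, negated atom, conjunction, negated conjunction), cancelling double negations by the footnote's convention. Both arguments rest on the same definitional facts --- chiefly the clause $\formulas{\neg(\psi \land \psi')} = \formulas{\psi \land \psi'}$ --- and neither needs genuine induction, so the difference is one of presentation rather than substance. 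What your version buys is precision on a real rough edge: the paper's definition of subformulae has no clause computing $\formulas{\neg\alpha}$ for a negated atom $\neg\alpha$, and the paper's identity $\formulas{\neg\varphi} = \formulas{\varphi} \cup \{\neg\varphi\}$ (introduced with ``we can see that'') silently presupposes a reading of that case; you name this gap and patch it by stipulating $\formulas{\neg\alpha} = \formulas{\alpha} = \{\alpha, \neg\alpha\}$, which agrees with the intended semantics of $\formulas{\cdot}$ as the set of all subformulae closed under single negation, once $\neg\neg\alpha$ is identified with $\alpha$. What the paper's version buys is brevity and uniformity: a single closure argument covering all shapes at once, at the cost of leaving the negated-atom case implicit.
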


\begin{proof}
\textcolor{black}{By construction $\varphi$ is a subformula of $\neg \varphi$. We can see that
$\formulas{\neg \varphi}=\formulas{\varphi} \cup \{\neg \varphi\}$.
Since $\formulas{\varphi}$ is closed under single negation
and, by construction, $\varphi\in \formulas{\varphi}$,
we have that $\neg \varphi\in \formulas{\varphi}$.
Thus, $\formulas{\varphi} = \formulas{\neg \varphi}$.}
%either (a) $\alpha$ is atomic,  or (b) $\alpha = \varphi \land \psi$ or (c) $\alpha = \neg(\varphi \land \psi)$:
%\begin{enumerate}%
%	\item[] (a) $\alpha$ is atomic.  Thus, by construction,  $\formulas{\alpha} = \{ \alpha, \neg \alpha\}$ and $\formulas{\neg \alpha} = \{ \neg \neg \alpha, \neg \alpha\} = \{ \alpha, \neg \alpha\}$. Thus,  $\formulas{\alpha} = \formulas{\neg \alpha}$.  
%	\item[] (b) $\alpha = \varphi \land \psi$.  By construction, $\formulas{\neg \alpha} = \formulas{\neg (\varphi \land \psi)} = \formulas{\alpha}$. 
%	\item[] (c) $\alpha = \neg(\varphi \land \psi)$.  By construction, $\formulas{\neg(\varphi \land \psi)} = \formulas{\varphi \land \psi}$. Note that $\neg \alpha = \varphi \land \psi$. Therefore,  $\formulas{\alpha} = \formulas{\neg \alpha}$.  
%\end{enumerate}
\end{proof}

\begin{definition}
Let $\varphi$ be an $\ALC$-formula. The set of of formula types for $\varphi$ that has $\varphi$ is given by the set 
$$\tau(\varphi) = \{ \f \subseteq \formulas{\varphi} \mid \f \mbox{ is a formula type for } \varphi \mbox{ and } \varphi \in \f \}.$$
\end{definition}

\begin{lemma}\label{lem:observation}
For every $\ALC$-formula $\phi$  and formula type $\f$ for $\phi$,  if $\phi \in \f$ and 
 $\varphi \in \formulas{\phi}$ then $\f \cap \formulas{\varphi} \in \tau(\varphi) \cup \tau(\neg \varphi)$. 
\end{lemma}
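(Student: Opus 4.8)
The plan is to reduce the statement to the intersection lemma (\cref{lem:interftype}) by first determining, through the completeness condition of formula types, which of $\varphi$ and $\neg \varphi$ belongs to $\f$. The subtlety is that the hypothesis only provides $\varphi \in \formulas{\phi}$ and not $\varphi \in \f$, so \cref{lem:interftype} cannot be invoked directly; I first need to pin down which of the two formulae actually lies in $\f$. Since $\f$ is a formula type for $\phi$ and $\varphi \in \formulas{\phi}$, condition~(1) of the formula type definition forces exactly one of $\varphi \in \f$ or $\neg \varphi \in \f$. This dichotomy is precisely what matches the disjunction $\tau(\varphi) \cup \tau(\neg \varphi)$ in the conclusion, so I would split the argument into two cases.

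In the first case, $\varphi \in \f$. Together with the hypothesis $\phi \in \f$, this places us exactly in the situation of \cref{lem:interftype}, which yields that $\f \cap \formulas{\varphi}$ is a formula type for $\varphi$. Because $\varphi \in \f$ and $\varphi \in \formulas{\varphi}$ by construction, we also have $\varphi \in \f \cap \formulas{\varphi}$, so by definition of $\tau$ this intersection lies in $\tau(\varphi)$. In the second case, $\neg \varphi \in \f$, and I would apply \cref{lem:interftype} with $\neg \varphi$ in the role of $\varphi$: since $\phi, \neg \varphi \in \f$, the lemma gives that $\f \cap \formulas{\neg \varphi}$ is a formula type for $\neg \varphi$, and $\neg \varphi \in \f \cap \formulas{\neg \varphi}$, so $\f \cap \formulas{\neg \varphi} \in \tau(\neg \varphi)$. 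It then remains to transport this conclusion back to $\varphi$: by \cref{lem:neg} we have $\formulas{\varphi} = \formulas{\neg \varphi}$, whence $\f \cap \formulas{\varphi} = \f \cap \formulas{\neg \varphi} \in \tau(\neg \varphi)$. Combining both cases gives $\f \cap \formulas{\varphi} \in \tau(\varphi) \cup \tau(\neg \varphi)$.

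The argument becomes essentially routine once \cref{lem:interftype,lem:neg} are in place. The only point requiring care---the main obstacle---is the observation that $\varphi$ itself need not be a member of $\f$; this is what necessitates the case split on the membership of $\varphi$ versus $\neg \varphi$, and it is also why \cref{lem:neg} is needed to rewrite the second case in terms of $\formulas{\varphi}$ rather than $\formulas{\neg \varphi}$.
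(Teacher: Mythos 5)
Your proposal is correct and follows essentially the same route as the paper's own proof: the same case split on whether $\varphi \in \f$ or $\neg \varphi \in \f$ (forced by condition~(1) of the formula type definition), the same application of \cref{lem:interftype} in each case, and the same use of \cref{lem:neg} to rewrite $\f \cap \formulas{\neg \varphi}$ as $\f \cap \formulas{\varphi}$ in the second case. Your explicit remark that $\varphi$ itself need not belong to $\f$ --- which is why \cref{lem:interftype} cannot be invoked without the case split --- is a point the paper handles implicitly, and your write-up makes it clearer.
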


\begin{proof}
Let $\f_{\phi}$ be a fixed but arbitrary formula type  for $\phi$ with $\phi \in \f_{\phi}$. 
As $\f_{\phi}$ is a formula type (for $\phi$) and $\varphi \in \formulas{\phi}$, 
 either (i) $\varphi \in \f_{\phi}$ or $\neg \varphi \in \f_{\phi}$: 
\begin{enumerate}
	\item[] (i) $\varphi \in \f_{\phi}$.  Thus,  by Lemma~\ref{lem:interftype}, we have that 
	$\f_{\phi} \cap \formulas{\varphi}$ is a formula type of $\varphi$.  Also, $\varphi \in \f_{\phi} \cap \formulas{\varphi}$. 
	Therefore, $\f_{\phi} \cap \formulas{\varphi} \in \tau(\varphi)$ which means that $\f_{\phi} \cap \formulas{\varphi} \in \tau(\varphi) \cup \tau(\neg \varphi)$. 
	\item[] (ii) $\neg \varphi \in \f_{\phi}$.  Thus,  by Lemma~\ref{lem:interftype}, we have that 
	$\f_{\phi} \cap \formulas{\neg \varphi}$ is a formula type for $\neg \varphi$.  
	Also, $\neg \varphi \in \f_{\phi} \cap \formulas{\neg \varphi}$. Therefore, 
	$\f_{\phi} \cap \formulas{\neg \varphi} \in \tau(\neg \varphi)$ which means that 
	$\f_{\phi} \cap \formulas{\neg \varphi} \in \tau(\varphi) \cup \tau(\neg \varphi)$. 
	By Lemma~\ref{lem:neg}, we have that $\formulas{\varphi} = \formulas{\neg
    \varphi}$ which implies that $\f_{\phi} \cap \formulas{\neg \varphi}
    = \f_{\phi} \cap \formulas{\varphi}$.  Therefore,  $\f_{\phi} \cap
    \formulas{\varphi} \in \tau(\varphi) \cup \tau(\neg \varphi)$. 
\end{enumerate}
\end{proof}

\begin{lemma}\label{lem:subtypes}
For every $\ALC$-formula $\varphi$,   $\f \in \tau(\varphi)$ iff $\f$ is a formula type for $\varphi$ and
\begin{enumerate}
	\item if $\varphi$ is atomic then $\f = \{ \varphi\}$;
	\item if $\varphi = \psi \land \psi'$ then $\f = \{  \psi \land \psi'\} \cup \f_{\psi} \cup \f_{\psi'} $,  for some $\f_{\psi} \in \tau(\psi)$ and $\f_{\psi'} \in \tau(\psi')$;
	\item if $\varphi = \neg (\psi \land \psi')$ then $\f = \{  \neg(\psi \land \psi')\} \cup \f_{\psi} \cup \f_{\psi'} $,  for some 
	$\f_{\psi} \in \tau(\psi) \cup \tau(\neg \psi),  \f_{\psi'} \in \tau(\psi') \cup \tau(\neg \psi')$ such that either $\f_{\psi} \in \tau(\neg \psi)$ or $\f_{\psi'} \in \tau(\neg \psi')$.   
\end{enumerate}
\end{lemma}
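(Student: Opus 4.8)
The plan is to proceed by case analysis on the outermost form of $\varphi$, mirroring the inductive definition of $\formulas{\varphi}$: either $\varphi$ is atomic, $\varphi=\psi\land\psi'$, or $\varphi=\neg(\psi\land\psi')$ (double negations being collapsed by convention, so that $\neg\alpha$ for atomic $\alpha$ falls under the atomic case). In each case the biconditional must be shown, but one direction is immediate: each of the three displayed descriptions explicitly forces $\varphi$ into $\f$ (respectively as $\varphi$, as $\psi\land\psi'$, or as $\neg(\psi\land\psi')$), so under the standing hypothesis that $\f$ is a formula type for $\varphi$ we obtain $\f\in\tau(\varphi)$ directly from the definition of $\tau$. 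Hence all the content lies in the forward direction, where I assume $\f\in\tau(\varphi)$ and read off the structure. The atomic case is quickest: $\formulas{\varphi}=\{\varphi,\neg\varphi\}$, and since $\varphi\in\f$, condition~(1) of a formula type forces $\neg\varphi\notin\f$, so $\f=\{\varphi\}$.

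For $\varphi=\psi\land\psi'$ I would first extract the local consequences of $\varphi\in\f$. Condition~(2) applied to $\psi\land\psi'\in\f$ gives $\{\psi,\psi'\}\subseteq\f$, and condition~(1) gives $\neg(\psi\land\psi')\notin\f$. Since $\f\subseteq\formulas{\varphi}=\{\psi\land\psi',\neg(\psi\land\psi')\}\cup\formulas{\psi}\cup\formulas{\psi'}$ and $\neg(\psi\land\psi')\notin\f$, intersecting with $\f$ yields the set identity $\f=\{\psi\land\psi'\}\cup(\f\cap\formulas{\psi})\cup(\f\cap\formulas{\psi'})$. It then remains to identify the two pieces: putting $\f_\psi\coloneqq\f\cap\formulas{\psi}$, the fact that both $\psi\land\psi'\in\f$ and $\psi\in\f$ lets me invoke \cref{lem:interftype} to conclude that $\f_\psi$ is a formula type for $\psi$, and since $\psi\in\f_\psi$ we get $\f_\psi\in\tau(\psi)$; symmetrically $\f_{\psi'}\coloneqq\f\cap\formulas{\psi'}\in\tau(\psi')$. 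These are exactly the witnesses demanded by clause~(2).

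The case $\varphi=\neg(\psi\land\psi')$ is the most delicate. Here $\formulas{\varphi}=\formulas{\psi\land\psi'}$, and from $\neg(\psi\land\psi')\in\f$ condition~(1) gives $\psi\land\psi'\notin\f$, whence condition~(2) forces $\psi\notin\f$ or $\psi'\notin\f$. The same intersection argument as before gives $\f=\{\neg(\psi\land\psi')\}\cup(\f\cap\formulas{\psi})\cup(\f\cap\formulas{\psi'})$, and now \cref{lem:observation} (applied with the outer formula $\neg(\psi\land\psi')$, using $\psi,\psi'\in\formulas{\neg(\psi\land\psi')}$) places $\f_\psi\coloneqq\f\cap\formulas{\psi}\in\tau(\psi)\cup\tau(\neg\psi)$ and $\f_{\psi'}\coloneqq\f\cap\formulas{\psi'}\in\tau(\psi')\cup\tau(\neg\psi')$, where \cref{lem:neg} ($\formulas{\psi}=\formulas{\neg\psi}$) ensures these intersections are taken over the correct subformula sets.

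The step I expect to be the main obstacle is establishing the final disjunction ``$\f_\psi\in\tau(\neg\psi)$ or $\f_{\psi'}\in\tau(\neg\psi')$''. For this I would use the observation above that $\psi\notin\f$ or $\psi'\notin\f$, together with the fact that membership in $\tau(\psi)$ forces $\psi\in\f_\psi\subseteq\f$: if, say, $\psi\notin\f$, then $\f_\psi\notin\tau(\psi)$, so the disjunction $\f_\psi\in\tau(\psi)\cup\tau(\neg\psi)$ collapses to $\f_\psi\in\tau(\neg\psi)$, and symmetrically for $\psi'$; since at least one of $\psi,\psi'$ is absent from $\f$, the required disjunction follows. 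Beyond this, the only bookkeeping to watch is that the three displayed decompositions genuinely exhaust $\f$ (which follows from $\f\subseteq\formulas{\varphi}$ together with the exclusion of the negated top-level formula) and that the three structural cases are exhaustive under the normal-form conventions used for $\formulas{\cdot}$.
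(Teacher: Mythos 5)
Your proof is correct and takes essentially the same route as the paper's: the backward direction dismissed as trivial, the forward direction by case analysis on the shape of $\varphi$, with the identical witnesses $\f_\psi = \f\cap\formulas{\psi}$ and $\f_{\psi'} = \f\cap\formulas{\psi'}$ and the same appeals to Lemmas~\ref{lem:interftype}, \ref{lem:observation} and~\ref{lem:neg}. The only cosmetic difference is in the negated-conjunction case, where you obtain the final disjunction by eliminating the $\tau(\psi)$ alternative in Lemma~\ref{lem:observation}'s conclusion via $\psi\notin\f$, whereas the paper derives $\neg\psi\in\f$ from condition~(1) of formula types and re-applies Lemma~\ref{lem:interftype} to land directly in $\tau(\neg\psi)$; the two arguments are interchangeable.
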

\begin{proof}
The direction ``$\Leftarrow$'' is trivial, so we focus only on the ``$\Rightarrow$''  direction.  Let $\f \in \tau(\varphi)$.  
Thus,  $\varphi \in \f$ and $\f$ is a formula type for $\varphi$. By construction,  (I) either $\varphi$ is atomic or (II) $\varphi = \psi \land \psi'$ or (III) $\varphi = \neg (\psi \land \psi')$:
\begin{enumerate}
	\item[] (I) $\varphi$ is atomic.  Thus, by construction $\f = \{ \varphi\}$ or $\f = \{ \neg \varphi\}$.  % By hypothesis $\f \in \tau(\varphi)$ which implies that $\varphi \in \f$. Therefore, 
	Thus,  as $\varphi \in \f$, we get  $\f = \{  \varphi\}$.  
	\item[] (II) $\varphi = \psi \land \psi'$.   As $\varphi \in \f$,  
	we get that $\psi \land \psi' \in \f$.  Moreover,  as $\f$ is a formula type for
	 $\varphi$ and $\psi \land \psi' \in \f$, it follows that $\psi, \psi' \in \f$.  

     Let %us make
	\begin{align*}
	\f_{\psi} \coloneqq \f \cap \formulas{\psi} & \mbox{ and } \f_{\psi'} \coloneqq \f \cap \formulas{\psi'}. 
	\end{align*}
	
	 As $\psi, \psi' \in \f $ and $\f$ is a formula type for $\varphi = \psi \land \psi'$,  by 
	 Lemma~\ref{lem:interftype}, $ \f_{\psi} = \f \cap \formulas{\psi} $ is a formula type for $\psi$ 
	 and $ \f_{\psi'} = \f \cap \formulas{\psi'} $ is a formula type for $\psi'$.  We have that 
	 $\psi \in \formulas{\psi}$ and $\psi' \in \formulas{\psi'}$ which means that  
	 $\psi \in \f_{\psi}$ and $\psi' \in \f_{\psi'}$.  Thus,  $\f_{\psi} \in \tau(\psi)$ and $\f_{\psi'} \in \tau(\psi')$.  
	  We still need to show that $\f = \{  \psi \land \psi'\} \cup \f_{\psi} \cup \f_{\psi'} $. For this, we will show that
		(i) $\f \subseteq \{  \psi \land \psi'\} \cup \f_{\psi} \cup \f_{\psi'} $ and (ii) 
		$\{  \psi \land \psi'\} \cup \f_{\psi} \cup \f_{\psi'} \subseteq \f$. The case (ii) is trivial, 
		so we focus only on case (i).  Let $\phi \in \f$.  As $\f$ is a formula type for $\varphi = \psi \land \psi'$, we get that 
		$$ \phi \in \f \subseteq \formulas{\psi \land \psi'} = \{ \psi \land \psi', \neg (\psi \land \psi')\} 
		\cup \formulas{\psi} \cup \formulas{\psi'}. $$ 
%		As $\f$ is a  formula type and $\psi \land \psi' \in \f$, we get that $\neg (\psi \land \psi') \not \in \f$. Therefore, 
%		$$ \phi \in \f \subseteq \formulas{\psi \land \psi'} = \{ \psi \land \psi'\} \cup \formulas{\psi} \cup \formulas{\psi'}. $$ 
		%
		Therefore,   (a) $\phi \in \{ \psi \land \psi', \neg (\psi \land \psi')\}$ or (b)
		 $\phi \in \formulas{\psi}$ or (c) $\phi \in \formulas{\psi'}$. 
		\begin{itemize}
			\item[] (a)  $\phi \in \{ \psi \land \psi', \neg (\psi \land
                \psi')\}$.  As $\f$ is a  formula type and $\varphi = \psi \land
                \psi' \in \f$, we get that $\neg (\psi \land \psi') \not \in
                \f$.  Therefore,  as $\phi \in \f$, we have that $\phi \neq \neg(\psi
                \land \psi')$. Hence, $\phi = \psi \land \psi'$,  which implies that 
			%From hypothesis, $\varphi = \psi \land \psi' \in \f$. Therefore,  
			$\phi \in \{ \psi \land \psi'\} \cup \f_{\psi} \cup \f_{\psi'}$.  
			\item[] (b)  $\phi \in \formulas{\psi}$.  Thus,  as $\phi \in \f$, we get that 
			$\phi \in \f_{\psi} = \f \cap \formulas{\psi}$ which implies  that $\phi \in \{ \psi \land \psi'\} 
			\cup \f_{\psi} \cup \f_{\psi'}$.
			\item[] (c) $\phi \in \formulas{\psi'}$.  Thus,  as $\phi \in \f$, we get that $\phi \in \f_{\psi'} = \f \cap \formulas{\psi'}$ which implies  that $\phi \in \{ \psi \land \psi'\} \cup \f_{\psi} \cup \f_{\psi'}$.  
		\end{itemize}
		Thus,  $\phi \in \{ \psi \land \psi'\} \cup \f_{\psi} \cup \f_{\psi'}$.  
	
	\item[] (III) $\varphi = \neg (\psi \land \psi')$.   As $\varphi \in \f$,  we get that $\neg(\psi \land \psi') \in \f$.  
	Let %us make,  
	\begin{align*}
	\f_{\psi} \coloneqq \f \cap \formulas{\neg \psi} & \mbox{ and } \f_{\psi'} \coloneqq \f \cap \formulas{\neg \psi'}. 
	\end{align*}
	
	As $\neg \psi, \neg \psi' \in \formulas{\varphi = \neg(\psi \land \psi')}$,  by Lemma~\ref{lem:observation}, we have that 
	$$ \f_{\psi} \in \tau(\psi) \cup \tau(\neg \psi) \mbox{ and } \f_{\psi'} \in \tau(\psi') \cup \tau(\neg \psi'). $$
%We will show that $\f = \{  \neg(\psi \land \psi')\} \cup \f_{\psi} \cup \f_{\psi'} $, 

Moreover,  as $\f$ is a formula type for $\varphi$ and $\varphi = \neg(\psi \land \psi') \in \f$, it 
follows that $\psi\land \psi' \not \in \f$. Therefore, $\{ \psi, \psi'\} \not \subseteq \f$.   
Thus, either $\psi \not \in \f$ or $\psi' \not \in \f$.  Thus,  as $\f$ is a formula type,  
either (i) $\neg \psi \in \f$ or (ii) $\neg \psi' \in \f$.  	
\begin{itemize}
	\item[] (i) $\neg \psi \in \f$.   Thus,  as $\f$ is a formula type for $\varphi = \neg (\psi \land \psi')$,  
	by Lemma~\ref{lem:interftype}, $ \f_{\psi} = \f \cap \formulas{\neg \psi} $ is a formula type for $\neg \psi$.  
	We have that $\neg \psi \in \formulas{\neg \psi}$. So  $\neg \psi \in \f_{\psi}$.  Thus,  
	$\f_{\psi} \in \tau(\neg \psi)$.   
	
	\item[] (ii) $\neg \psi' \in \f$.  	Analogously to item (i), we get that  $\f_{\psi'} \in \tau(\neg \psi')$.   
\end{itemize}
Thus,
$$ \f_{\psi} \in \tau(\neg \psi) \mbox{ or } \f_{\psi'} \in \tau(\neg \psi').$$

We still need to show that $\f = \{  \neg(\psi \land \psi')\} \cup \f_{\psi} \cup \f_{\psi'} $.  For this we need to show that (i)  $\f \subseteq \{  \neg(\psi \land \psi')\} \cup \f_{\psi} \cup \f_{\psi'} $ and (ii) $\{  \neg(\psi \land \psi')\} \cup \f_{\psi} \cup \f_{\psi'} \subseteq \f$.  The case (ii) is trivial.  So we focus only on case (i).  
	
Let $\phi \in \f$.  As $\f$ is a formula type for $\varphi = \neg(\psi \land \psi')$, we get that 
\begin{align*}
\phi \in \f \subseteq \formulas{\neg(\psi \land \psi')} & = \formulas{\psi \land \psi'}\\
	&= \{   \psi \land \psi',  \neg(\psi \land \psi') \} \cup \formulas{\psi} \cup \formulas{\psi'}. 
\end{align*}
		Therefore,   (a) $\phi \in   \{   \psi \land \psi',  \neg(\psi \land \psi') \}$ or (b) $\phi \in \formulas{\psi}$ or (c) $\phi \in \formulas{\psi'}$. 	
	\begin{itemize}
			\item[] (a)  $\phi \in \{ \psi \land \psi', \neg (\psi \land \psi')\}$.  As $\f$ is a  formula type and $\varphi = \neg(\psi \land \psi') \in \f$, we get that $ (\psi \land \psi') \not \in \f$.  Therefore,  as $\phi \in \f$, we have that $\phi \neq (\psi \land \psi')$. Therefore, $\phi = \neg (\psi \land \psi') $, which implies that 
			%From hypothesis, $\varphi = \neg( \psi \land \psi') \in \f$. Therefore,  
			$\phi \in \{ \neg(\psi \land \psi')\} \cup \f_{\psi} \cup \f_{\psi'}$.  
			\item[] (b)  $\phi \in \formulas{\psi}$.  By Lemma~\ref{lem:neg}, we get $\formulas{\psi} = \formulas{\neg \psi}$. Therefore, $\phi \in \formulas{\neg \psi}$. 
			Thus,  as $\phi \in \f$, we get that $\phi \in \f_{\psi} = \f \cap \formulas{\neg \psi}$ which implies  that $\phi \in \{ \neg(\psi \land \psi')\} \cup \f_{\psi} \cup \f_{\psi'}$
			\item[] (c) $\phi \in \formulas{\psi'}$.  Analogously to item (b), we get  %Thus,  as $\phi \in \f$, we get that $\phi \in \f_{\psi'} = \f \cap \formulas{\psi'}$ which implies  that 
			$\phi \in \{ \neg(\psi \land \psi')\} \cup \f_{\psi} \cup \f_{\psi'}$.  
		\end{itemize}
		Thus,  $\phi \in \{ \neg(\psi \land \psi')\} \cup \f_{\psi} \cup
        \f_{\psi'}$. 
\end{enumerate} 
\end{proof}

%\end{theorem}

%~ \begin{lemma}
%~ If $\f$ is a formula type for an \ALC-formula $\alpha$ and $\alpha \in \f$ then 
%~ \begin{enumerate}
	%~ \item If $\alpha = \varphi \land \psi$ then $\f = \{ \alpha\} \cup \f_{\varphi} \cup \f_{\psi}$, for some $\f_{\varphi} \in \tau(\varphi)$ and $\f_{\psi} \in \tau(\psi)$
	%~ \item If $\alpha = \neg( \varphi \land \psi)$ then $\f = \{ \alpha\} \cup \f_{\varphi} \cup \f_{\psi}$, for some $\f_{\varphi} \in \tau(\neg \varphi)$ and $\f_{\psi} \in \tau(\psi)$.	
%~ \end{enumerate}

%~ \end{lemma}
%Ana: changed def below
\begin{definition}[Formula degree]\textcolor{black}{
The degree of an $\ALC$-formula $\phi$, denoted $degree(\phi)$, is % as follows:
\begin{itemize}
	\item $1$ if $\phi$ is an atomic $\ALC$-formula; %, then $degree(\alpha) = 1$;
	\item $degree(\varphi) +1$ if $\phi=\neg \varphi$; and
	\item $degree(\varphi) + degree(\psi)$ if $\phi=\varphi \land \psi$.
%	\item Let $\varphi$ and $\psi$ be $\ALC$ formulae then
%	\begin{itemize}
%		\item $degree(\varphi \land \psi) = degree(\varphi) + degree(\psi)$
%		\item $degree(\neg \varphi) = degree(\varphi) +1$
%	\end{itemize}
\end{itemize}}
\end{definition}

%~ \begin{lemma}
%~ Given a formula $\alpha$,  a formula $\varphi \land \psi \in \formulas{\alpha}$ iff $\varphi, \psi \in \formulas{\alpha}$. 
%~ \end{lemma}

\begin{lemma}[\cite{Gabbay2003a}]
\label{IphiQMphi}
    If \(\I \models \alcformula\) then \(\qmOf{\alcformula}{\I}\) is
    a quasimodel for \(\alcformula\).
\end{lemma}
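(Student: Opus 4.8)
The plan is to check that the triple $\qmOf{\alcformula}{\I}=(T,o,\f)$ satisfies, in turn, every clause in the definitions of concept type, formula type, model candidate, and finally quasimodel. The single fact that drives every verification is the pair of equivalences furnished by \cref{def:qmOf}: for each $x\in\Delta^\I$ and each $C\in\concepts{\alcformula}$ we have $C\in c(x)$ iff $x\in C^\I$, and for each $\psi\in\formulas{\alcformula}$ we have $\psi\in\f$ iff $\I\models\psi$. Each syntactic clause is thus reduced to a semantic statement about $\I$, which then follows from the interpretation of the corresponding connective. Since $\Delta^\I$ is non-empty, $T$ is non-empty, which already settles the last quasimodel condition.

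First I would show $T$ is a set of concept types and $\f$ is a formula type. For $c(x)$: the clause $D\in c(x)$ iff $\neg D\notin c(x)$ unfolds to $x\in D^\I$ iff $x\notin(\neg D)^\I$, which holds because $(\neg D)^\I=\Delta^\I\setminus D^\I$; the clause for $\sqcap$ unfolds to $x\in(D\sqcap E)^\I$ iff $x\in D^\I$ and $x\in E^\I$, which is the interpretation of $\sqcap$. Closure of $\concepts{\alcformula}$ under single negation and subconcepts keeps the relevant concepts inside $\concepts{\alcformula}$. The formula-type clauses for $\f$ are entirely analogous, using $\I\models\neg\psi$ iff $\I\not\models\psi$ and $\I\models\psi\wedge\chi$ iff $\I\models\psi$ and $\I\models\chi$. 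That $o$ maps $\individuals{\alcformula}$ into $T$ is immediate, since $o(a)=c(a^\I)\in T$.

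Next come the three model-candidate conditions. The clause $\alcformula\in\f$ is precisely where the hypothesis $\I\models\alcformula$ is needed, together with $\alcformula\in\formulas{\alcformula}$. For $C(a)\in\f\Rightarrow C\in o(a)$, unfold $C(a)\in\f$ to $a^\I\in C^\I$, i.e.\ $C\in c(a^\I)=o(a)$. The condition for $r(a,b)$ is the first genuinely relational one: from $r(a,b)\in\f$ we get $(a^\I,b^\I)\in r^\I$, and for any $\neg\exists r.C\in o(a)$ the fact $a^\I\notin(\exists r.C)^\I$ forces $b^\I\notin C^\I$, hence $\neg C\in o(b)$.

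The main work is the existential-witness clause of the quasimodel definition, and I expect it to be the only step requiring care, since it is the one that genuinely uses the semantics of $\exists$ to produce a new type in $T$ rather than merely rephrasing membership. Given $\c=c(x)$ and $\exists r.D\in c(x)$, semantics yields some $d'\in D^\I$ with $(x,d')\in r^\I$, and I would take $\c'\coloneqq c(d')\in T$; then $D\in c(d')$, and for each $\neg\exists r.E\in c(x)$ the absence of any $E^\I$-successor of $x$ gives $d'\notin E^\I$, i.e.\ $\neg E\in c(d')$, so $\{D\}\cup\{\neg E\mid\neg\exists r.E\in c(x)\}\subseteq\c'$. The remaining two clauses are direct: if $\neg C\in c(x)$ then $x\notin C^\I$, so $C^\I\neq\top^\I$ and $(C=\top)\notin\f$ (trivially if $(C=\top)\notin\formulas{\alcformula}$, otherwise via $\I\not\models(C=\top)$); and if $\neg(C=\top)\in\f$ then $C^\I\neq\Delta^\I$, so some $x$ has $x\notin C^\I$, whence $C\notin c(x)$ with $c(x)\in T$. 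Everything outside the witness clause is bookkeeping on the two defining equivalences.
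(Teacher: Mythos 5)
Your proof is correct: each clause of the concept-type, formula-type, model-candidate, and quasimodel definitions reduces, via the two defining equivalences of \cref{def:qmOf}, to a semantic fact about \(\I\), and you handle the only nontrivial step (the existential-witness clause, where \(c(d')\) of an \(r\)-successor serves as \(\c'\)) exactly as the standard argument does. Note that the paper itself gives no proof of this lemma, citing \citet{Gabbay2003a} instead; your direct verification is precisely the routine check that citation stands for, so there is no divergence to report.
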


%\begin{lemma}
%If $\f$ is a formula type of a consistent formula $\varphi \land \psi$ and $\varphi \land psi\in \f$ then $\f =$
%\end{lemma}

To show Theorem~\ref{th:dagge_equals_phi}, we use Lemma~\ref{litToForm}. % (proved in the appendix).

\begin{restatable}{lemma}{litToFormlemma}
\label{litToForm}
Let $\varphi$ be an $\ALC$-formula. If $\f \in \tau(\varphi)$  then
$$\bigg( \bigwedge lit(\f)  \bigg) \models \varphi.$$
\end{restatable}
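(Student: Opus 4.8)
The plan is to induct on the degree of $\varphi$, using the structural description of $\tau(\varphi)$ supplied by Lemma~\ref{lem:subtypes}. The one elementary fact I rely on is monotonicity of conjunction: in every non-atomic case Lemma~\ref{lem:subtypes} presents $\f$ as a union that contains the sub-types $\f_{\psi}$ and $\f_{\psi'}$, so $lit(\f_{\psi}) \subseteq lit(\f)$ and $lit(\f_{\psi'}) \subseteq lit(\f)$; consequently every interpretation satisfying $\bigwedge lit(\f)$ also satisfies $\bigwedge lit(\f_{\psi})$ and $\bigwedge lit(\f_{\psi'})$. This is exactly what lets me lift an inductive hypothesis about a subformula's type up to $\f$ itself.

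For the base case, if $\varphi$ is atomic then Lemma~\ref{lem:subtypes} forces $\f = \{\varphi\}$; since an atomic formula is itself a literal, $lit(\f) = \{\varphi\}$ and $\bigwedge lit(\f) = \varphi \models \varphi$. For the step with $\varphi = \psi \land \psi'$, the same lemma gives $\f = \{\psi \land \psi'\} \cup \f_{\psi} \cup \f_{\psi'}$ with $\f_{\psi} \in \tau(\psi)$ and $\f_{\psi'} \in \tau(\psi')$. As $\psi$ and $\psi'$ each have strictly smaller degree, the inductive hypothesis yields $\bigwedge lit(\f_{\psi}) \models \psi$ and $\bigwedge lit(\f_{\psi'}) \models \psi'$; combined with the inclusions above, any model of $\bigwedge lit(\f)$ satisfies both conjuncts and hence $\varphi$.

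The interesting case is $\varphi = \neg(\psi \land \psi')$, where Lemma~\ref{lem:subtypes} guarantees that at least one of $\f_{\psi} \in \tau(\neg\psi)$ or $\f_{\psi'} \in \tau(\neg\psi')$ holds. By symmetry I may assume $\f_{\psi} \in \tau(\neg\psi)$. Applying the inductive hypothesis to $\neg\psi$ gives $\bigwedge lit(\f_{\psi}) \models \neg\psi$, and since $lit(\f_{\psi}) \subseteq lit(\f)$ this yields $\bigwedge lit(\f) \models \neg\psi$; as $\neg\psi \models \neg(\psi \land \psi') = \varphi$, the claim follows. The main point to get right here is that the induction must be on degree rather than on proper subformulae, because I invoke the hypothesis on $\neg\psi$, which is not a subformula of $\varphi$. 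This is legitimate precisely because $degree(\neg\psi) = degree(\psi)+1 < degree(\psi)+degree(\psi')+1 = degree(\varphi)$, using that $degree(\psi') \ge 1$; keeping this degree bookkeeping correct is the only subtlety I anticipate, since every remaining step is the routine monotonicity argument above.
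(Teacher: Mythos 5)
Your proof is correct and takes essentially the same route as the paper's: induction on the degree of $\varphi$, the case analysis supplied by \cref{lem:subtypes}, monotonicity of conjunction via $lit(\f_{\psi}) \subseteq lit(\f)$, and the identical degree bookkeeping ($degree(\neg\psi) = degree(\psi)+1 < degree(\varphi)$ because $degree(\psi') \geq 1$) to justify applying the inductive hypothesis to $\neg\psi$, which is exactly why the paper also inducts on degree rather than on subformulae. The only (trivial) omission is the case $\varphi = \neg\alpha$ with $\alpha$ atomic, which falls outside your three cases; the paper handles it separately, noting $\f = \{\neg\alpha\}$ so that $\bigwedge lit(\f) = \neg\alpha \models \varphi$ as in the base case.
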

%\litToFormlemma*
\begin{proof}
 The proof follows by induction in the degree of $\phi$. \\
 
 \textbf{Base:} $degree(\phi) = 1$.  Then $\phi$ is atomic. 
  This implies from Lemma~\ref{lem:subtypes} that $\f = \{ \phi\}$.  Thus,  $\bigwedge lit(\f) = \phi$. 
  For \ALC, this means that $\bigwedge lit(\f) \models \phi$.  \\
 % As $\ALC$ is monotonic, we have $\phi \models \phi$, that is, $\bigwedge lit(\f) \models \phi$.  \\ 
 
 \textbf{Induction Hypothesis:} For every formula $\varphi$,  and formula type $\f_{\varphi}$  for $\varphi$,  if $\varphi \in \f_{\varphi}$ and $degree(\varphi) < degree(\phi)$ then $\bigwedge lit(\f_\varphi) \models \varphi$. \\ 
 
 \textbf{Induction Step:} Let $degree(\phi) > 1$.  By construction, $\phi$ is of the form $\varphi \land \psi$ or $\neg \varphi$, for some $\ALC$-formulae $\varphi$ and $\psi$: 
 
 \begin{enumerate}
 	\item $\phi = \varphi \land \psi$.  Thus,  from Lemma~\ref{lem:subtypes},  
 	$$ \f = \{\varphi \land \psi\}\cup \f_{\varphi} \cup \f_{\psi}, \mbox{ such that } 
 	\f_{\varphi} \in \tau(\varphi), \f_{\psi} \in \tau(\psi).$$  	 	
	Note that $lit(\f) = lit(\f_{\varphi}) \cup lit(\f_{\psi})$. Therefore, 
	$$ \bigwedge lit(\f) = \bigg( \bigwedge lit(\f_{\varphi}) \bigg) \land \bigg( \bigwedge lit(\f_{\psi}) \bigg)$$

By the definition of degree, we get that 
$degree(\phi) = degree(\varphi \land \psi) = degree(\varphi) + degree(\psi) $ and $1 \leq degree(\varphi) $ and $1 \leq degree(\psi)$.  
Therefore,  $degree(\varphi) < degree(\phi)$ and $degree(\psi) < degree(\phi)$.  By the inductive hypothesis,
%we get then that 
$$  \bigwedge lit(\f_{\varphi}) \models \varphi \mbox{ and }  \bigwedge lit(\f_{\psi}) \models \psi. $$
Therefore, 
$$ \bigwedge lit(\f) =   \bigwedge lit(\f_{\varphi}) \land  \bigwedge lit(\f_{\psi}) \models \varphi \land \psi.$$
Thus,  as $\phi = \varphi \land \psi$ , we get 
$$ \bigwedge lit(\f) \models \phi.$$

	\item $\phi = \neg \varphi$.  By construction, either: (a) $\varphi$ is atomic,  or (b)$\varphi = \psi \land \psi'$.  
	\begin{enumerate}
		\item[(a)] 	$\varphi$ is atomic.  We get from Lemma~\ref{lem:subtypes} that $\f = \{ \neg \varphi \}$, which implies that $lit(\f) = \{ \neg \varphi \}$, and analogous to the base case, we get that $\bigwedge lit(\f) \models \neg \varphi$ that is, $\bigwedge lit(\f) \models \phi$. 
		\item[(b)] $\varphi = \psi \land \psi'$.   By Lemma~\ref{lem:subtypes}, we get that 
\begin{align}
\f = \{  \neg(\psi \land \psi')\} \cup \f_{\psi} \cup \f_{\psi'},  \label{eq:lem_wedge_case2}
\end{align}		
where 
	$\f_{\psi} \in \tau(\psi) \cup \tau(\neg \psi),  \f_{\psi'} \in \tau(\psi') \cup \tau(\neg \psi')$ such that either 
	
	$$\f_{\psi} \in \tau(\neg \psi) \mbox{ or }\f_{\psi'} \in \tau(\neg \psi'). $$	 
	
\begin{enumerate}
	\item  $\f_{\psi} \in \tau(\neg \psi)$.  	
	From definition of degree,  we get that 	
$$degree(\phi) = degree(\neg (\psi \land \psi')) = degree(\psi) + degree(\psi') + 1,  $$
and 	$degree(\psi) \geq 1$ and $degree(\psi') \geq 1 $ and  $degree(\neg \psi) = degree(\psi) +1$. Thus, 
$ degree(\phi) = degree(\neg (\psi \land \psi')) = degree(\neg \psi) + degree(\psi')$. Thus, as $degree(\psi') \geq 1$ we get 
 $$ degree(\neg \psi) < degree(\phi).  $$ 
 Thus,  by the inductive hypothesis,  $\bigwedge lit(\f_{\psi}) \models \neg \psi $.  Note that for every formula $\beta$,  $\neg \psi \models \neg(\psi \land \beta)$. Therefore,  for $\beta = \psi'$
 $$\bigwedge lit(\f_{\psi}) \models \neg (\psi \land \psi')$$

From \eqref{eq:lem_wedge_case2}, we get that 
$$ \bigwedge lit(\f) = \bigwedge lit(\f_{\psi}) \land \bigwedge lit(\f_{\psi'}).  $$
Thus,  as $\bigwedge lit(\f_{\psi}) \models \neg (\psi \land \psi')$, we get  that $ \bigwedge lit(\f_{\psi}) \land \bigwedge lit(\f_{\psi'}) \models \neg(\psi \land \psi')$ which implies from above that 
$ \bigwedge lit(\f) \models \neg(\psi \land \psi') $
that is, 
$$ \bigwedge lit(\f) \models \phi. $$
\item   $\f_{\psi} \in \tau(\neg \psi')$.  Analogous to item (i).   	
\end{enumerate}	
	\end{enumerate}
 \end{enumerate}
\end{proof}

\secondcon*

\begin{proof}
    If \(M \not\models \alcformula\), then \(\Con(\alcformula, M)
    = \Con'(\alcformula, M) = \alcformula\). Now, suppose that \(M \models
    \alcformula\). In this case, we know from \cref{IphiQMphi} that \((T', o',
    \f') = \qmOf{\alcformula}{M}\) is a quasimodel for \(\alcformula\).
    From \cref{th:dagge_equals_phi}, we know that:
    \begin{align*}
        \Con'(\alcformula, M) &\equiv \alcformula^\dagger \land \neg(\bigwedge lit(\f))\\
        {} &= \left(\bigvee_{(T,o, \f)\in \setqm{\alcformula}} (\bigwedge lit(\f))\right) \land \neg(\bigwedge lit(\f))
    \end{align*}

    For each \((T, o, \f)\) in \setqm{\alcformula} either \(lit(\f)
    = lit(\f')\) or \(lit(\f) \neq lit(\f')\). If \(lit(\f) = lit(\f')\), then:
    \[\bigwedge lit(\f) \land \neg(\bigwedge lit(\f')) \equiv \bot.\]
    Otherwise, we know that 
    \[\bigwedge lit(\f) \land \neg(\bigwedge lit(\f')) \not\equiv \bot.\]

    Due to the definition of $\qfilter$ and \cref{eqLitEqf} we can conclude that 
    for every \(\f \in \ftypes{\alcformula}\), \(\f \in \qfilter(\varphi, M)\) iff
    \(lit(\f) \neq lit(\f')\). As we can ignore inconsistent formulae in
    disjunctions, we get that \(\Con(\alcformula, M) \equiv \Con'(\alcformula,
    M)\).
\end{proof}

%\textcolor{red}{The following is only a guess at the moment}

\alctranslation*
\begin{proof}
    Let \(\alcformula\) be an \(\ALC\)-formula and \(\I\) an
    interpretation.

    \paragraph{\(\I \models \alcformula \Rightarrow \I \models \alcformula^\dagger\):}
    First, suppose that \(\I \models \alcformula\). From \cref{IphiQMphi} we know
    that \(\qmOf{\alcformula}{\I} = (T, o, \f)\) is a quasimodel of \(\alcformula\). Therefore, there
    is a disjunct \(\psi\) of \(\alcformula^\dagger\) which is the conjunction of
    all atomic formulae in \(\f\). By \cref{def:qmOf} \(\I \models \f\), thus we
    can conclude that 
    \(\I \models \alcformula^\dagger\).
    
    \paragraph{\(\I \models \alcformula^\dagger \Rightarrow \I \models \alcformula\):}
    Now, assume that \(\I \models \alcformula^\dagger\). This means that there is one
    disjunct \(\psi\) of \(\alcformula^\dagger\) such that \(\I \models \psi\). By
    construction, this disjunct is a conjunction of atomic formulae in the formula type
    of a quasimodel \((T, o, \f)\) for \(\alcformula\). Using \cref{litToForm} we
    can conclude that \(\I \models \f\). As \(\alcformula \in \f\) we get that
    \(\I \models \alcformula\). Hence, \(\I \models \alcformula\) iff
    \(\I \models \alcformula^\dagger\), i.e., \(\alcformula \equiv
    \alcformula^\dagger\).
\end{proof}

\Cref{eqLitEqf} is a direct consequence of the definition of 
a formula type.

\begin{corollary}
\label{eqLitEqf}
    Let \((T, o, \f)\) and \((T', o', \f')\) be  quasimodels for an
    \(\ALC\)-formula   \(\alcformula\). Then,  \(lit(\f) = lit(\f')\) iff
    \(\f = \f'\).
\end{corollary}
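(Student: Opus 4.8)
The forward implication is immediate, since $\f = \f'$ trivially gives $lit(\f) = lit(\f')$. All the content lies in the converse: assuming $lit(\f) = lit(\f')$, I would prove $\f = \f'$. As both $\f$ and $\f'$ are subsets of $\formulas{\alcformula}$, it suffices to show that for every subformula $\phi \in \formulas{\alcformula}$ we have $\phi \in \f$ iff $\phi \in \f'$. The key idea is that the two defining conditions of a formula type reduce membership of any compound formula to membership of its strict subformulae, so that a formula type is entirely determined by the literals it contains.

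Concretely, I would argue by induction on $degree(\phi)$. In the base case $\phi$ is atomic, hence $\phi$ (and $\neg\phi$) is a literal, and $\phi \in \f$ iff $\phi \in lit(\f) = lit(\f')$ iff $\phi \in \f'$. For the inductive step I split on the shape of $\phi$. If $\phi = \psi \wedge \chi$, condition (2) of the formula type definition gives $\phi \in \f$ iff $\{\psi,\chi\} \subseteq \f$, and likewise for $\f'$; since $\psi,\chi \in \formulas{\alcformula}$ and have strictly smaller degree, the induction hypothesis yields $\psi \in \f \Leftrightarrow \psi \in \f'$ and $\chi \in \f \Leftrightarrow \chi \in \f'$, whence $\phi \in \f \Leftrightarrow \phi \in \f'$. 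If $\phi = \neg\psi$, condition (1) gives $\phi \in \f$ iff $\psi \notin \f$ (and similarly for $\f'$); as $\formulas{\alcformula}$ is closed under single negation, $\psi \in \formulas{\alcformula}$ with $degree(\psi) < degree(\phi)$, so the induction hypothesis applies to $\psi$ and again transfers the equivalence to $\phi$.

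This is essentially routine, so I do not expect a genuine obstacle; the only points requiring care are bookkeeping ones. First, I must make sure every subformula invoked actually lies in the common domain $\formulas{\alcformula}$ over which both $\f$ and $\f'$ are defined, which is guaranteed by the subformula definition together with closure under single negation. Second, I should respect the convention that $\neg\neg\phi$ is identified with $\phi$, so that the negation case never produces a formula outside $\formulas{\alcformula}$ and the degree genuinely decreases. With these in hand the induction closes and gives $\f = \f'$, as required. Alternatively, since a quasimodel for $\alcformula$ forces $\alcformula \in \f$ and $\alcformula \in \f'$, one could instead run the recursion of \cref{lem:subtypes} on $\tau(\alcformula)$, but the direct induction above is the shorter route.
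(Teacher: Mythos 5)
Your proof is correct and matches the paper's intent: the paper offers no explicit argument, stating only that the corollary ``is a direct consequence of the definition of a formula type,'' and your induction on $degree(\phi)$ --- atoms fixed by $lit(\f)=lit(\f')$, conjunctions handled by condition (2), negations by condition (1) together with closure under single negation and the removal of double negations --- is precisely the routine verification the paper leaves implicit. Your appeal to the subformula-closure fact (\cref{obs:phi_sub_alpha} in the paper) to keep the induction inside $\formulas{\alcformula}$ is exactly the right bookkeeping, so there is nothing to add.
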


%\begin{definition}
%\label{litlang}
    Given    \(\ALC\)-formulae \(\alcformula,\psi\),   
    we say that \(\psi\) is in the language of the literals of
    \(\alcformula\),  written \(\psi \in \langlit{\alcformula}\), 
    if \(\psi\)
    is a boolean combination of the atoms in \(\alcformula\). % or their negations.
    
 \begin{lemma}
\label{eqCeqF}
    Let \(M, M'\) be  models and \(\alcformula\) an \(\ALC\)-formula.
    Also let \((T, o, \f) \coloneqq \qmOf{\alcformula}{M}\) and \((T', o', \f')
    \coloneqq \qmOf{\alcformula}{M'}\). Then, 
    \(\liteqc{M}{\alcformula} \ = \  \liteqc{M'}{\alcformula}\) iff 
    \(\f = \f'\).
\end{lemma}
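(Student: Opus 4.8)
The plan is to establish a chain of equivalences linking the semantic equivalence class on the left-hand side to the syntactic equality of formula types on the right. First I would unfold the definition on the left: $\liteqc{M}{\alcformula} = \liteqc{M'}{\alcformula}$ holds precisely when $M \equivlit M'$, that is, when $M$ and $M'$ satisfy the same formulae of $\langlit{\alcformula}$. Since every formula in $\langlit{\alcformula}$ is by definition a boolean combination of the atoms of $\alcformula$, its truth value under a model is completely determined by the truth values of those atoms. Hence $M \equivlit M'$ is equivalent to $M$ and $M'$ agreeing on each atom of $\alcformula$, i.e.\ the atomic formulae $C(a)$, $r(a,b)$, $(C=\top)$ occurring in $\alcformula$.

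Next I would connect this atomic agreement to the literal sets of the two quasimodels. The atomic subformulae collected in $\formulas{\alcformula}$ are exactly the atoms of $\alcformula$, and since $\formulas{\alcformula}$ is closed under single negation, the literals occurring in it are precisely these atoms together with their negations. By the definition of $\qmOf{\alcformula}{M} = (T,o,\f)$, the formula type $\f$ consists of those subformulae satisfied by $M$; restricting to literals, $lit(\f)$ therefore records, for each atom $\alpha$ of $\alcformula$, whether $M \models \alpha$ or $M \models \neg\alpha$, and likewise for $M'$ and $lit(\f')$. Consequently $M$ and $M'$ agree on all atoms of $\alcformula$ if and only if $lit(\f) = lit(\f')$.

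Finally I would invoke \cref{eqLitEqf}, which states that $lit(\f) = lit(\f')$ iff $\f = \f'$; composing the three equivalences then yields $\liteqc{M}{\alcformula} = \liteqc{M'}{\alcformula}$ iff $\f = \f'$. The main obstacle I anticipate is the careful alignment in the middle step: one must check that the literal building blocks of $\langlit{\alcformula}$ coincide exactly with the atomic members of $\formulas{\alcformula}$, so that no literal of $\f$ falls outside the atoms of $\alcformula$ and no atom of $\langlit{\alcformula}$ is unaccounted for by $lit(\f)$. A secondary subtlety is that $\qmOf{\alcformula}{M}$ is a genuine quasimodel only when $M \models \alcformula$ (otherwise $\neg\alcformula \in \f$, so it is merely a formula type, not a model candidate); since \cref{eqLitEqf} rests only on the formula-type closure conditions, its conclusion still applies in that case, but I would state this explicitly rather than apply the corollary verbatim.
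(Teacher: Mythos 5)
Your proof is correct and follows essentially the same route as the paper's: both translate equality of the equivalence classes into agreement of \(M\) and \(M'\) on the atoms of \(\alcformula\), read that agreement off as \(lit(\f) = lit(\f')\) via \cref{def:qmOf}, and close the chain with \cref{eqLitEqf}. Your final caveat---that \cref{eqLitEqf} must be justified for mere formula types when \(M \not\models \alcformula\), since \(\qmOf{\alcformula}{M}\) is then not a quasimodel, yet the corollary's proof uses only the formula-type closure conditions---is a legitimate refinement of a point the paper's terser proof silently glosses over.
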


\begin{proof}
    First, assume that \(\liteqc{M}{\alcformula} = \liteqc{M'}{\alcformula}\).
    Then we know that for every \(\alpha \in \langlit{\alcformula}\), \(M \models
    \alpha\) iff \(M' \models \alpha\). With \cref{eqLitEqf} we can conclude
    that \(\f = \f'\).

    Now, assume that \(\f = \f'\). \cref{eqLitEqf} implies that \(lit(\f)
    = lit(\f')\). In other words, for every atomic subformula \(\alpha \in
    \langlit{\alcformula}\) we have that \(\alcformula\), \(M \models \alpha\)
    iff \(M' \models \alpha\), that is,  \(\liteqc{M}{\alcformula}
    = \liteqc{M'}{\alcformula}\).
\end{proof}

\begin{lemma}
\label{conModelsSetDiff}
Let \(M\) be a model and \(\alcformula\) an \(\ALC\)-formula. Then, the following holds:
    \(\modelsof{\alcformula} \setminus \liteqc{M}{\alcformula}
    = \modelsof{\oCon(\alcformula, M)}\).
\end{lemma}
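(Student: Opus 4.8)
The plan is to split on the three branches of \cref{def:modelcontraction} and to exploit the fact, supplied by \cref{eqCeqF}, that two models lie in the same class $\liteqc{\cdot}{\alcformula}$ exactly when the formula types of their quasimodels $\qmOf{\varphi}{\cdot}$ coincide. The second recurring ingredient is that $\alcformula$ itself belongs to $\langlit{\alcformula}$ (it is a boolean combination of its own atoms), so $M \equivlit M'$ forces $M \models \alcformula$ iff $M' \models \alcformula$; in particular $\liteqc{M}{\alcformula}$ is either contained in, or disjoint from, $\modelsof{\alcformula}$.

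First I would dispatch the two easy branches. If $M \not\models \alcformula$, then $\oCon(\alcformula, M) = \alcformula$ and, by the observation above, every $M' \in \liteqc{M}{\alcformula}$ also fails $\alcformula$; hence $\liteqc{M}{\alcformula} \cap \modelsof{\alcformula} = \emptyset$ and $\modelsof{\alcformula} \setminus \liteqc{M}{\alcformula} = \modelsof{\alcformula} = \modelsof{\oCon(\alcformula, M)}$. If $M \models \alcformula$ but $\qfilter(\alcformula, M) = \emptyset$, then, since $\qmOf{\varphi}{M}$ is a quasimodel for $\alcformula$ by \cref{IphiQMphi} (so its formula type $\f$ lies in $\ftypes{\alcformula}$), we must have $\ftypes{\alcformula} = \{\f\}$. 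Any $M' \models \alcformula$ then has, again by \cref{IphiQMphi}, a quasimodel whose formula type belongs to $\ftypes{\alcformula} = \{\f\}$ and hence equals $\f$; by \cref{eqCeqF} this gives $M' \equivlit M$, so $\modelsof{\alcformula} \subseteq \liteqc{M}{\alcformula}$. Thus both sides are empty, matching $\oCon(\alcformula, M) = \bot$.

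The substantive case is $M \models \alcformula$ with $\qfilter(\alcformula, M) = \ftypes{\alcformula} \setminus \{\f\} \neq \emptyset$, where $\oCon(\alcformula, M) = \bigvee_{\f' \in \qfilter(\alcformula, M)} \bigwedge lit(\f')$; I would prove the two inclusions. For $\subseteq$: if $M' \models \bigwedge lit(\f')$ for some $\f' \in \qfilter(\alcformula, M)$, then $\f' \in \ftypes{\alcformula}$ gives $\f' \in \tau(\alcformula)$ (it is a formula type containing $\alcformula$), so \cref{litToForm} yields $\bigwedge lit(\f') \models \alcformula$ and hence $M' \models \alcformula$; moreover $M'$ agrees with $\f'$ on every literal, so the formula type $\f''$ of $\qmOf{\varphi}{M'}$ satisfies $lit(\f'') = lit(\f')$, whence $\f'' = \f'$ by \cref{eqLitEqf}, and then $\f'' \neq \f$ gives $M' \not\equivlit M$ by \cref{eqCeqF}. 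For $\supseteq$: if $M' \models \alcformula$ and $M' \notin \liteqc{M}{\alcformula}$, let $\f''$ be the formula type of the quasimodel $\qmOf{\varphi}{M'}$ (a quasimodel by \cref{IphiQMphi}); then $\f'' \in \ftypes{\alcformula}$ and $\f'' \neq \f$ by \cref{eqCeqF}, so $\f'' \in \qfilter(\alcformula, M)$, and since every formula of $\f''$ holds in $M'$ we get $M' \models \bigwedge lit(\f'')$, i.e. $M' \models \oCon(\alcformula, M)$.

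I expect the only delicate point to be the bookkeeping that ``$M' \models \bigwedge lit(\f')$ pins down the formula type of $M'$ to be exactly $\f'$'': this relies on the fact that a formula type assigns, for each atomic subformula $\alpha \in \formulas{\alcformula}$, exactly one of $\alpha, \neg\alpha$ to $lit(\f')$, so that satisfying $\bigwedge lit(\f')$ fixes $M'$ on all atoms of $\alcformula$, after which \cref{eqLitEqf} upgrades this agreement on literals to equality of formula types. Everything else is routine case analysis; the conceptual content is carried entirely by \cref{eqCeqF} (classes correspond to formula types) and \cref{litToForm} (each disjunct entails $\alcformula$).
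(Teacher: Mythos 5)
Your proof is correct and takes essentially the same route as the paper's: both inclusions are established through \cref{IphiQMphi}, \cref{eqCeqF}, \cref{eqLitEqf} and \cref{litToForm}, with the ``literals pin down the formula type'' bookkeeping you flag being exactly the step the paper leaves implicit. If anything, you are more careful than the paper's proof, which argues only the main branch of \cref{def:modelcontraction}, whereas you explicitly dispatch the degenerate branches where \(M \not\models \alcformula\) or \(\qfilter(\alcformula, M) = \emptyset\).
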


\begin{proof}
    Let \((T, o, \f) \coloneqq \qmOf{\alcformula}{M}\) and \((T', o', \f')
    \coloneqq \qmOf{\alcformula}{M'}\).
    First,  suppose that \(M' \in \mcondiff\). We know that \(M' \models
    \alcformula\) and by \cref{IphiQMphi} we get that \(\qmOf{\alcformula}{M'}\)
    is a quasimodel for \(\alcformula\).  We also know that \(M' \not\in \ 
    \liteqc{M}{\alcformula}\). Thus, from \cref{eqCeqF}, we obtain \(\f \neq
    \f'\).  Therefore, \(\f' \in \qfilter(\alcformula, M)\). Hence, \(M' \in
    \mconstd\) and so %we can conclude that %Ana:shortened to keep in the margins
     \(\mcondiff \subseteq \mconstd\).
    
    Now, let \(M' \in \mconstd\). This means that there is at least one \(\f{''} \in
    \qfilter(\alcformula, M)\) such that \(M' \models \bigwedge lit(\f'')\). But
    as consequence of the definition of formula type, this implies that \(M' \in
    \modelsof{\alcformula}\) and thus \((T', o', \f') \in \setqm{\alcformula}\). 
    We also know that \(M \not\in \ 
    \liteqc{M}{\alcformula}\), otherwise \(\f' = \f\) due to \cref{eqCeqF}.
    Therefore, \(M' \in \mcondiff\) and we can conclude that \(\mconstd
    \subseteq \mcondiff\).

    Finally, we obtain: \(\mconstd = \mcondiff\).
\end{proof}

%~ \begin{proof}
    %~ Follows directly from \cref{def:qmOf}. If the same literals of
    %~ \(\alcformula\) are included, the same general subformulae will be
    %~ satisfied. The other direction is trivial as \(lit(\f) \subseteq \f\).
%~ \end{proof}
%\begin{lemma}
%\label{phiPreservesClasses}
 %For every \(\ALC\)-formula $\varphi$,   \(M \in \modelsof{\alcformula}\) iff \(\liteqc{M}{\alcformula} \  \subseteq
    %\modelsof{\alcformula}\).
%\end{lemma}

%\begin{proof}
%\textcolor{black}{Assume \(M \in \modelsof{\alcformula}\) and
      %\(M'\in \  \liteqc{M}{\alcformula}\).
    %Also let \((T, o, \f) \coloneqq \qmOf{\alcformula}{M}\) and \((T', o', \f')
    %\coloneqq \qmOf{\alcformula}{M'}\).
    %As \(M' \in \ \liteqc{M}{\alcformula}\), it follows that
     %\(\liteqc{M}{\alcformula} \ = \  \liteqc{M'}{\alcformula}\).
    %Due to \cref{eqCeqF} we get that \(\f = \f'\).
    %Since \(\f = \f'\),  \(M\) and \(M'\) are indistinguishable 
    %for the  \(\ALC\)-formula $\varphi$ and 
  %\(M' \in \modelsof{\alcformula}\) (see \cref{def:qmOf}).   
  %Since $M'$ was an arbitrary model in $\modelsof{\alcformula}$ it follows that 
  %\(\liteqc{M}{\alcformula} \  \subseteq
    %\modelsof{\alcformula}\). The other direction is straightforward.}
    %%If \(M \in \modelsof{\alcformula}\), as \(\f = \f'\), \cref{def:qmOf}
    %%implies that \(M' \in \modelsof{\alcformula}\). 
    %%Otherwise, as
    %%\(\alcformula \not\in \f = \f'\) we obtain \(M' \not\in
    %%\modelsof{\alcformula}\). Hence, the \lcnamecref{phiPreservesClasses} holds.
%%The other direction is straightforward.
%\end{proof}

\contractioncharacterization*

\begin{proof}

%    If \(\modelm \not\models \alcformula\), then \(\qfilter(\alcformula, \modelm)
    %= \setqm{\alcformula}\) and \(\liteqc{\modelm}{\alcformula} \cap
    %\modelsof{\alcformula} = \emptyset\). If \(\modelm\models \alcformula\) and
    %\(\qfilter(\alcformula, \modelm) = \emptyset\), then \(\modelsof{\varphi}
    %= \liteqc{M}{\alcformula}\). Thus, these two cases are covered by
    %\cref{conModelsSetDiff}.

    % con \equiv con* -> postulates
    Assume that \(\gCon(\alcformula, \modelm) \equiv
    \oCon(\alcformula, \modelm)\).
    From \cref{conModelsSetDiff} we have that
    \(\modelsof{\gCon(\alcformula, \modelm)} = \modelsof{\alcformula} \setminus
    \liteqc{\modelm}{\alcformula}\), hence success and inclusion are immediately satisfied. 
    To prove atomic retainment, assume that \(\modelm'
    \not\in \modelsof{\gCon(\alcformula, \modelm)}\) and that there is
    a  set of models
    \(\mSet'\) with \(\modelm' \in \mSet'\), \(\modelsof{\gCon(\alcformula, \modelm)}
    \subset \mSet' 
    \subseteq \modelsof{\alcformula} \setminus \liteqc{\modelm}{\alcformula}\)
    and that is finitely representable in \(\ALC\)-formula.
    \Cref{conModelsSetDiff} implies that \(\modelsof{\gCon(\alcformula,
    \modelm)} = \modelsof{\alcformula} \setminus
    \liteqc{M}{\alcformula}\). Hence, \(\modelm' \in
    \liteqc{\modelm}{\alcformula}\), a contradiction as we assumed that
    \(\mSet' \subseteq \modelsof{\alcformula} \setminus
    \liteqc{\modelm}{\alcformula}\). Therefore, no such \(\mSet'\) could exist,
    and thus, \(\gCon\) satisfies atomic retainment.

    Let \(\modelm' \equivlit \modelm\). Since \(\modelsof{\gCon(\alcformula, \modelm)} = \modelsof{\alcformula} \setminus
    \liteqc{\modelm}{\alcformula}\) and \(\liteqc{\model'}{\alcformula}
    = \liteqc{\model}{\alcformula}\), we have that: 
    \(\modelsof{\alcformula} \setminus \liteqc{\modelm}{\alcformula}
    = \modelsof{\alcformula} \setminus \liteqc{\modelm'}{\alcformula}
    = \modelsof{\gCon(\alcformula, \modelm')}\). Hence,  atomic extensionality is also satisfied.

    % postulates -> con \equiv con*

    On the other hand, suppose that \(\gCon(\alcformula, M)\) satisfies
    the postulates stated. 
    Let \(M' \in \modelsof{\alcformula} \setminus
    \liteqc{M}{\alcformula}\) and assume that \(M' \not\in
    \modelsof{\gCon(\alcformula, M)}\). Due to atomic retainment, this means
    that there is no set \(\mSet'\) finitely representable in \(\ALC\)-formula such that 
    \(\modelsof{\gCon(\alcformula, \modelm)} \subset \mSet' \subseteq 
    \modelsof{\alcformula} \setminus \liteqc{\modelm}{\alcformula}\) 
    and \(\modelm' \in \mSet'\). But we know from \cref{conModelsSetDiff} that 
    \(\modelsof{\alcformula} \setminus \liteqc{\modelm}{\alcformula}\) is
    finitely representable in \(\ALC\)-formula and includes \(\modelm'\) by assumption, a contradiction.
    Thus, no such \(\modelm'\) could exist and 
    \(\modelsof{\alcformula} \setminus \liteqc{M}{\alcformula}
    \subseteq \modelsof{\gCon(\alcformula, M)} \).

    Now, let \(\modelm' \in \modelsof{\gCon(\alcformula, \modelm)}\).
    By inclusion \(\modelm' \in \modelsof{\alcformula}\) and by success
    \(\modelm' \neq \modelm\). We will show that
    \(\modelm' \not\in \liteqc{M}{\alcformula}\). By contradiction, suppose that
    \(\modelm' \in \ \liteqc{M}{\alcformula}\). 
    Due to atomic extensionality
    \(\modelsof{\gCon(\alcformula, M)} = \modelsof{\gCon(\alcformula, M')}\),
    but success implies that \(M' \not\in \modelsof{\gCon(\alcformula, M')}\).
    This contradicts our initial assumption that  \(M' \in
    \modelsof{\gCon(\alcformula, M)}\). Therefore \(M' \in \modelsof{\alcformula} \setminus
    \liteqc{M}{\alcformula}\) and we can conclude that 
    \(\modelsof{\gCon(\alcformula, M)} \subseteq \modelsof{\alcformula} \setminus \liteqc{M}{\alcformula}\).

    Hence, \cref{conModelsSetDiff} yields \(\oCon(\alcformula, M) \equiv
    \gCon(\alcformula, M)\).
\end{proof}

%%################################################
%%################### SEC 3.4 %%###################
%%###############################################

\begin{proposition}\label{prop:land_lit_f}
Let $\f$ be a formula type.  If $M \models \bigwedge lit(\f)$,  $\psi \in
\langlit{\f}$ and $M \models \psi$ then $\bigwedge lit(\f) \models \psi$.
\end{proposition}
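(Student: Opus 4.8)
The plan is to show that the single conjunction $\bigwedge lit(\f)$ already pins down the truth value of every atom relevant to $\psi$, so that \emph{any} model of $\bigwedge lit(\f)$ must behave exactly like $M$ on those atoms, and therefore on the boolean combination $\psi$. The whole argument is essentially propositional, treating the atomic \ALC-formulae $(C=\top)$, $C(a)$, $r(a,b)$ as propositional variables.

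First I would invoke the first condition in the definition of a formula type: for every atomic subformula $\alpha$ occurring in $\f$, exactly one of $\alpha$ and $\neg\alpha$ belongs to $\f$. Since both $\alpha$ and $\neg\alpha$ are literals, exactly one of them lies in $lit(\f)$. Hence $\bigwedge lit(\f)$ determines a definite truth value for each such atom: any model satisfying $\bigwedge lit(\f)$ makes $\alpha$ true precisely when $\alpha \in lit(\f)$ (and false when $\neg\alpha \in lit(\f)$).

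Next I would fix an arbitrary model $M'$ with $M' \models \bigwedge lit(\f)$ and compare it with $M$. Since $M \models \bigwedge lit(\f)$ by hypothesis as well, both $M$ and $M'$ satisfy exactly the literals in $lit(\f)$; by the previous paragraph this means that for every atom $\alpha$ in the language of the literals of $\f$ we have $M \models \alpha$ iff $\alpha \in lit(\f)$ iff $M' \models \alpha$. In particular $M$ and $M'$ agree on every atom appearing in $\psi$, because $\psi \in \langlit{\f}$ is by definition a boolean combination of atoms of $\f$. A routine structural induction on the boolean structure of $\psi$ then gives that $M' \models \psi$ iff $M \models \psi$; as $M \models \psi$ by assumption, we obtain $M' \models \psi$. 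Since $M'$ was an arbitrary model of $\bigwedge lit(\f)$, this establishes $\bigwedge lit(\f) \models \psi$.

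I expect the only delicate step to be the implicit claim that the atoms occurring in $\psi$ are genuinely ``covered'' by $lit(\f)$, i.e.\ that $\psi \in \langlit{\f}$ guarantees each of its atoms is one for which $\f$ contains a corresponding literal; this is exactly where the completeness condition of a formula type is used, and it is the crux of the argument. The supporting facts—that $\bigwedge lit(\f)$ fixes each atom's truth value and that boolean combinations are determined by the truth values of their atoms—are standard, so I would state them briefly rather than expand the induction in full.
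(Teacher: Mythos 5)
Your proof is correct, but it takes a genuinely different route from the paper's. The paper argues by induction on $degree(\psi)$, threading the hypothesis $M \models \psi$ through each case: the base case derives a contradiction from $\psi \not\in \f$ via condition (1) of formula types, and the inductive step handles $\psi = \varphi \land \varphi'$ and $\psi = \neg(\beta \land \beta')$ separately (the latter by splitting on whether $M \models \neg \beta$ or $M \models \neg \beta'$). You instead prove a stronger semantic fact: since condition (1) places exactly one of $\alpha, \neg\alpha$ in $lit(\f)$ for every relevant atom $\alpha$, the formula $\bigwedge lit(\f)$ is a complete atomic description, so any model $M'$ of it agrees with $M$ on all atoms and hence, by the routine boolean induction you rightly leave compressed, on every $\psi \in \langlit{\f}$. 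Both arguments rest on the same two ingredients---atomic completeness of formula types and an induction over boolean structure---but your two-model agreement formulation avoids the case analysis and buys more: it in effect establishes the inclusion $\modelsof{\bigwedge lit(\f)} \subseteq \liteqc{M}{\alcformula}$, which is the substantive half of \cref{lem:lit_equals_neg}, the lemma the paper later derives from this very proposition (the converse inclusion being immediate since $\bigwedge lit(\f)$ itself lies in the literal language and is satisfied by $M$). The step you flag as delicate---that every atom of $\psi$ is covered by $lit(\f)$---does go through: every atom occurring in a member of $\f$ belongs to $\formulas{\alcformula}$, which is closed under single negation, so condition (1) applies to it and forces one of the two literals into $\f$, hence into $lit(\f)$.
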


\begin{proof}
Let $\f$ be a formula type,  $M$ a model such that $M \models \bigwedge lit(\f)$, and $\psi$ and $\ALC$-formula such that $M \models \psi$.  
The proof is by induction on the degree of $\psi$. 

\begin{itemize}
	\item[] \textbf{Base:} $degree(\psi) = 1$.  Thus, from its definition,  $\psi$ has to be an atomic formula.  
	As $\f$ is a formula type, we have that $\varphi \in \f$ iff $\neg \varphi \not \in \f$.  Let us suppose for contradiction that $\psi \not \in \f$. Thus, $\neg \psi \in \f$.  This implies that $\bigwedge \f \models \neg \psi$. Thus,  as $M \models \bigwedge lit(\f)$, we have that $M \models \neg \psi$. This contradicts the hypothesis that $M \models \psi$.  Thus, we conclude that $\psi \in \f$.  Therefore, $\bigwedge lit(\f) \models \psi$.
	
	\textbf{Induction Hypothesis:} For every formula $\varphi$,  if $degree(\varphi) < degree(\psi)$ and   $M \models \varphi$ then $\bigwedge lif(\f) \models \varphi$.

 \textbf{Induction Step:} Let $degree(\psi) > 1$.  By construction, $\psi$ is of the form (1) $\varphi \land \varphi'$ or (2) $\neg \varphi$, for some $\ALC$-formulae $\varphi$ and $\varphi'$: 
 \begin{enumerate}
 	\item[] (1) $\psi = \varphi \land \varphi'$.   From definition, $degree(\varphi \land \varphi') = degree(varphi) + degree(\varphi')$. Recall from definition of $degree$ that $degree(\beta) > 1$, for every formula $\beta$. Therefore, 
 	$degree(\varphi) < degree(\varphi \land \psi')$ and $degree(\varphi') < degree(\varphi \land \varphi')$. This means that $degree(\varphi) < degree(\psi)$ and  $degree(\varphi') < degree(\psi)$.  From hypothesis,  $M \models \psi = \varphi\land \varphi'$.   Thus,  $M \models \varphi$ and $M \models \psi$. This implies from IH that 
$$ \bigwedge lit(\f) \models \varphi \mbox{ and } \bigwedge lit(\f) \models \varphi'  $$
Therefore,  $ \bigwedge lit(\f) \models \varphi \land \varphi' = \psi$.  

	\item[] (2) $\psi = \neg \varphi$. We have two cases,  either (i) $\varphi$ is an atomic formula or (ii) $\varphi = (\beta \land \beta')$.   For the first case,  analogous to the base case,  we get that $\bigwedge lit(\f) \models \psi$.  So we focus only on the second case.  From the definition of $degree$, we get that $degree(\neg \beta)< degree(\psi)$ and $degree(\neg beta') < degree(\psi)$.  As $M \models \psi = \neg(\beta \land \beta')$, we get that either (a) $M \models \neg \beta$ or (b) $M \models \neg \beta'$. 
		\begin{itemize}
			\item[] \textbf{(a)} $M \models \neg \beta$. From above,  $degree(\neg \beta) < degree(\psi)$. Thus,  from IH,  we get that $\bigwedge lit(\f) \models \neg \beta$.  Thus,  $\bigwedge lit(\f) \models \neg(\beta \land \beta') = \psi$.
			\item[] \textbf{(b)}: $M \models \neg \beta'$.  Analogous to case
                (a). 
		\end{itemize}	
	
 \end{enumerate}

\end{itemize}

\end{proof}

\begin{lemma}\label{lem:lit_equals_neg}
If $M \models \varphi$,  $\f \in \ftypes{\varphi}$ and $M \models \bigwedge
lit(\f)$ then $\modelsof{\bigwedge lit(\f)} = \liteqc{M}{\alcformula}$. 
\end{lemma}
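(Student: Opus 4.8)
The plan is to establish the set equality \(\modelsof{\bigwedge lit(\f)} = \liteqc{M}{\varphi}\) by proving the two inclusions separately, with \cref{prop:land_lit_f} carrying the weight of the nontrivial direction. The key preliminary observation I would record first is that, since \(\f\) is a formula type for \(\varphi\), it contains exactly one of \(\alpha\) and \(\neg\alpha\) for each atomic subformula \(\alpha\) of \(\varphi\); hence \(lit(\f)\) mentions precisely the atoms occurring in \(\varphi\), so the literal language \(\langlit{\f}\) coincides with \(\langlit{\varphi}\). This is what lets me feed an arbitrary \(\psi \in \langlit{\varphi}\) into \cref{prop:land_lit_f}.

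For the inclusion \(\liteqc{M}{\varphi} \subseteq \modelsof{\bigwedge lit(\f)}\), I would take any \(M' \equivlit M\). Every \(\ell \in lit(\f)\) is a literal over the atoms of \(\varphi\), hence \(\ell \in \langlit{\varphi}\); since \(M \models \bigwedge lit(\f)\) we have \(M \models \ell\), and \(M' \equivlit M\) yields \(M' \models \ell\). As \(\ell\) was arbitrary, \(M' \models \bigwedge lit(\f)\). This direction is routine and uses only the definition of \(\equivlit\).

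For the converse \(\modelsof{\bigwedge lit(\f)} \subseteq \liteqc{M}{\varphi}\), I would take \(M' \models \bigwedge lit(\f)\) and show \(M' \equivlit M\), i.e.\ that \(M\) and \(M'\) agree on every \(\psi \in \langlit{\varphi}\). Fixing such a \(\psi\), I would split on its polarity. If \(M \models \psi\), then \cref{prop:land_lit_f} (applicable because \(\psi \in \langlit{\f} = \langlit{\varphi}\) and \(M \models \bigwedge lit(\f)\)) gives \(\bigwedge lit(\f) \models \psi\), so \(M' \models \psi\). If instead \(M \not\models \psi\), then \(M \models \neg\psi\) with \(\neg\psi \in \langlit{\varphi}\), and the same proposition gives \(\bigwedge lit(\f) \models \neg\psi\), whence \(M' \not\models \psi\). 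Thus \(M \models \psi\) iff \(M' \models \psi\) for all \(\psi \in \langlit{\varphi}\), which is exactly \(M' \equivlit M\).

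The main obstacle is the bookkeeping around the literal language: I must justify that \(\langlit{\f}\), the language over which \cref{prop:land_lit_f} quantifies, equals \(\langlit{\varphi}\), so that an arbitrary Boolean combination \(\psi\) of \(\varphi\)'s atoms and, crucially, its negation \(\neg\psi\) both qualify as admissible inputs. Everything else is a direct application of that proposition in the two polarity cases. I would also note that the hypothesis \(M \models \varphi\) is in fact redundant: since \(\f \in \ftypes{\varphi}\) forces \(\varphi \in \f\), we have \(\f \in \tau(\varphi)\), so \cref{litToForm} gives \(\bigwedge lit(\f) \models \varphi\), and \(M \models \bigwedge lit(\f)\) already entails \(M \models \varphi\).
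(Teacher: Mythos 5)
Your proof is correct and takes essentially the same approach as the paper's: both argue the two inclusions separately, with the easy direction following from \(\bigwedge lit(\f)\) being a boolean combination of the atoms of \(\varphi\), and the nontrivial inclusion \(\modelsof{\bigwedge lit(\f)} \subseteq \liteqc{M}{\alcformula}\) resting on \cref{prop:land_lit_f} (the paper handles the biconditional by applying the proposition symmetrically to \(M'\), whereas you apply it to \(M\) with \(\neg\psi\) --- a cosmetic difference). Your explicit bookkeeping that \(\langlit{\f} = \langlit{\varphi}\) and your remark that the hypothesis \(M \models \varphi\) is redundant via \cref{litToForm} are both correct and merely make precise what the paper leaves implicit.
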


\begin{proof}
We need to show that $M' \in \modelsof{\bigwedge lit(\f)}$ iff $M' \in \liteqc{M}{\alcformula}$.  
\begin{itemize} 
	\item[]``$\Rightarrow$''.  $M' \in \modelsof{\bigwedge lit(\f)}$.  To show
        that $M' \in \liteqc{M}{\alcformula}$, it suffices to show that $M' \equivlit M$.  Let $\psi \in \langlit{\varphi}$, we need to show that $M \models \psi$ iff $M \models \psi$.
	\begin{itemize}
		\item[] (a) ``$\Rightarrow$''  $M \models \psi$.  From Proposition~\ref{prop:land_lit_f}, we have that 
		$\bigwedge lit(\f) \models \psi$.  
     This jointly with $M' \models \bigwedge lit(\f)$ implies that  $M' \models \psi$.  
		\item[] (b) ``$\Leftarrow$'' $M' \models \psi$.  
		Analogous to item (a).  
	\end{itemize}
	\item[] ``$\Leftarrow$'' $M' \in \liteqc{M}{\alcformula}$.  Note that $\bigwedge lit(\f) \in \langlit{\varphi}$.  Thus as $M' \equivlit M$,  and $M \models \bigwedge lit(\f)$, we get that $M' \models \bigwedge lit(\f)$.  
\end{itemize}
\end{proof}

\expMmodelUnion*
\begin{proof}
We have two cases: either (i) $M \models \varphi$ or (ii)$M \not \models \varphi$.  
\begin{enumerate}
	\item[] (i) $M \models \varphi$.  Then,  from definition of $\Ex$, we have that $\Ex(\varphi, M) = \varphi$ which implies that
	 $\modelsof{\Ex(\varphi,M)} = \modelsof{\varphi}$.  As $M \models \varphi$,
     we get that $\liteqc{M}{\alcformula} \subseteq \modelsof{\varphi}$.  Therefore,
     $\modelsof{\varphi} \cup \liteqc{M}{\alcformula} = \modelsof{\varphi}$. This
     implies that $$\modelsof{\Ex(\varphi,M)} = \modelsof{\varphi} \cup
     \liteqc{M}{\alcformula}. $$
	\item[] (ii)$M \not \models \varphi$.  Thus, from definition of $\Ex$, we get that 
	$$\Ex(\varphi,M) = \varphi \lor \bigwedge lit(\f),  \mbox{ where } qm(\neg \varphi, M) = (T,o,\f).$$  
	This implies that 
	$ \modelsof{\Ex(\varphi,M)} =   \modelsof{\varphi \lor \bigwedge lit(\f)}$.  
	Note that $$ \modelsof{\varphi \lor \bigwedge lit(\f)} = \modelsof{\varphi}
    \cup \modelsof{\bigwedge lit(\f)}.  $$
	
	As $qm(\neg \varphi, M) = (T,o,\f)$,  it follows from the definition of $qm$ that $\f \in \ftypes{\neg \varphi}$ and 
	 %We have that $\f \in \tau(\neg \varphi)$ which implies 	 from Theorem~\ref{} that 
	$M \models \bigwedge lit(\f)$.   In summary,  $M \models \neg \varphi$, $\f \in \ftypes{\neg \varphi}$ and $M \models \bigwedge lit(\f)$.  Thus,  from Lemma~\ref{lem:lit_equals_neg},  we have that 
	$$\modelsof{\bigwedge lit(\f)} = \liteqc{M}{\alcformula}. $$
	Therefore,    
	$$ \modelsof{\Ex(\varphi,M)} = \modelsof{\varphi} { \cup } \liteqc{M}{\alcformula}.  $$
	
\end{enumerate}
 \end{proof}

\expModelConverse*
\begin{proof}
We consider each case separately.  
\begin{enumerate}
	\item[] (i)   $M \models \varphi$.  
		Thus,  $\liteqc{M}{\alcformula}  {\subseteq } \ \modelsof{\varphi}$ which
        implies that $$\modelsof{\varphi} \cup \liteqc{M}{\alcformula}
        = \modelsof{\varphi}.$$ Therefore,  $\Ex^*(\varphi,M) \equiv \varphi$.  
	\item[] (ii) $M \not \models\varphi$.   Let $qm(\neg \varphi, M)
        = (T,o,\f)$.  Note that  $\f \in \ftypes{\varphi}$, and from definition
        of $qm$ that $M \models \bigwedge lit(\f)$. Thus,  it follows from
        Lemma~\ref{lem:lit_equals_neg} that $\modelsof{\bigwedge lit(\f)} = \liteqc{M}{\alcformula}$.  
	Thus,  $\modelsof{\varphi \lor \bigwedge qm(\neg \varphi,M)} = \modelsof{\varphi} \cup \liteqc{M}{\alcformula}$. This means that $\Ex^*(\varphi,M) \equiv \varphi \lor \bigwedge qm(\neg \varphi,M)$.  
\end{enumerate}

\end{proof}

\expansioncharacterization*

\begin{proof}
    % ex* \equiv ex -> postulates
    First, assume that \(\gExp(\alcformula, \modelm) \equiv
    \oExp(\alcformula, \modelm)\).
    From \cref{exp_model_union} we have that
    \(\modelsof{\gExp(\alcformula, \modelm)} = \modelsof{\alcformula} \cup
    \liteqc{\modelm}{\alcformula}\), hence success and persistence are immediately satisfied. 
    To prove atomic temperance, assume that \(\modelm'
    \in \modelsof{\gExp(\alcformula, \modelm)}\) and that there is a set of models
    \(\mSet'\) with \(\modelm'\) that is finitely representable in
    \(\ALC\)-formula and such that
    \(\modelsof{\alcformula} \cup \liteqc{\modelm}{\alcformula}
    \subseteq \mSet' 
    \subset \modelsof{\gExp(\alcformula, \modelm)}\).
    \Cref{exp_model_union} implies that \(\modelsof{\gExp(\alcformula,
    \modelm)} = \modelsof{\alcformula} \cup
    \liteqc{M}{\alcformula}\). Hence, \(\modelm' \not\in
    \liteqc{\modelm}{\alcformula}\), a contradiction as we assumed that
    \(\mSet' \supseteq \modelsof{\alcformula} \cup
    \liteqc{\modelm}{\alcformula}\). Therefore, no such \(\mSet'\) could exist,
    and thus, \(\gExp\) satisfies atomic temperance.

    Let \(\modelm' \equivlit \modelm\). Since \(\modelsof{\gExp(\alcformula,
    \modelm)} = \modelsof{\alcformula} \cup
    \liteqc{\modelm}{\alcformula}\) and \(\liteqc{\model'}{\alcformula}
    = \liteqc{\model}{\alcformula}\), we have that: 
    \(\modelsof{\alcformula} \cup \liteqc{\modelm}{\alcformula}
    = \modelsof{\alcformula} \cup \liteqc{\modelm'}{\alcformula}
    = \modelsof{\gExp(\alcformula, \modelm')}\). Hence, atomic extensionality is also satisfied.

%    % postulates -> ex* \equiv ex 
    On the other hand, suppose that \(\gExp(\alcformula, M)\) satisfies
    the postulates stated. 
    Let \(\modelm' \in \modelsof{\alcformula} \cup
    \liteqc{\modelm}{\alcformula}\). If \(\modelm' \in \modelsof{\alcformula}\)
    then success ensures that 
    \(\modelm' \in \modelsof{\gExp(\alcformula, \modelm)}\). Otherwise, we have
    \(\modelm' \equivlit \modelm\), and as consequence of success and atomic
    extensionality we also obtain \(\modelm' \in \modelsof{\gExp(\alcformula,
    \modelm)}\). Therefore, \(\modelsof{\alcformula} \cup
    \liteqc{\modelm}{\alcformula} \subseteq \modelsof{\gExp(\alcformula,
    \modelm)}\).

    Now, let \(\modelm' \in \modelsof{\gExp(\alcformula, \modelm)}\) and assume
    that \(\modelm' \not\in \modelsof{\alcformula} \cup
    \liteqc{\modelm}{\alcformula}\). Success, persistence and atomic
    extensionality imply that \(\modelsof{\gExp(\alcformula, \modelm)}\). Atomic
    temperance states that there is no set of models \(\mSet'\) that is finitely
    representable in \(\ALC\)-formula with 
    \(\modelsof{\alcformula} \cup \liteqc{\modelm}{\alcformula} \subseteq \mSet' \subset
    \modelsof{\gExp(\alcformula, \modelm)} \cup \{\modelm\}\). But we know from
    \cref{exp_model_union} that \(\modelsof{\alcformula} \cup
    \liteqc{\modelm}{\alcformula}\) is finitely representable in
    \(\ALC\)-formula and does not include
    \(\modelm'\) by assumption, a contradiction. Thus, no such \(\modelm'\)
    could exist and \(\modelsof{\gExp(\alcformula,
    \modelm)} \subseteq \modelsof{\alcformula} \cup \liteqc{\modelm}{\alcformula}\).

    Hence, \cref{exp_model_union} yields \(\gExp(\alcformula, \modelm) \equiv
    \oExp(\alcformula, \modelm)\).
\end{proof}

\end{document}